\def\@cite#1#2{[\textbf{#1\if@tempswa , #2\fi}]}
\def\@biblabel#1{[\textbf{#1}]}
\newcommand{\cut}[1]{}   
\newcommand{\mytag}[2]{%
	\text{#1}%
	\@bsphack
	\protected@write\@auxout{}%
	{\string\newlabel{#2}{{#1}{\thepage}}}%
	\@esphack
}
\newcommand{\ie}{{\em i.e.}\xspace}
\newcommand{\eg}{{\em e.g.}\xspace}
\newtheorem{definition}{Definition}
\newtheorem{proposition}{Proposition}
\newtheorem{corollary}{Corollary}
\newtheorem{conjecture}{Conjecture}
\newtheorem{lemma}{Lemma}
\newtheorem{example}{Example}
\newtheorem{remark}{Remark}
\newtheorem{theorem}{Theorem}
\newtheorem{observation}{Observation}
\newcommand{\allattr}{{\mathbb A}}
\newcommand{\attrset}{{\mathbb B}}
\newcommand{\allrel}{{\mathbb R}}
\newcommand{\attr}{{\texttt{attr}}}
\newcommand{\head}{{\texttt{head}}}
\newcommand{\rel}{{\texttt{rels}}}
\newcommand{\dom}{{\texttt{dom}}}
\newcommand{\ourprob}{{\texttt{ADP}}}
\newcommand{\svcb}{{\texttt{SVCB}}}
\newcommand{\clique}{{\texttt{CLIQUE}}}
\newcommand{\regular}{{\texttt{REGULAR-CLIQUE}}}
\newcommand{\kmc}{{\texttt{KMC}}}
\newcommand{\true}{{\texttt{true}}}
\newcommand{\false}{{\texttt{false}}}
\newcommand{\isptime}{{\textsc{IsPtime}}}
\newcommand{\optcost}{{\textsc{Opt}}}
\newcommand{\computeADP}{{\textsc{ComputeADP}}}
\newcommand{\decompose}{{\textsc{Decompose}}}
\newcommand{\greedy}{{\textsc{Greedy}}}
\newcommand{\drastic}{\textsc{Drastic}}
\newcommand{\bool}{{\textsc{Boolean}}}
\newcommand{\singleton}{{\textsc{Singleton}}}
\newcommand{\universe}{{\textsc{Universe}}}
\definecolor{darkgreen}{rgb}{0,0.5,0}
\definecolor{darkpurple}{rgb}{0.5,0,0.5}
\newcommand{\revm}[1]{{\color{black} #1}}
\newcommand{\reva}[1]{{\color{black}  #1}}
\newcommand{\revb}[1]{{\color{black}   #1}}
\newcommand{\revc}[1]{{\color{black}    #1}}
\definecolor{purple}{rgb}{0.65,0.05,0.3}
\definecolor{red}{rgb}{1,0,0}
\definecolor{RED}{rgb}{1,0,0}
\DeclareMathAlphabet{\mathbbold}{U}{bbold}{m}{n}
\theoremstyle{definition}
\newtheorem{definition}{Definition}[section]
\theoremstyle{example}
\newtheorem{example}{Example}[section]
\newtheorem{theorem}{Theorem}[section]          	
\newaliascnt{lemma}{theorem}				
\newtheorem{lemma}[lemma]{Lemma}              	
\newaliascnt{conjecture}{theorem}			
\newaliascnt{remark}{theorem}				
\newaliascnt{corollary}{theorem}			
\newaliascnt{definition}{theorem}			
\newtheorem{definition}[definition]{Definition}    
\newaliascnt{proposition}{theorem}			
\newaliascnt{example}{theorem}			
\newtheorem{example}[example]{Example}  	
\newaliascnt{observation}{theorem}			
\renewcommand{\eg}{{e.g.}}
\renewcommand{\ie}{{i.e.}}
\newcommand{\J}{\mathbb{J}}
\newcommand{\I}{\mathbb{I}}
\newcommand{\pathquery}{Q_{\textup{path}}}
\newcommand{\swingquery}{Q_{\textup{swing}}}
\newcommand{\seesawquery}{Q_{\textup{seesaw}}}
\newcommand{\headQ}{Q_{\textup{head}}}
\newcommand{\U}{\mathcal{U}}
\renewcommand{\S}{\mathcal{S}}
\renewcommand{\paragraph}[1]{\medskip \noindent {\bf #1}}
\newcommand{\eat}[1]{}
\newcommand{\bm}{\mathbf}
\begin{document}

\title{Aggregated Deletion Propagation for Counting Conjunctive Query Answers}

\author{
    Xiao Hu,  Shouzhuo Sun, Shweta Patwa, Debmalya Panigrahi, and Sudeepa Roy   \\
    Duke University, Durham, NC, USA\\
    \{xh102, ss1060, sjpatwa, debmalya, sudeea\}@cs.duke.edu \\
   }

\date{}

\maketitle

 \begin{abstract}
We investigate the computational complexity of minimizing the source side-effect in order to remove a given number of tuples from the output of a conjunctive query. This is a variant of the well-studied {\em deletion propagation} problem, the difference being that we are interested in  
\revc{removing the smallest subset of input tuples to remove a given number of output tuples}
while deletion propagation focuses on removing a specific output tuple. We call this the {\em Aggregated Deletion Propagation} problem. We completely characterize the poly-time solvability of this problem for arbitrary conjunctive queries without self-joins. This includes a poly-time algorithm to decide solvability, as well as an exact structural characterization of NP-hard instances. We also provide a practical algorithm for this problem (a heuristic for NP-hard instances) and evaluate its experimental performance on real and synthetic datasets.
\end{abstract}


\section{Introduction}
\label{sec:introduction}
The problem of \emph{view update} (\eg, \cite{Bancilhon+1981, Dayal+1982}) -- how to change the input to achieve  desired changes to the query output or \emph{view} -- is a well-studied problem in the database literature. 
View update problems enable users to tune the output in order to meet their prior expectation, satisfy external constraints, or examine and compare multiple options.
A particularly well-studied class of view update problems is what is known as \emph{deletion propagation} problems (see Buneman, Khanna, and  Tan \cite{Buneman+2002}; for follow up literature, see related work). In these problems, the goal is to remove a specific tuple from the output of a query by removing input tuples. In this paper, we study a natural variant of this problem where we seek to remove {\em at least a given number of output tuples} rather than any specific output tuple. We call this the \emph{Aggregated Deletion Propagation} problem.

\par
Formally, in the \emph{Aggregated Deletion Propagation} (\ourprob), we are given a query $Q$, a database $D$, 
and a target integer $k$. The goal is to remove at least $k$ tuples from $Q(D)$ by removing the {\em minimum number of input tuples} from $D$ (this objective is called {\em source side-effect} in the literature). 
Our main motivation for the \ourprob\ problem comes from two generic application settings.
First,  \ourprob\ can be used to obtain a desired change in the {\em output size} with minimum intervention on the input. As we will describe below, in many practical situations, the goal is to create a 
\revc{sufficiently large impact on the output by removing a given number of output tuples} rather than removing any specific tuple. Our problem applies to these situations. Second, \ourprob\ can be used to analyze the \emph{robustness} of the output with respect to possible disruptions in  the input. In other words, if there are inadvertent changes to the input that are not within our control, how badly can it effect the output of a query? We give examples of these two applications below.


\begin{example}
     Suppose a university wants to plan ahead in terms of managing waitlists for its classes. This can \revm{be} achieved via the following query:
     \begin{equation*}
     Q_{WL}(S, C) :- Major(S, M), Req(M, C), NoSeat(C)
     \end{equation*}
     The first query $Q_{WL}$ says that a student S is on the waitlist for a class C if the following happen: (1) S intends to major in M (we assume students can have multiple majors), (2) major M requires class C, and (3) there are no seats available in C.  The university may try to figure out the easiest alternative for reducing the size of the waitlist to some target, which amounts to reducing the size of the output of query $Q_{WL}$ by the same amount. The waitlist entries can be removed by steering students away from the major (or creating an entry condition), relaxing the requirements for the major, or by increasing the number of seats in the class; all of these options correspond to removing tuples from the input relations of $Q_{WL}$. 
\end{example}  

\begin{example}
     We consider the same context as in the previous example, but suppose the new task is to 
     estimate what classes can be reliably offered in a future semester. This can be
     done using the following query
     \begin{equation*}
     Q_{Possible}(C) :- Teaches(P, C), NotOnLeave(P).
     \end{equation*}
     This query lists the possible courses that can be offered in a semester. A course C can be offered if there is a professor P who is able to teach C and is not on leave. If all professors who are able to teach C go to leave (removal of entries from $NotOnLeave$) or do not want to teach C (removal of entries from $Teaches$), C cannot be offered.  While approving the leave requests and asking for teaching preferences, the university may want to study the robustness of $Q_{Possible}$ with respect to these changes: e.g., what is the minimum changes in the input that would lead to more than 10\% of the courses not being able to be offered in that semester. If this size is small, i.e., many courses are critically dependent on a few professors, the university would be able to decide whether all can be on leave or change teaching preferences appropriately. Alternatively, this information might also inform the decision to hire faculty in a particular area.
\end{example}

\begin{example}
We now turn to a third example from the area of robustness of networks. Consider a query
{
$$Q_{3-path}(A, B, C, D) :- R_1(A, B), R_2(B, C), R_3(C, D)$$
}
that stores all possible paths between two end vertices that go through two layers of intermediate vertices in a communication or transportation network. If it were possible to disrupt (say) 80\% of the paths by only removing (say) 1\% of links, then the network is clearly not robust. On the other hand, if this would require removing (say) 80\% of the links, that's a much more robust network. This is precisely the information the \ourprob\ can provide us on this query. Therefore, \ourprob\ can estimate the inherent robustness of a network to either malicious attacks or even just random failures.
\end{example}

\par

\eat{One motivation for the \ourprob\ problem comes from the recent study of \emph{explaining aggregate query answers and outliers by intervention}  \cite{WM13, RoyS14, RoyOS15}. Here, given an aggregate query $Q$, possibly with group-by operations, the user studies the outputs and may ask questions like \emph{`why a value $q_1$ is high}', or, \emph{`why a value $q_1$ is higher or lower than another value $q_2$'}. A possible explanation is a set of input tuples, compactly expressed using predicates, such that by removing these subsets we can change the selected values in the opposite direction, \eg, if the user thinks $q_1(D)$ is high, then a good explanation with high score capturing a subset $S$ of input tuples will make $q_1(D\setminus S)$ as low as possible. One example given in \cite{RoyS14} was on the DBLP publication data, where it is observed that there was a peak in SIGMOD papers coauthored by researchers in industry around year 2000 (and then it gradually declined), which is explained by some top industrial research labs that had hiring slow-down or shut down later (\ie, if the papers by these labs did not exist in the database, the peak will be lower). 
Although this line of work studies more general SQL aggregate queries with group-by and aggregates, it aims to change the output by deleting the input tuples (if the question is on a single output, it can only reduce for monotone queries). In this work, we study the complexity of the reverse direction of this problem in a simpler setting, where we only consider counts and conjunctive queries, and aim to find the minimum number of input tuples that would reduce the output by a desired amount.
\par
The \ourprob\ problem is also related to the \emph{partial vertex cover} \cite{CaskurluMPS17} or \emph{partial set cover} \cite{GKS04} problems, that are generalizations of the classical vertex cover or set cover problems, where instead of covering all edges or all elements, the goal is to select a minimum cost set of vertices/sets so that at least $k$ edges/elements are covered. These partial coverage problems are useful when the goal is to cover a certain fraction of the elements or edges, \eg, to build facilities to provide service to a certain fraction of the population \cite{GKS04}. 
}

\noindent
\textbf{Our contributions.~} In this paper, we propose the \ourprob\ problem and study its complexity in depth for the class of {\em conjunctive queries without self-joins} (CQ). Here, the results can be an arbitrary projection of the \emph{natural join} of the relations appearing on the body of the query (as illustrated in $Q_{WL}$, $Q_{Possible}$, and $Q_{3-path}$ above).
\eat{\footnote{Self-join-free queries have been studied in most of the related papers on deletion propagation. The only work that we are aware of involves self-joins is~\cite{freire2019new}, which focuses on boolean CQ with binary relations, a very restricted sub-class of CQs. Moreover, their work only includes a dichotomy result if a relation is repeated at most 2 times. 
The complexity of \ourprob\ for CQ with self-joins is an interesting future work.}.
}
Our contributions can be summarized as follows:
\begin{itemize}
	\item \textbf{Algorithmic Dichotomy:} We give an algorithm that only takes the query $Q$ as input, and decides in time that is polynomial in the size of the query, whether \ourprob\ can be efficiently solved (in polynomial time data complexity \cite{vardi1982complexity}) on $Q$ for all instances $D$ and all values of $k$. The algorithm uses a few simplification steps that preserve the complexity of the problem. At the end, the query is NP-hard if the simplification steps reduce it to a small number of `core' hard queries; otherwise, it is poly-time solvable.  \textbf{ (Section~\ref{SEC:DICHOTOMY})}
	\item \textbf{Structural Dichotomy:} To complement our algorithmic characterization of the complexity of the \ourprob\ problem, we also provide a structural characterization of the complexity 
	by identifying three simple structures -- {\em triad-like}, {\em non-hierarchical head join}, and {\em strand} -- whose presence exactly captures all queries where  \ourprob\ in NP-hard. 
	\textbf{(Section~\ref{sec:structure})}
	\item \textbf{Approximation:} We study the approximation for the \ourprob\ problem when it is NP-hard. We show that greedy and prime-dual achieve approximation factors of $O(\log k)$ and $p$ respectively for full CQs, where $p$ is the number of relations in the input query. Meanwhile, we present some inapproximability result when projection exists, such that obtaining even sub-polynomial approximations for the \ourprob\ problem on general CQs is unlikely. 
    \textbf{(Section~\ref{SEC:APPROX})}
    \item  \textbf{Efficient unified algorithm:} We give a poly-time (in data complexity) algorithm for solving \ourprob\ for all CQs without self-joins. It returns the optimal solution for queries on which \ourprob\ is poly-time solvable, and provides a poly-time heuristic for queries on which \ourprob\ is NP-hard. We also extend the algorithm to support {\em selection} operations.
    \textbf{(Section~\ref{SEC:ALGORITHMS})}
    \item  
\textbf{Experimental evaluations:} We provide experimental evaluation of our algorithms on synthetic and real datasets in terms of efficiency, quality, scalability, various classes of queries as well as data distribution. 
\textbf{(Section~\ref{SEC:EXPERIMENT})}
\end{itemize}
%
%

\section{Related Work}\label{sec:related}
The classical view update problem, of which deletion propagation is an instantiation, has been studied extensively over the last four decades (\eg, \cite{Bancilhon+1981, Dayal+1982}). The deletion propagation problem has been popular more recently, starting with the seminal work by Buneman, Khanna, and Tan \cite{Buneman+2002}. They studied the complexity of both the \emph{source side-effect} (objective is to delete the \emph{minimum number of input tuples}) and the \emph{view side-effect} (objective is to delete the \emph{minimum number of other output tuples}) versions, \revc{in order to delete a particular output tuple}. For source side-effect and select-project-join-union (SPJU) operators, they showed that for PJ or JU queries, finding the optimal solution is NP-hard, while for others (\eg, SPU or SJ) it is poly-time solvable. This work was extended to multi-tuple deletion propagation by Cong, Fan, and Geerts \cite{Cong+2006}. They showed that for single tuple deletion propagation, a property called \emph{key preservation} makes the problem tractable for SPJ views; however, if multiple tuples are to be deleted, the problem becomes intractable for SJ, PJ, and SPJ views. Kimelfeld, Vondrak, and Williams~\cite{KimelfeldVW11, Kimelfeld12, KimelfeldVW13} extensively studied the complexity of deletion propagation for the view side-effect version and provided structural dichotomy and trichotomy (poly-time, APX-hard/constant approximation, and inapproximable) for single and multiple output tuple deletions. 
\par
Beyond the context of deletion propagation, several dichotomy results have been obtained for problems motivated by data management, \eg, in the context of probabilistic databases \cite{DalviS12},  responsibility \cite{MeliouGMS11}, or database repair \cite{LivshitsKR18}. Another problem related  to \ourprob\ is \emph{reverse data management} and \emph{how-to} queries \cite{MeliouGS11, MeliouS12}. Given some desired changes in the output (\eg, modifying aggregate values, creating or removing tuples), 
the goal is to obtain a feasible modification of the input that satisfies a given set of constraints and optimizes on some criteria. In this line of research, the focus has been on developing an end-to-end system using provenance and mixed integer programming, and not on the complexity of the problem. 
\ourprob\ is also related to explanations by intervention \cite{WM13, RoyS14, RoyOS15}, where the goal is to find a set of input tuples captured by a predicate whose deletion changes one or more aggregate answers to the maximum extent. \ourprob\ differs in that the aim is to make a desired change in the output by removing the minimum number of input tuples.

Finally, closely related to the \ourprob\ is the \emph{resilience} problem, originally studied by Freire et al.  for the class of CQs without self-joins and  functional dependencies \cite{FreireGIM15} (see also \cite{freire2019new} for an extension to a class of queries with self-joins). 
 The input to the resilience problem is a Boolean CQ and a database $D$ such that $Q(D)$ is true, and the goal is to remove a minimum set of tuples from $D$ to make $Q$ false on $D$. Observe that the resilience problem is identical to \ourprob\ with $k = |Q(D)|$. 
 \revm{\cite{FreireGIM15}} gave a ``structural dichotomy'' characterizing whether a given query is poly-time solvable or NP-hard using a core hard structure called ``triad''. The generalization to arbitrary values of $k$ leads to interesting consequences, e.g., queries that are poly-time solvable for resilience become hard for \ourprob), whereas the presence of arbitrary projections in the output makes \ourprob\ even more NP-hard for \ourprob. Nevertheless, we use the characterization for resilience from \revm{\cite{FreireGIM15}} as a special case of our algorithmic and structural characterization for \ourprob\ and discuss the resilience problem further in subsequent sections.
\section{Preliminaries}
\label{sec:prelim}

In this section, we start with some basic definitions in relational databases. Then, we formally define the \ourprob\ problem and discuss some special cases that will motivate our general technique.
 
\subsection{Background} \label{sec:background}
We consider the standard setting of multi-relational data-bases and conjunctive queries. Let $\allrel$ be a database schema that contains $p$ tables $R_1, \cdots, R_p$.
Let $\allattr$ be the set of all attributes in the database $\allrel$. Each relation $R_i$ is defined on a subset of attributes $\attr(R_i) \subseteq \allattr$.  A relation $R_i$ is {\em vacuum} if $\attr(R_i) = \emptyset$, and {\em non-vacuum} otherwise. We use $A, B, C, A_1, A_2 , \cdots$ etc.
to denote the attributes in $\allattr$ and $a, b, c, \cdots$ etc. to denote their values. For each attribute $A \in \allattr$, 
$\rel(A)$ denotes the set of relations 
that $A$ appears, \ie, $\rel(A) = \{R_i: A \in \attr(R_i)\}$. 

\begin{figure}
	\centering
\begin{tabular}{|c|c|}
\multicolumn{2}{c}{$R_1$}\\
\hline
A & B \\\hline\hline
a1 & b1 \\
a2 & b2 \\
a3 & b3 \\\hline
\end{tabular}
\begin{tabular}{|c|c|}
\multicolumn{2}{c}{$R_2$}\\
\hline
B & C \\\hline\hline
b1 & c1 \\
b2 & c2 \\
b2 & c3 \\
b3 & c3 \\
\hline
\end{tabular}
\begin{tabular}{|c|c|}
\multicolumn{2}{c}{$R_3$}\\
\hline
C & E \\\hline\hline
c1 & e1 \\
c2 & e3 \\
c3 & e3 \\
\hline
\end{tabular}
\begin{tabular}{|c|c|c|c|}
\multicolumn{4}{c}{$Q_1(D)$}\\
\hline
A & B & C & E\\\hline\hline
a1 & b1 & c1  & e1\\
a2 & b2 & c2  & e3\\
a2 & b2 & c3  & e3\\
a3 & b3 & c3  & e3\\
\hline
\end{tabular}
\begin{tabular}{|c|c|}
	\multicolumn{2}{c}{$Q_2(D)$}\\
	\hline
	A  & E\\\hline\hline
	a1  & e1\\
	a2  & e3\\
	a3  & e3\\
	\hline
\end{tabular}
\caption{An example of database schema $\allrel = \{R_1, R_2, R_3\}$ with $\allattr$ $= \{A, B, C, E\}$, $\attr(R_1) =$ $\{A, B\}$, $\attr(R_2)= $ $\{B, C\}$, and $\attr(R_3) =$ $\{C, E\}$. An instance $D$ with $10$ tuples is also shown. The results for $Q_1(A, B, C, E):$ $-R_1(A, B)$, $ R_2(B, C)$, $R_3(C, E)$ and $Q_2(A,E):$ $-R_1(A, B)$, $R_2(B, C)$, $R_3(C, E)$ are $Q_1(D)$ and $Q_2(D)$.}
\label{fig:example_setup}
\end{figure}

\par
Given the database schema $\allrel$, let $D$ be a given instance of $\allrel$, and the corresponding instances of $R_1, \cdots, R_p$ be $R_1^{D}, \cdots$, $R_p^{D}$. Where $D$ is clear from the context, we will drop the superscript and use $R_1, \cdots, R_p$ for both the schema and instances. Any tuple $t \in R_i$ is defined on $\attr(R_i)$. For any attribute $A \in \attr(R_i)$, $ \pi_{A} t \in  \dom(A)$ denotes the value of attribute $A$ in tuple $t$.	
Similarly, for a set of attributes $\attrset \subseteq \attr(R_i)$, $\pi_\attrset t$ denotes the values of attributes in $\attrset$ for $t$ with an implicit ordering on the attributes. It should be noted that for a vacuum relation $R_i$, either $R_i = \{\emptyset\}$ or $R_i = \emptyset$ (respectively interpreted as ``true'' and ``false'').

We consider the class of \emph{conjunctive queries without self-joins},  formally defined as
\[Q(\bm{A}): -R_1(\mathbb{A}_1), R_2(\mathbb{A}_2), \cdots, R_p(\mathbb{A}_p)\]
where $\bm{A} \subseteq \mathbb{A}$ denotes the {\em output attributes} and $\mathbb{A} - \bm{A}$ the {\em non-output attributes} (\revc{also called the \emph{existential variables}}). Note that we do not have any projection in the body. Each $R_i$ in $Q$ is distinct, i.e., the CQ does not have a self-join. If $\bm{A} = \mathbb{A}$, such a CQ query is known as {\em full CQ} which represents the natural join among the given relations. If $\bm{A} = \emptyset$, such a CQ  is {\em boolean} which indicates whether the result of natural join among the given relations is empty or not; otherwise, it is {\em non-boolean}. 

\par
Extending the notation, we use $\rel(Q)$ to denote all the relations that appear in the body of $Q$, $\attr(Q)$ to denote all the attributes 
that appear in the body of $Q$, and $\head(Q) \subseteq \attr(Q)$ to denote all the attributes that appear in the head of $Q$ (so, $\head(Q) = \mathbb{A}$ in the previous paragraph). 
When a full CQ query $Q$ is evaluated on an instance $D$, if  $R_i = \emptyset$ for some vacuum relation $R_i \in \rel(Q)$, then $Q(D)$ is also empty; otherwise, the result $Q(D)$ is evaluated on non-vacuum relations. When a CQ query $Q$ is evaluated on an instance $D$, the result is exactly the projection of the full join result on attributes in $\head(Q)$ (after removing duplicates). We give an example in Figure~\ref{fig:example_setup}.

A classical representation of a CQ $Q$ is to model it as a hypergraph, where each attribute in $\attr(Q)$ is a vertex and each relation in $\rel(Q)$ is a hyperedge. In this work, we use a simpler representation for capturing the {\em connectivity} of queries and model it as a graph $G_Q$, where each relation is a vertex and there is an edge between $R_i, R_j \in \rel(Q)$ if $\attr(R_i) \cap \attr(R_j) \neq \emptyset$. This graph is denoted $G_Q$. A CQ $Q$ is {\em connected} if $G_Q$ is connected, and {\em disconnected} otherwise. An example is illustrated in Figure~\ref{fig:representation}.
\begin{figure}[t]
	\centering
	\includegraphics[scale = 0.8]{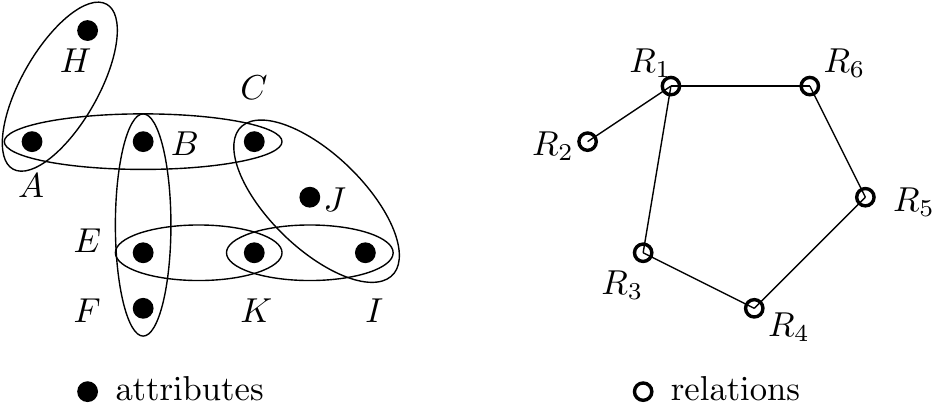}
	\caption{Hypergraph (left) and graph (right) representation for an example CQ $Q(A,C,F,K):-$ $R_1(A,B,C)$, $R_2(A,H)$, $R_3(B,E,F)$, $R_4(E,K)$, $R_5(K,I)$, $R_6(C,I,J)$.}
	\label{fig:representation}
\end{figure}

\subsection{Problem Definition}
Below, we formally define the \ourprob\ problem in terms of the count of output tuples of a CQ: 

\begin{definition}\label{def:problem}
	Given a CQ $Q$ on $\allrel$, an instance $D$, and a positive integer $k \geq 1$, the \emph{aggregated deletion propagation (\ourprob)} problem aims to remove at least $k$ results from $Q(D)$ by removing the minimum number of input tuples from $D$.
\end{definition}
Given $Q$, $k$, and $D$, we denote the above problem by $\ourprob(Q, D, k)$. Note that an implicit constraint on the input parameter $k$ is $1 \le k \le |Q(D)|$. For instance, in Figure~\ref{fig:example_setup}, \revm{\ourprob$(Q_1, D, 2)$} will return a single tuple $R_3(c3, e3)$ since removing it would remove the last two output tuples in $Q_1(D)$.  In this paper, we study the data complexity~\cite{vardi1982complexity} of the \ourprob\ problem, i.e., the size of the query and schema are fixed, and the complexity is in terms of the size of the database $D$. More precisely, we say that \ourprob($Q,D,k$) is {\em polynomial-time solvable} for a query $Q$ if, for an arbitrary instance $D$ and integer $k$, the solution of $\ourprob(Q,D,k)$ can be computed in polynomial time in the size of $D$; otherwise, it is {\em NP-hard}. 

For simplicity, we assume that all relations have distinct set of attributes in an input CQ $Q$, i.e., $\attr(R_i) \neq \attr(R_j)$ for every pair of relations $R_i, R_j \in \rel(Q)$. The rationale is that removing duplicated relations won't change the poly-time solvability of the original CQ. 

\subsection{Special Cases}
\label{sec:special-case}

Before we discuss the complexity of the \ourprob\ problem in general, we note the following special cases:

\paragraph{\reva{\ourprob} on boolean CQ.} 
The \ourprob\ problem on boolean CQ is also known as the {\em resilience} problem, i.e., removing the minimum number of input tuples to make the true query become false. The next theorem in~\cite{FreireGIM15} gives a decidability result of the \ourprob\ problem on boolean CQ.
\begin{theorem}[\cite{FreireGIM15}]
	\label{thm:boolean}
	On a boolean CQ $Q$, the poly-time solvability (in data complexity) of the $\ourprob(Q, D, 1)$ problem can be decided in polynomial time (in query complexity).
\end{theorem}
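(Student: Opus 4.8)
\noindent\textit{Proof proposal.}
The plan is to turn the decision question into a purely syntactic test on $Q$ that runs in time polynomial in $|Q|$. First I would \emph{normalize} $Q$: repeatedly delete any atom $R_j$ for which some other atom $R_i$ satisfies $\attr(R_i) \subseteq \attr(R_j)$ (such an $R_j$ is \emph{dominated}), until no atom is dominated. One must check --- and this is one of the things the proof has to establish --- that deleting a dominated atom changes neither the existence of witnesses in a way that matters nor the poly-time-vs-\NPhard\ status of resilience; intuitively an optimal solution never needs to delete a tuple of a dominated relation, since deleting a tuple of the coarser relation $R_i$ kills at least as many witnesses. This normalization is clearly polynomial in $|Q|$. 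On the normalized query I would then search for a \emph{triad}: a set of three atoms $\{S_1, S_2, S_3\}$ such that for each pair $S_i, S_j$ there is a path from $S_i$ to $S_j$ in $G_Q$ every edge of which is witnessed by an attribute not occurring in the third atom $S_\ell$. Enumerating the $O(p^3)$ triples and, for each, running a reachability check in the attribute-restricted version of $G_Q$ is polynomial in $|Q|$, so the whole decision procedure is.

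Correctness then rests on the structural dichotomy of \cite{FreireGIM15}: the normalized query contains no triad if and only if resilience is solvable in polynomial time in data complexity. For the \emph{tractable} direction one must give an exact poly-time algorithm for triad-free queries; the standard route is to model resilience as a minimum hitting set over the witnesses of $Q$ in $D$ and show that for triad-free queries this reduces to a polynomially sized minimum $s$-$t$ cut (equivalently, that the natural LP relaxation is integral), using a structural decomposition showing that triad-free normalized queries are ``hierarchical enough'' that their witness structure is tractable. For the \emph{hardness} direction one shows that a triad lets one build, from an instance of an \NPhard\ problem such as $3$SAT, a database on which computing resilience solves that instance: the three pairwise paths that each avoid the respective third atom are exactly what is needed to wire independent binary choices (one per variable) into the clause constraints, and the nonexistence of a triad is precisely what blocks such a construction.

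The main obstacle is the dichotomy itself; the decision procedure above is routine once the characterization is available. Within the dichotomy, the delicate parts are: (i) the soundness of the domination step, which needs a careful witness-exchange argument and is easy to get subtly wrong; (ii) the tractable direction, where one must identify the right structural normal form of triad-free queries and prove integrality or exhibit the flow reduction; and (iii) the hardness gadget, where the reduction must be faithful in both directions and the ``avoids the third atom'' clause of the triad definition has to be used in an essential way. Peripheral issues --- reducing a disconnected $Q$ to its connected components, and the polynomial-time enumeration itself --- are straightforward.
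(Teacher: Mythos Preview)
Your proposal is correct and matches the approach of \cite{FreireGIM15} as the paper presents it: the paper does not reprove this theorem but cites it, and the ingredients it records---the exogenous/endogenous normalization (Lemma~\ref{lem:endogenous}), the triad test (Definition~\ref{def:triad}), and the min-cut algorithm for the tractable side (Section~\ref{SEC:ALGORITHMS})---are exactly the ones you outline. One refinement: for the tractable direction the structural normal form is not ``hierarchical'' but \emph{linear} (the relations can be arranged so that every attribute occurs in a contiguous block), and it is this linearization that yields the layered network on which a single $s$--$t$ min-cut solves resilience; your ``LP integrality / hierarchical enough'' phrasing is in the right spirit but would not by itself pin down the construction.
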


\paragraph{\reva{\ourprob} on CQ with vacuum relations.} The \ourprob\ problem becomes easy when $Q$ contains a vacuum relation. Consider an arbitrary input instance $D$ for $Q$ and integer $k$. If every vacuum relation in $Q$ has instance $\{\emptyset\}$, we can remove query results in $Q(D)$ by removing the tuple $\{\emptyset\}$ in any one vacuum relation; otherwise, $Q(D) = \emptyset$ by definition, and there is no need to remove anything. Therefore:
\begin{lemma}
	\label{lem:vacuum}
	For a CQ $Q$, if there exists some vacuum relation, the \ourprob$(Q, D,k)$\ problem is poly-time solvable (in data complexity). 
\end{lemma}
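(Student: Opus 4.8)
The plan is a straightforward case analysis driven entirely by the two instances a vacuum relation can take. Fix a vacuum relation $R_i \in \rel(Q)$, so $\attr(R_i) = \emptyset$ and, by the convention stated in the preliminaries, $R_i^D$ is either $\{\emptyset\}$ (``true'') or $\emptyset$ (``false''). The algorithm deciding $\ourprob(Q,D,k)$ first scans all vacuum relations of $Q$ --- there are at most $p$ of them, a constant in data complexity --- and branches on whether any of them is empty.

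Case 1: some vacuum relation of $Q$ has instance $\emptyset$. Then the natural join of the relations in the body of $Q$ is empty (joining with an empty relation yields an empty relation), hence $Q(D) = \emptyset$ after projecting onto $\head(Q)$. Since a valid input $k$ satisfies $1 \le k \le |Q(D)| = 0$, this situation does not arise for a feasible instance; equivalently, there is nothing to remove and the answer is $0$. Either way the output is produced in constant time after the scan. Case 2: every vacuum relation of $Q$ has instance $\{\emptyset\}$. Pick any one such $R_i$ and delete its unique tuple $\{\emptyset\}$; this sets $R_i^D = \emptyset$, which as in Case 1 forces the query result to become empty, so all $|Q(D)| \ge k$ output tuples are removed at a cost of a single input deletion. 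This is also optimal: because $k \ge 1$, every feasible solution must delete at least one input tuple, so cost $1$ cannot be beaten. The algorithm therefore returns the deletion of one such tuple $\{\emptyset\}$.

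In both cases the procedure runs in time polynomial (indeed constant) in $|D|$, so $\ourprob(Q,D,k)$ is poly-time solvable in data complexity whenever $Q$ has a vacuum relation. I do not anticipate a genuine obstacle here; the only points requiring care are (i) invoking the definitional convention that a vacuum relation's instance is exactly $\{\emptyset\}$ or $\emptyset$, and (ii) the optimality step in Case 2, which relies on the implicit constraint $k \ge 1$ to force at least one deletion. The argument is insensitive to whether $Q$ is boolean, full, connected, or contains projections, since emptying a single relation empties the entire join regardless.
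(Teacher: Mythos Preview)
Your proposal is correct and follows essentially the same approach as the paper: the paper's argument is the same two-case split (either some vacuum relation is empty so $Q(D)=\emptyset$ and nothing needs to be removed, or all vacuum relations hold $\{\emptyset\}$ and deleting that single tuple empties the output). You add a small explicit justification of optimality via $k\ge 1$, which the paper leaves implicit, but the reasoning is otherwise identical.
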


\noindent {\bf $\ourprob$ with different choices of $k$:}  When $k = |Q(D)|$ or $k = 1$, the \ourprob\ problem is equivalent to the resilience problem, which implies that $\ourprob(Q, D, k)$ is NP-hard even for a constant $k$ for general CQs.  In contrast, \ourprob\ can be shown to be poly-time solvable (in data complexity) for any fixed $k$ if the query $Q$ is a full CQ.
	
For full CQs, it is indeed the case that \ourprob$(Q , D, k)$ is polynomial-time solvable for constant $k$. Enumerate all $|Q(D)| \choose k$ ways of selecting the output tuples to be removed, which is polynomial in $|D| = n$ assuming data complexity. So, the problem reduces to finding a minimum set of input tuples whose removal results in a {\em fixed} set of $k$ output tuples being removed. Let us fix such a set of $k$ output tuples.  Now partition the $n$ input tuples into $2^k$ subsets depending on which subset of these $k$ output tuples they remove -- since the CQ is full, each input tuple will remove zero or more output tuples from the $k$ chosen output tuples. All input tuples in any subset of this partition behave identically with respect to the $k$ output tuples we chose to delete; hence, we can only keep any one of these input tuples. That leaves us with $2^k$ input tuples and the input size becomes constant for fixed $k$. So, by any brute force method (e.g., trivially enumerating all $2^{2^k}$ subsets of these $2^k$ input tuples), the problem can be solved in $O(2^{2^k})$ for the fixed set of $k$ output tuples. Overall, the running time becomes $O(|Q(D)|^k \cdot 2^{2^k})$ time, which is polynomial for fixed $k$.

\section{Poly-time Decidability 
} 
\label{SEC:DICHOTOMY}

In this section, we give an algorithm that can decide poly-time solvability of the \ourprob\ problem on general CQs.

\begin{theorem}\label{thm:dichotomy}
On a CQ $Q$, \isptime$(Q)$ can decide poly-time solvability of the $\ourprob(Q, D,k)$ problem, which runs in polynomial time. 
\end{theorem}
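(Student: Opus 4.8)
The plan is to exhibit \isptime{} as a terminating sequence of query-level rewriting rules followed by a fixed pattern check, and to prove three things: (a) each rewriting rule is \emph{sound}, i.e.\ it maps $Q$ to a strictly smaller CQ $Q'$ such that $\ourprob(Q,\cdot,\cdot)$ is poly-time solvable in data complexity if and only if $\ourprob(Q',\cdot,\cdot)$ is; (b) each rule and the final check run in time polynomial in $|Q|$, and the rewriting terminates after polynomially many rounds; and (c) every CQ to which no rule applies (an \emph{irreducible} query) either embeds one of the three core hard structures -- triad-like, non-hierarchical head join, or strand -- in which case it is \NPhard{}, or else is poly-time solvable. Given (a)--(c), \isptime{} simply runs the rewriting to an irreducible query and reports ``poly-time solvable'' exactly when that residual query contains none of the three patterns.

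First I would fix the rewriting rules, which are the ones suggested by the special cases of \autoref{sec:special-case}: (i) if $G_Q$ is disconnected, recurse on its connected components, arguing that the complexity of $Q$ is governed by the components (the answer count factorizes, and recombining solutions across components for a given target is itself a low-degree combinatorial step since the query is fixed); (ii) if $Q$ contains a vacuum relation, stop and declare it poly-time solvable by \autoref{lem:vacuum}; (iii) delete a \emph{dominated} atom $R_i$ -- one whose attribute footprint, after accounting for head membership, is subsumed by that of a neighbour -- since such an atom never constrains which minimal source sets are realizable; (iv) contract attributes that always co-occur, putting the residual query into a normal form. For each rule I must verify soundness in both directions; the ``only if'' direction is the easy one (an optimal solution for $Q'$ lifts to one for $Q$ of no greater cost), while the ``if'' direction requires projecting an optimal solution for $Q$ down to $Q'$ without increasing its cost, which is where the combinatorial bookkeeping -- especially the interaction between existential (non-head) variables and source side-effect minimization -- has to be done carefully.

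Next I would establish termination and the query-complexity bound: each rule strictly decreases the simple potential $|\rel(Q)| + |\attr(Q)|$, so at most $O(|Q|)$ rounds occur; each round inspects all pairs of atoms and attributes of the polynomial-size query hypergraph in $\mathrm{poly}(|Q|)$ time; and the final check -- does the irreducible query embed a triad-like subquery, a non-hierarchical pair of head atoms, or a strand? -- is a fixed property of a polynomial-size structure, decidable by enumeration over bounded-size tuples of atoms and attributes in $\mathrm{poly}(|Q|)$ time. This yields the running-time claim. Correctness of the final verdict then reduces to the structural dichotomy: if the irreducible query contains one of the three structures, the \NPhard{}ness reductions of \autoref{sec:structure} apply, and by soundness of the rules the original $Q$ is \NPhard{}; otherwise the irreducible query avoids all three patterns, and the unified algorithm of \autoref{SEC:ALGORITHMS} solves it optimally in polynomial data complexity, so $Q$ is poly-time solvable.

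The main obstacle, I expect, is twofold. The first difficulty is proving complexity-preservation of the domination and contraction rules in the presence of projection: unlike the resilience setting, existential variables can create or destroy output tuples in ways that are not visible at the atom level, so I must argue that removing a dominated atom changes neither the set of achievable output-deletion counts nor the minimum source cost to achieve each. The second, deeper difficulty is \emph{completeness} of the three-structure list -- showing that an irreducible query containing none of the patterns is genuinely tractable -- which is precisely the content of \autoref{sec:structure} and \autoref{SEC:ALGORITHMS}; the present theorem is really a packaging result that assembles those two characterizations into a decision procedure of polynomial query complexity.
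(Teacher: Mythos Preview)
Your proposal has two substantive gaps.

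First, the rewriting rules you list are not the rules of \isptime. The actual procedure (Algorithm~\ref{algo:isptime}) alternates exactly two simplifications: remove all \emph{universal attributes} (output attributes appearing in every relation), and decompose a disconnected query into connected components. The base cases are: boolean (defer to the triad test of \cite{FreireGIM15}), vacuum relation (poly-time by Lemma~\ref{lem:vacuum}), and ``Others'' (declare NP-hard). Your rules (iii) and (iv)---deleting dominated atoms and contracting co-occurring attributes---are not part of \isptime{} at all; dominated relations appear only in Section~\ref{sec:structure} as a descriptive device, not as a rewriting step. Conversely, you never isolate universal-attribute removal, which is the crux of the procedure and the subject of Lemma~\ref{LEM:COMMON}. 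Proving soundness of rules that \isptime{} does not perform does not establish correctness of \isptime.

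Second, and more seriously, your plan for the irreducible case is circular. You propose to discharge step (c) by invoking the structural dichotomy of Section~\ref{sec:structure} (triad-like / strand / non-hierarchical head join) together with the algorithm of Section~\ref{SEC:ALGORITHMS}. But in the paper's logical order, Theorem~\ref{THM:DICHOTOMY-STRUCTURE} is \emph{proved from} Theorem~\ref{thm:dichotomy}: Lemmas~\ref{LEM:COMMON-HARD-STRUCTURE} and~\ref{LEM:DECOMPOSE-HARD-STRUCTURE} show the two \isptime{} simplifications preserve hard structures, and the base-case analysis (Lemma~\ref{lem:others-structure}) explicitly follows the ``Others'' classification already established for Theorem~\ref{thm:dichotomy}. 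Likewise the correctness of \computeADP{} in Section~\ref{SEC:ALGORITHMS} is justified by the proof of Theorem~\ref{thm:dichotomy}, not the other way around. The paper's actual argument for the hard direction is self-contained and does not touch Section~\ref{sec:structure}: it shows directly (Lemma~\ref{lem:others}) that any connected, non-boolean query with no universal attribute and no vacuum relation admits a \emph{query mapping} (Lemma~\ref{lem:mapping-hard}) to one of three explicit core queries $\pathquery$, $\swingquery$, $\seesawquery$, each of which is NP-hard via a concrete bipartite-graph reduction (Lemma~\ref{LEM:BIPARTITE-GRAPH}). That case analysis---partitioning $\attr(Q)$ into $(\I,\J,\ast)$ and verifying a mapping exists in each of the sub-cases of Figure~\ref{fig:others}---is the real work you are missing.
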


\begin{algorithm}[h]
	\caption{$\isptime(Q)$}
	\label{algo:isptime}
	
	Remove all universal attributes from each relation in $Q$\;
	\If{$\head(Q) = \emptyset$}{
		\If{there is no triad structure in $Q$}{	
			\Return \ \true
		}
	}
	\Else{
		\If{there exists a relation $R_i$ with $\attr(R_i) = \emptyset$}{
			\Return \ \true
		}
		\Else{
			\If{$Q$ is disconnected}{
				Let $Q_1, Q_2,\cdots, Q_s$ be its connected components\;
				\Return $\cap_{i=1}^s \isptime(Q_i)$\;
			}
		}
	}
	\Return \ \false
\end{algorithm}

The procedure \isptime$(Q)$\ is illustrated in Figure~\ref{fig:isptime}. Note that when $\isptime(Q)$ returns $\true$, the $\ourprob(Q, D,k)$ problem is poly-time solvable, and NP-hard otherwise.  The algorithmic description of \isptime\ is given in Algorithm~\ref{algo:isptime}.
$\isptime(Q)$ runs in polynomial time in the query size.

The high-level idea is to alternately apply two simplifications steps on the input query, until a ``base case'' is arrived at.
The first simplification step is that of removing all {\em universal} attributes in the input query. An attribute is {\em universal} if it is an output attribute appearing in all relations. After applying this step, if $Q$ becomes boolean or contains a vacuum relation (two of the base cases), it is decidable in polynomial time by Theorem~\ref{thm:boolean} and Lemma~\ref{lem:vacuum}. 

\par
Next, we check whether $Q$ is connected or not. For a disconnected query $Q$, we can {\em decompose} it into multiple {\em connected subqueries} as follows: apply breadth-first search or depth-first search algorithm on the graph $G_Q$, and find all connected components for $G_Q$. The set of relations corresponding to the set of vertices in one connected component of $G_Q$ form a connected subquery of $Q$. In this case, we perform the second simplification step of decomposing $Q$ into multiple connected subqueries, followed by calling  \isptime\ recursively on each connected subquery. More specifically, let $Q_1, Q_2, \cdots, Q_s$ be the connected subqueries of $Q$; then, $\isptime(Q)$ will return $\bigwedge_{i=1}^s \isptime(Q_i)$. Otherwise, $Q$ ends up in ``Others'' (the third base case). In this case, $Q$ is connected, non-boolean, and does not contain either a vacuum relation or a universal attribute. For all queries in ``Others'', \isptime\ returns \false. 
\begin{figure}
	\centering
	\includegraphics[scale=0.8]{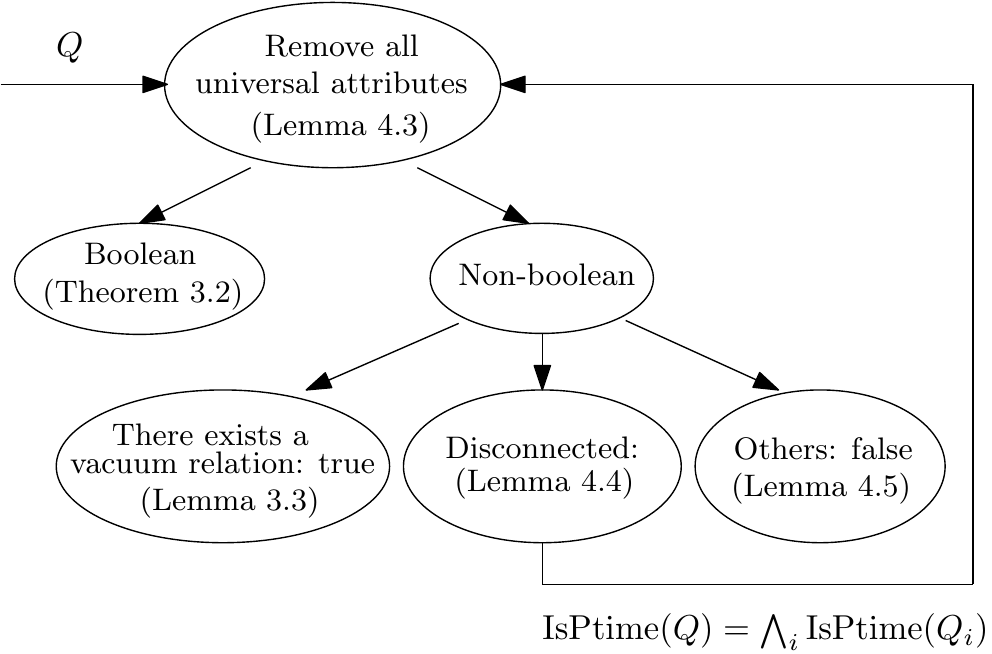}
	\caption{\revm{Procedure~\isptime$(Q)$.}}
	\label{fig:isptime}
\end{figure}


\begin{example}\label{eg:isptime}
	Consider an example CQ $Q(A,F,G,H):- R_1(A, B), \\R_2(F, G), R_3(B, C), R_4(C), R_5(G, H)$. Observe that $Q$ is non-boolean without any universal attribute and vacuum relations. The simplification step applied to $Q$ is to decompose it into two connected subqueries, $Q_1$ (with $R_1, R_3, R_4$) and $Q_2$ (with $R_2, R_5$). For $Q_2$, after removing the universal attribute $G$, it becomes disconnected. On applying the simplification step again to $Q_2$, it decomposes into two connected subqueries, $Q_{21}$ (with $R_2$) and $Q_{22}$ (with $R_5$).  After removing the universal attribute $F$ in $Q_{21}$, relation $R_2$ becomes vacuum and \isptime$(Q_{21})$ returns \true. Similarly, \isptime$(Q_{22})$ returns \true. However, $Q_1$ is non-boolean and contains no vacuum relation. Both simplifications fail on $Q_1$, so \isptime$(Q_1)$ returns \false. Therefore, \isptime($Q$) returns \false\ and $\ourprob(Q,D,k)$ is {\em NP-hard}.
\end{example}

The essence of \isptime\ is in the two simplifications steps: removing universal attributes and decomposing a disconnected query. Both these steps preserve the complexity of the problem as formally stated in Lemma~\ref{LEM:COMMON} and Lemma~\ref{LEM:DECOMPOSE}. Intuitively, for any universal attribute, we can partition the query results by the value of the universal attribute, and interpret each class in the partition as the result of the same query over a distinct sub-instance. Moreover, the deletion of any input tuple $t$ can only affect a single sub-instance that shares the value of the universal attribute with $t$. The original \ourprob\ instance now degenerates to finding an optimal combination of solutions to the \ourprob\ problem defined over each of the sub-instances, after removing the universal attribute.  Similarly, if the query is disconnected, the results of all connected subqueries will join by cross product. Then, the original \ourprob\ instance also degenerates to finding an optimal combination of solutions to the \ourprob\ problem defined for each connected subqueries. Finding the optimal combination is polynomial-time solvable since the size of the query as well as the query result is polynomial. Thus, the complexity of the original query can be deduced from that of the simplified queries.

Our proof of Theorem~\ref{thm:dichotomy} also follows the logical diagram of \isptime$(Q)$, which is divided into two parts. First, we show that these two simplification steps preserve the complexity of the problem, as described above. Then, we deal with the base cases. Note that the correctness for boolean queries and vacuum relations are implied by Theorem~\ref{thm:boolean} and Lemma~\ref{lem:vacuum}. Therefore, it suffices to show the NP-hardness of the \ourprob\ problem on $Q$, when $Q$ is non-boolean, connected, and contains no universal attribute or vacuum relation; we show this in Lemma~\ref{lem:others}. Putting everything together, the correctness for Theorem~\ref{thm:dichotomy} then follows from induction over the size of the query.

\subsection{Hardness Preservation in Simplifications}
\label{sec:hard-recursion}

In the first part, we show that when the simplifications are applied to the input query, the complexity of the \ourprob\ problem is preserved. 

\begin{lemma} 
	\label{LEM:COMMON}
	Let $A$ be a universal attribute in $Q$. Then, $\ourprob(Q,D,k)$ is NP-hard if and only if $\ourprob(Q_{-A},D,k)$  is NP-hard, where $Q_{-A}$ is the residual query after removing attribute $A$ from all relations in $Q$.
\end{lemma}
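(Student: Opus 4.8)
The plan is to establish a two-way reduction between $\ourprob(Q,D,k)$ and a polynomial-sized family of $\ourprob(Q_{-A},\cdot,\cdot)$ instances, exploiting the fact that a universal attribute $A$ (output attribute occurring in every relation) partitions everything cleanly by its value. First I would set up the key structural observation: for any instance $D$ of $Q$, partition $D$ by the value of $A$. For each value $a \in \dom(A)$, let $D_a$ be the sub-instance consisting, for each relation $R_i$, of the tuples $t \in R_i$ with $\pi_A t = a$ (with the attribute $A$ projected away, so $D_a$ is an instance of $Q_{-A}$). Because $A$ appears in every relation and is an output attribute, (i) every output tuple of $Q(D)$ has a well-defined $A$-value $a$ and is in bijection with an output tuple of $Q_{-A}(D_a)$; (ii) the instances $D_a$ are disjoint and their union is $D$; and (iii) deleting an input tuple $t$ with $\pi_A t = a$ affects only the output tuples with $A$-value $a$, i.e. only $Q_{-A}(D_a)$. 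Thus $|Q(D)| = \sum_a |Q_{-A}(D_a)|$ and the problem decomposes across the values of $A$.

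Given this, the ``easy direction of the algorithm'' is: $\ourprob(Q,D,k)$ equals $\min \sum_a \mathrm{cost}_a$, where the minimum is over all ways of choosing nonnegative integers $k_a \le |Q_{-A}(D_a)|$ with $\sum_a k_a \ge k$, and $\mathrm{cost}_a$ is the optimum of $\ourprob(Q_{-A}, D_a, k_a)$ (with $k_a = 0$ costing $0$). This ``optimal combination'' subproblem is a simple one-dimensional knapsack-style optimization over at most $|D|$ values of $a$ with budgets bounded by $|Q(D)|$, solvable in polynomial time by dynamic programming once each $\mathrm{cost}_a$ value (as a function of $k_a$) is known. Hence if $\ourprob(Q_{-A},\cdot,\cdot)$ is poly-time solvable, so is $\ourprob(Q,D,k)$ — this gives the ``if $Q_{-A}$ easy then $Q$ easy'' contrapositive of one direction, i.e. $Q$ NP-hard $\Rightarrow$ $Q_{-A}$ NP-hard.

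For the converse ($Q_{-A}$ NP-hard $\Rightarrow$ $Q$ NP-hard), I would give the trivial reduction: given an instance $(D', k')$ of $\ourprob(Q_{-A})$, build an instance $D$ of $Q$ by adjoining a single fresh value $a_0$ to $A$ in every tuple of $D'$ (so $D$ has exactly one nonempty $A$-block, namely $D_{a_0} = D'$). Then $Q(D)$ is in bijection with $Q_{-A}(D')$, input tuples correspond one-to-one, and the deletion semantics agree, so $\ourprob(Q,D,k') = \ourprob(Q_{-A},D',k')$; this is a polynomial-time, data-complexity-preserving reduction. Combining both directions yields the stated equivalence. The main obstacle is the first direction: one must argue carefully that the global optimum for $Q$ genuinely decomposes as the per-block optimization claims — i.e. that an optimal solution never ``wastes'' deletions across blocks and that any per-block choice of how many outputs to kill can be realized independently. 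This follows from observation (iii) above (deletions in block $a$ are inert for block $a' \ne a$), but it is the step that needs to be spelled out, together with the polynomial bound on the combination DP; the handling of edge cases (empty blocks, $k$ larger than some individual block size, relations becoming vacuous after removing $A$) is routine.
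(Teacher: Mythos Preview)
Your proposal is correct and follows essentially the same approach as the paper: the paper proves the ``if'' direction by the single-value reduction (adjoin a fixed value $*$ to attribute $A$ in every tuple of $D'$), and the ``only-if'' direction by partitioning $D$ by the value of $A$ and running exactly the knapsack-style dynamic program $\optcost[i][s] = \min_{m} \{\optcost[i-1][s-m] + c_{i,m}\}$ that you describe. Your identification of the key structural fact (deletions in block $a$ cannot affect block $a'$) and the polynomial bound on the DP table match the paper's argument precisely.
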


\begin{lemma}
	\label{LEM:DECOMPOSE}
	Let $Q_1, Q_2, \cdots, Q_s$ be the connected subqueries of $Q$ for $s \ge 2$. The $\ourprob(Q, D, k)$ problem is NP-hard if and only if there exists some $Q_i$ for which the $\ourprob(Q_i, D, k)$ problem is NP-hard. 
\end{lemma}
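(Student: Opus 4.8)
The plan is to prove both directions by reduction, exploiting the fact that $Q(D)$ is exactly the cross product $Q_1(D) \times Q_2(D) \times \cdots \times Q_s(D)$ when $Q$ decomposes into connected components $Q_1,\dots,Q_s$, since components share no attributes and input tuples in one component never interact with output tuples generated by another. Throughout, write $N_i = |Q_i(D)|$ and $N = |Q(D)| = \prod_i N_i$, and note that removing an input tuple $t$ belonging to (the relations of) component $Q_i$ deletes a set $S \subseteq Q_i(D)$ of that component's outputs; the set of deleted outputs in $Q(D)$ is then exactly $S \times \prod_{j \ne i} Q_j(D)$, i.e. it deletes $|S| \cdot \prod_{j\ne i} N_j$ output tuples. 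More generally, if we remove input tuples so that component $Q_i$ loses exactly $k_i$ of its outputs (keeping $N_i - k_i$), then $Q(D)$ loses exactly $N - \prod_i (N_i - k_i)$ outputs. So the global problem is: choose $k_1,\dots,k_s$ with $\prod_i(N_i-k_i) \le N - k$, minimizing $\sum_i \mathrm{OPT}_i(k_i)$, where $\mathrm{OPT}_i(k_i)$ is the optimum of $\ourprob(Q_i, D, k_i)$ (with $\mathrm{OPT}_i(0)=0$).

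For the ``if'' direction (some $Q_i$ hard $\Rightarrow$ $Q$ hard): given an instance $\ourprob(Q_i, D_i, k_i)$ on the hard component, I would build a database $D$ for $Q$ by putting $D_i$ into the relations of $Q_i$ and, into every other component $Q_j$, a trivial instance whose join produces exactly one output tuple and in which every input tuple is ``critical'' in the sense that it cannot be removed without destroying that single output (e.g. a single-tuple instance per relation, consistently joining). Then $N_j = 1$ for $j \ne i$, so $N = N_i$, and deleting $k_i$ outputs of $Q$ is equivalent to deleting $k_i$ outputs of $Q_i$; moreover the extra components contribute nothing useful to a deletion set (removing any of their tuples deletes the unique output, an all-or-nothing move that is never beneficial for the target $k_i \le N_i$ unless $k_i = N_i = N$, a boundary case handled directly). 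Hence $\mathrm{OPT}(\ourprob(Q,D,k_i)) = \mathrm{OPT}(\ourprob(Q_i,D_i,k_i))$, giving the reduction; this construction is poly-time data complexity since the added part has constant size.

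For the ``only if'' direction (all $Q_i$ poly-time $\Rightarrow$ $Q$ poly-time), I would solve the global problem by dynamic programming over the components. First, for each $i$, compute the table $\mathrm{OPT}_i(0), \mathrm{OPT}_i(1), \dots, \mathrm{OPT}_i(N_i)$ by calling the poly-time algorithm for $Q_i$ on each target value $0 \le k_i \le N_i$ (there are at most $N_i \le |D|$ such calls, each poly-time). Then combine: maintain a table $T_\ell(m)$ = minimum cost to make the first $\ell$ components jointly retain at most $m$ output tuples, updating $T_{\ell}(m) = \min_{k_\ell} \{ T_{\ell-1}(\lceil m/(N_\ell - k_\ell)\rceil) + \mathrm{OPT}_\ell(k_\ell)\}$ or an equivalent formulation tracking the product $\prod_{j\le\ell}(N_j - k_j)$ directly (the number of distinct attainable products is polynomial because each factor is an integer in $[1,|D|]$ and $s$ is a constant, so the product ranges over $O(|D|^s)$ values — constant exponent since $s$ is part of the fixed query). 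The answer is the minimum cost over configurations with final product $\le N - k$. Since $s$ is a constant and every table has polynomial size, the whole procedure is polynomial in $|D|$.

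The main obstacle I anticipate is making the combination step rigorous and genuinely polynomial: naively one would index DP states by the retained count, but retained counts can be as large as $N = \prod N_i$, which is $|D|^{\Theta(s)}$ — still polynomial since $s$ is fixed, but one must be careful that the DP has polynomially many states and transitions and that the recurrence correctly captures ``delete at least $k$ globally'' in terms of per-component retained counts (the product, not the sum, is what matters). A clean way is to argue that in an optimal global solution each component's induced deletion is itself optimal for whatever number $k_i$ of outputs it removes — this is immediate since the components are independent — so it suffices to optimize over the vector $(k_1,\dots,k_s)$, and then the feasibility constraint $\prod_i (N_i - k_i) \le N-k$ together with the objective $\sum_i \mathrm{OPT}_i(k_i)$ is a bounded integer optimization with $O(|D|^s)$ feasible points that can simply be enumerated. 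I will also need to dispatch the degenerate boundary case $k = N$ (equivalently, resilience), where every component must lose all its outputs, and the case where some $Q_i(D)$ is already empty, separately and trivially.
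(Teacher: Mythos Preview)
Your ``only-if'' direction is essentially identical to the paper's: compute $\mathrm{OPT}_i(k_i)$ for each component and each $0\le k_i\le N_i$, then enumerate the $O(|Q(D)|^s)$ tuples $(k_1,\dots,k_s)$ subject to $\prod_i(N_i-k_i)\le N-k$ and take the minimum of $\sum_i \mathrm{OPT}_i(k_i)$. That is fine.

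Your ``if'' direction, however, is broken. You pad every component $Q_j$ ($j\ne i$) with a single joining tuple per relation so that $N_j=1$. But then removing \emph{one} input tuple from any such $Q_j$ empties $Q_j(D)$ and hence (by the very cross-product identity you stated) empties all of $Q(D)$. Thus $\ourprob(Q,D,k)$ has optimum $1$ for every $k\ge 1$, regardless of $\ourprob(Q_i,D_i,k_i)$, and the claimed equality $\mathrm{OPT}(\ourprob(Q,D,k_i))=\mathrm{OPT}(\ourprob(Q_i,D_i,k_i))$ fails badly. Your assertion that this all-or-nothing move ``is never beneficial'' is exactly backwards: it is maximally beneficial, achieving the goal at cost $1$.

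The paper avoids this by doing the opposite of what you propose: it makes the other components \emph{large}, giving every relation in $Q_\ell$ ($\ell\ge 2$) exactly $L=|Q_1(D')|\cdot|D'|$ tuples (with unique labels so $|Q_\ell(D)|=L$), and sets the global target to $k'\cdot L^{s-1}$. One direction of the reduction is then immediate; for the other, a short but genuine argument is needed to show that any optimal global solution of size $\le c\le |D'|$ can be normalized so that it removes at least $k'$ outputs from $Q_1$ alone (using a convexity-type swap to concentrate deletions and the choice of $L$ to rule out the other components contributing enough). You are missing both the right padding and this normalization step.
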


The proofs of these lemmas are similar in spirit. Namely, we have two parts corresponding to the ``if'' and ``only if'' directions. To prove the ``if'' direction, we show that if \ourprob\ is NP-hard for $Q_{-A}$ (resp., there exists some $Q_i$ for which \ourprob\ is NP-hard), then the \ourprob\ problem on $Q$ is also NP-hard. To prove the ``only-if'' direction, we show that if \ourprob\ is poly-time solvable for $Q_{-A}$ (resp., \ourprob\ is poly-time solvable for each connected subquery $Q_i$), then \ourprob\ is also poly-time solvable for $Q$ as well. More specifically, given a poly-time algorithm for solving \ourprob\ on $Q_{-A}$ (resp., given poly-time algorithms for solving \ourprob\ on each $Q_i$), we design a poly-time algorithm for solving \ourprob\ problem on $Q$. 

\begin{proof}[Proof of Lemma~\ref{LEM:COMMON}]
	{\bf The ``if'' direction.} Given any instance $D'$ for $Q_{-A}$, we construct another instance $D$ for $Q$ as follows. Consider any relation $R_i' \in \rel(Q_{-A})$. For each tuple $t' \in R_i'$, we create a new tuple $t \in R_i$ such that $\pi_A t = *$ (a fixed value for all tuples and all relations in attribute $A$), and $\pi_B t = \pi_B t'$ for every other attribute $B \in \attr(R_i) - A$. 
	
	Hence there is a one-to-one correspondence between the output tuples in $Q(D)$ and $Q_{-A}(D')$, and also in the input $D$ and $D'$. Therefore, a solution to $\ourprob(Q, D, k)$ of size $c$ corresponds to a solution to $\ourprob(Q_{-A}, k, D')$ of size $c$, and vice versa. The proof follows.
	
	{\bf The ``only-if'' direction.}  Assume there is a poly-time algorithm $\mathcal{A}$ for computing $\ourprob(Q_{-A}, D, k)$ for any instance $D$ and integer $k$. We design a poly-time algorithm $\mathcal{A}'$ for $\ourprob(Q, D, k)$ as follows:
	
	Consider any input instance $D$ for $Q$ and integer $k$. We first partition $D$ into $D_1, D_2, \cdots, D_g$ corresponding to $a_1, a_2, \cdots, a_g$, which are all the possible values in the domain of attribute $A$. In $D_i$, each tuple $t$ has $\pi_A t= a_i$. Note that the query result $Q(D)$ is a disjoint union of the subquery results $Q(D_1), Q(D_2),\cdots, Q(D_i)$.
	\par
	Now, we run a dynamic program to compute the optimal solution with cost $\optcost$. Let $\optcost[i][s]$ denote the minimum number of input tuples that have to be removed in order to remove at least $s$ output tuples from $Q(D)$, under the constraint that the input tuples can only be chosen from $D_1$ to $D_i$. Using this notation, we can now write the following dynamic program:
	\begin{equation}
	\label{eq:universal}
	\optcost[i][s] = \min_{m = 0}^{s}\Big \{ \optcost[i-1][s - m] + c_{i, m} \Big\}.
	\end{equation}
	Here, $m$ denotes the number of output tuples being removed from the subproblem on $D_i$. And, $c_{i, m}$ is the cost of the solution for subproblem \ourprob$(Q, D_i,m)$, i.e., the minimum number of input tuples in $D_i$ whose removal would remove at least $m$ output tuples from $Q(D_i)$. Note that $c_{i,0} = 0$ for every $i$. 
	
	Note that each tuple in $D_i$ has the same value $a_i$ in attribute $A$. Hence, computing \ourprob$(Q, D_i, m)$ is equivalent to computing \ourprob$(Q_{-A}, D_i, m)$, which can be solved in poly-time by algorithm $\mathcal{A}$. 
	Recall that there are $g$ distinct values in attribute $A$, thus $g \le |D|$. Moreover, $k$ is bounded by the size of query results, i.e. $k \le |Q(D)|$. The number of cells in $\optcost$ is $g \cdot k = O(|D| \cdot |Q(D)|)$, which is polynomial in terms of $|D|$. Thus, algorithm $\mathcal{A}$ runs in polynomial time in data complexity.
\end{proof}

\begin{proof}[Proof of Lemma~\ref{LEM:DECOMPOSE}]
	{\bf The ``if'' direction.} W.l.o.g., assume the \ourprob\ problem on $Q_1$ is NP-hard. Given an instance $D'$ for $Q_1$, we construct another instance $D$ for $Q$ as follows.  All relations in $Q_1$ have the same tuples as in $D'$. Set $L = |Q_1(D')| \cdot |D'|$. Recall that $|Q_1(D')|$ denotes the number of results in query $Q_1$ over instance $D'$. 
	Each relation $R_j \in \rel(Q_\ell)$ for $\ell \ge 2$ contains $L$ tuples, where each tuple is given a unique label that appears as the value of every attribute in that tuple. 
	(Note that the size of $D$ is  polynomial in the size of $D'$.) This ensures that for any connected subquery $Q_\ell$, there are exactly $L$ output tuples in $Q_\ell(D)$ corresponding to the $L$ unique labels given to the tuples in every relation. Then, the number of output tuples in $Q(D)$ is $|Q_1(D')| \cdot L^{s-1}$, since the join across the disconnected components results in a cross product. 
	
	We argue that $\ourprob(Q_1, D', k')$ has a solution of size $\le c$ if and only if $\ourprob(Q,  D, k' \cdot L^{s-1})$ has a solution of size $\le c$. 
	
	In one direction, if we can remove $k'$ results from $Q_1(D')$ by removing at most $c$ tuples from $D'$, removing these tuples from $D$ removes $k' \cdot L^{s-1}$ results from $Q(D)$, which is also a solution for $\ourprob(Q, D, k' \cdot L^{s-1})$. 
	
	In the other direction, suppose we are given a solution for $\ourprob(Q, D, k' \cdot L^{s-1})$ of size at most $c$. Observe that $c \le |D'|$; otherwise, there is always a better solution for $\ourprob(Q, D, k)$ by removing all input tuples from relations in $Q_1$. Let $x_i$ be the number of input tuples removed from relations in $Q_i$, and $y_i$ be the number of output tuples removed from $Q_i(D)$. 
	A key observation is that there exists a solution for $\ourprob(Q, D, k' \cdot L^{s-1})$ of size $\le c$ such that 
	(\romannumeral 1) $y_i = x_i$ for any $i \ge 2$; 
	(\romannumeral 2) $x_i \neq 0$ for at most one $i \ge 2$; and 
	(\romannumeral 3) $y_1 \ge k'$. 
	We will prove these one by one.
	
	For (\romannumeral 1), we can always remove $x_i$ output tuples from $Q_i(D)$ by removing $x_i$ tuples from one specific relation in $Q_i$. Thus, the total number of results removed can be written as: \begin{small}
		\[f(x_1, x_2, \cdots, x_s) 
		= |Q_1(D')| \cdot L^{s-1} - (|Q_1(D')| - y_1) \cdot \prod_{i \ge 2}^s(L - x_i) \ge k.\]
	\end{small}
	For (\romannumeral 2), suppose $s \ge 3$ and $x_2, x_3 \neq 0$ without loss of generality. We can construct another solution for $\ourprob(Q, D, k' \cdot L^{s-1})$ with $x'_i = x_i$ for $i \notin \{2,3\}$, $x_2' = x_2+x_3$, and $x_3' = 0$, which is no worse. This is because:
	\[f(x_1, x_2 + x_3,0, x_4, \cdots, x_s) \ge f(x_1, x_2, \cdots, x_s).\]
	After applying this argument repeatedly, we can obtain a solution for $\ourprob(Q, D, k' \cdot L^{s-1})$ that removes $x_1$ tuples from relations in $Q_1$ and $x_2$ tuples from relations in $Q_2$, where $x_1 + x_2 \le c$, with $\ge k$ results removed from $Q(D)$. 
	
	For (\romannumeral 3), suppose $y_1 < k'$. As $x_1 + x_2 \le c$, there comes
	\[f(x_1, c-x_1, 0,\cdots,0) \ge f(x_1, x_2, 0, \cdots, 0) \ge k\] 
	Expanding $f(x_1, c-x_1, 0,\cdots,0)$ and $k$, we get: 
	\[|Q_1(D')| \cdot L^{s-1} - (|Q_1(D')| -y_1)(L - c + x_1) \cdot L^{s-2} \ge k' \cdot L^{s-1}\]
	Rearranging this inequality, we will get
	\begin{align*}
	(|Q_1(D')| - k') \cdot L \ge& (|Q_1(D')| -y_1)(L - c+ x_1) 
	\ge (|Q_1(D')| -y_1)(L - |D'|) 
	\end{align*}
	where the last inequality is implied by the fact that $c \le |D'|$. We can further rewrite the inequality above as 
	\[ (|Q_1(D')| -y_1) \cdot |D'| \ge (k' -y_1) \cdot L > L \ (\textrm{since }  y_1 < k').\]
	This contradicts: $(|Q_1(D')| -y_1) \cdot |D'| \le |Q_1(D')| \cdot |D'| = L$. 
	
	Thus, there exists a solution for $\ourprob(Q, D, k' \cdot L^{s-1})$ of size $\le c$ such that $y_1 \ge k'$. Removing those $x_1$ tuples from relations in $Q_1$ is a solution for $\ourprob(Q_1, D', k')$ of size $\le c$. 
	
	{\bf The ``only-if'' direction.} Assume that for each $Q_i$, there is a poly-time algorithm $\mathcal{A}_i$ for computing $\ourprob(Q_i, D, k)$ for any instance $D$ and integer $k$. We next present another poly-time algorithm $\mathcal{A}$ for $\ourprob(Q, D, k)$. Consider an arbitrary input instance $D$ and integer $k$. Let $|Q_i(D)| = m_i$. Note that if removing $k_i$ output tuples from $Q_i(D)$, there are $m_i - k_i$ remaining output tuples in $Q_i(D)$, which together form $\prod_{i=1}^s(m_i - k_i)$ output results overall. In other words, $\prod_{i=1}^s m_i - \prod_{i=1}^s(m_i - k_i)$ output tuples are removed from $Q(D)$ in total. Therefore, the overall optimal solution is given by:
	\begin{equation}
	\label{eq:decompose}
	\ourprob(Q, D, k) = 
	\min_{(k_1, k_2, \cdots, k_s) \in K} \sum_{i=1}^s \ourprob(Q_i, D, k_i)	
	\end{equation}
	where $K =\{(k_1, k_2, \cdots, k_s): \prod_{i=1}^s m_i - \prod_{i=1}^s(m_i - k_i) \geq k, k_i \in \mathbb{Z}^+, \forall i \in \{1,2,\cdots,s\}\}$.
	Note that the  $\ourprob(Q_i, D, k_i)$ is solved in polynomial time by algorithm $\mathcal{A}_i$. 
	Note that there are at most $k^s = O(|Q(D)|^s)$ different combinations of $k_1, k_2, \cdots, k_s$, which is still polynomial in terms of data complexity. Overall, the running time of $\mathcal{A}$, which simply enumerates all these options and chooses the best one, is polynomial.  
\end{proof}

\subsection{NP-Hardness
	for ``Others''}
\label{sec:hard-all-fail}

In this part, we prove the hardness of the class of queries characterized by ``others'' bracket in Figure~\ref{fig:isptime}, as stated in Lemma~\ref{lem:others}.

\begin{lemma}
	\label{lem:others}
	For a CQ $Q$, if $\isptime(Q)$ goes to ``others'' in Figure~\ref{fig:isptime}, i.e., if (1) $Q$ contains no universal attributes; (2) $Q$ is non-boolean; (3) $Q$ contains no vacuum relations;  and (4) $Q$ is connected, then $\ourprob(Q, D, k)$ is NP-hard.
\end{lemma}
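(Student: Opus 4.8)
The plan is to reduce a known NP-hard problem to $\ourprob(Q,D,k)$ for every query $Q$ that lands in the ``Others'' case. The first step is a purely syntactic analysis of such a $Q$: using that $Q$ is connected, non-boolean, and contains no vacuum relation and no universal attribute, I would argue that $Q$ must contain at least one of three minimal hard patterns — a \emph{triad-like} triple of relations (a triad in the sense of~\cite{FreireGIM15}, but where the three connecting ``paths'' are allowed to pass through head variables), a \emph{non-hierarchical head join} (two head attributes $A,B$ with $\rel(A)\cap\rel(B)\neq\emptyset$ but neither $\rel(A)\subseteq\rel(B)$ nor $\rel(B)\subseteq\rel(A)$), or a \emph{strand} (a shortest path $R_1-R_2-\cdots-R_m$ in $G_Q$ whose free endpoint $R_1$ carries a variable occurring only in $R_1$ while the interior relations are suitably ``light''). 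The claim is that absence of all three, for a connected non-boolean query with no vacuum relation, forces $Q$ to fall outside the ``Others'' case — e.g.\ to contain a universal attribute — a contradiction. This structural dichotomy is where the bulk of the casework lies and is the step I expect to be the main obstacle, in particular ruling out ``exotic'' connected queries that evade all three patterns.

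Then I would dispatch the three patterns by three separate reductions. For the triad-like case I would reduce from boolean $\ourprob$ (resilience) on a hard triad, which is NP-hard by Theorem~\ref{thm:boolean}: given a triad instance, instantiate every head attribute of $Q$ with a single fixed domain value so that $Q(D)$ is effectively boolean, pad the remaining relations of $Q$ so the join combinatorics match, and set $k=1$ (equivalently $k=|Q(D)|$). A deletion set removing one output tuple then corresponds exactly to a resilience solution. Care is needed so that padding relations never yield cheaper solutions; I would enforce this by blowing up their sizes by a polynomial factor so that deleting from them is never beneficial — the same device used in the ``if'' direction of Lemma~\ref{LEM:DECOMPOSE}.

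For the non-hierarchical head join I would reduce from \regular (a degree-regular variant of \clique): encode a graph so that tuples of the relation shared by $A$ and $B$ correspond to vertices, tuples carrying the $A$-private (resp.\ $B$-private) occurrences correspond to the two ``sides'', output tuples correspond to edges, and deleting few input tuples to kill many output tuples becomes equivalent to finding a small vertex set meeting many edges / a dense subgraph. Since $k$ is a free parameter, the reduction naturally lands on a partial-coverage style target, and regularity of the source instance lets one compute $|Q(D)|$ exactly so the threshold $k$ is meaningful. For the strand I would reduce from a minimum-cost $k$-coverage problem (\kmc): the free endpoint's private variable induces a family of ``output groups'', the interior relations act as shared ``elements'', and removing $k$ outputs at minimum input cost becomes exactly covering $k$ groups at minimum cost. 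Here the long-path structure of the strand is what guarantees that each interior deletion helps only one output group (otherwise the instance could be simplified further), making the correspondence with the covering problem tight.

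Finally I would assemble the three reductions, noting that in each case the constructed $D$ has size polynomial in the source instance and the target $k$ is efficiently computable, which gives NP-hardness of $\ourprob(Q,D,k)$ in data complexity for every ``Others'' query. Combined with Theorem~\ref{thm:boolean}, Lemma~\ref{lem:vacuum}, Lemma~\ref{LEM:COMMON}, and Lemma~\ref{LEM:DECOMPOSE}, this closes the correctness of \isptime\ by induction on query size. The secondary obstacle, beyond the structural dichotomy, is making the three reductions simultaneously accommodate the extra freedom that the parameter $k$ introduces relative to the resilience ($k=1$ or $k=|Q(D)|$) setting.
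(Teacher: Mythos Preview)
Your plan takes a genuinely different route from the paper, and while the high-level strategy is defensible, it is considerably more roundabout and has real gaps.

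The paper does \emph{not} prove Lemma~\ref{lem:others} via the structural characterization (triad-like / non-hierarchical head join / strand). In the paper that characterization is Theorem~\ref{THM:DICHOTOMY-STRUCTURE}, and it is established \emph{after} Lemma~\ref{lem:others}, by showing that the two simplification steps preserve hard structures and that every ``Others'' query contains one --- i.e.\ the structural theorem is derived from the procedural one, not the other way around. What the paper actually does for Lemma~\ref{lem:others} is: (i) isolate three tiny concrete ``core'' queries $Q_{\textup{cover}}(A,B){:}{-}R_1(A),R_2(A,B),R_3(B)$, $\swingquery(A){:}{-}R_2(A,B),R_3(B)$, and $\seesawquery(A){:}{-}R_1(A),R_2(A,B),R_3(B)$, and prove each NP-hard via a bipartite combinatorial problem (partial vertex cover, $k$-minimum coverage, and a new sided-constrained vertex cover, respectively); (ii) introduce a single clean tool, the \emph{query mapping} lemma, which says that if the attributes of $Q$ can be partitioned so that every relation of $Q$ projects onto some relation of a target query $Q'$ (and every relation of $Q'$ is hit), then hardness of $Q'$ transfers to $Q$; and (iii) give a short case analysis on the \emph{head join} of $Q$ (vacuum relation present / disconnected / connected) that, using only connectivity and the absence of universal attributes and vacuum relations, produces an explicit two-block partition $\I\cup\J$ of $\attr(Q)$ mapping $Q$ to one of the three cores. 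All the heavy lifting is thus done once, on three fixed two-variable queries, and the transfer to an arbitrary ``Others'' $Q$ is a one-page attribute-partition argument.

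Compared to this, your plan has two concrete problems. First, you are proposing to prove directly that every ``Others'' $Q$ contains one of the three hard structures --- essentially Lemma~\ref{lem:others-structure} in the appendix --- and then to run a separate reduction for each structure on the \emph{general} $Q$. That doubles the work: the paper's mapping lemma lets you avoid ever building a reduction on $Q$ itself. Second, several of your sketches are off. Your description of a ``strand'' (a shortest path in $G_Q$ with a private endpoint variable) does not match the paper's definition (two non-dominated relations sharing a non-output attribute but differing on output attributes), so the $\kmc$ reduction you outline does not line up with what actually needs to be shown. Your non-hierarchical-head-join reduction targets \regular, but in the paper that source problem is used only for $\seesawquery$ via the \svcb\ detour; the head-join case goes through $Q_{\textup{cover}}$ and partial vertex cover on bipartite graphs, which is both simpler and already known. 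And for the triad-like case, fixing all head attributes to a constant and setting $k=1$ collapses $Q$ to a boolean query, but you then need that this boolean query itself contains a \emph{triad} in the sense of~\cite{FreireGIM15}; the triad-like definition guarantees the three paths avoid $\head(Q)$, so this is recoverable, but the padding argument you allude to must also ensure the three distinguished relations remain endogenous after the collapse, which you have not addressed.

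In short: your route can likely be made to work, but the paper's query-mapping device is the missing idea that makes the proof short --- it replaces three bespoke reductions on arbitrary $Q$ by three fixed reductions on two-attribute queries plus a uniform syntactic partition of $\attr(Q)$.
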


We start by identifying three simple but {\bf NP-hard} queries for the \ourprob\ problem that will be at the core of showing the above lemma. Then we present a general framework of proving the hardness for a given CQ by {\em mapping} it to another query on which the \ourprob\ problem is known (or has been proven) to be NP-hard. Finally, we classify all queries in Lemma~\ref{lem:others} into three groups using the flowchart in Figure~\ref{fig:others}, and give a mapping from queries ending up in each leaf of the flowchart to a core query identified at the beginning.  
\begin{figure}[t]
	\centering 
	\includegraphics[scale=0.75]{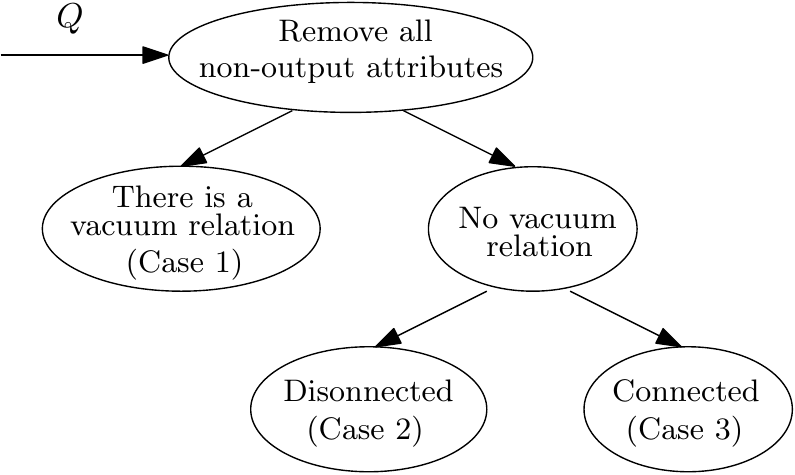}
	\caption{\small Proof plan of Lemma~\ref{lem:others}.}
	\vspace{-3mm}
	\label{fig:others}
\end{figure}

\subsubsection{Core Queries}
\label{sec:hardest}

The three queries we focus on are the following:
\begin{enumerate}
	\itemsep0em
	\item[] $Q_{\textup{cover}}(A, B) :- R_1(A), R_2(A, B), R_3(B)$.
	\item[] $Q_{\textup{swing}}(A): -R_2(A,B), R_3(B)$.
	\item[] $Q_{\textup{seesaw}}(A): - R_1(A), R_2(A,B), R_3(B)$.
\end{enumerate}
Careful inspection reveals that these queries have a common property: w.l.o.g., we can assume that an optimal solution of $\ourprob(Q,D,k)$ won't remove any tuples from relation $R_2(A,B)$. The effect of the removal of any tuple $(a,b) \in R_2$ can also be achieved by removing tuple $(a) \in R_1$ or $(b) \in R_3$. 
(The formal proof is in Appendix~\ref{appendix:endogenous}.) 
Therefore, an optimal solution for \ourprob\ on any one of these three queries could be restricted to removing tuples only from $R_1(A)$ and $R_3(B)$. In this way, the \ourprob\ problem on these queries can be interpreted as optimization problems on bipartite graphs, which turn out to be {\em NP-hard} (Lemma~\ref{LEM:BIPARTITE-GRAPH}). 

\begin{lemma}
	\label{LEM:BIPARTITE-GRAPH}
	Given an undirected bipartite graph $G(A \cup B, E)$ where $E$ is the set of edges between two sets of vertices $A$
	and $B$, and an integer $k$, each of the following problems is NP-hard:
	\begin{enumerate}
		\itemsep0em
		\item[(1)] Remove the minimum number of vertices in $A \cup B$ such that at least $k$ edges in $E$ are removed.\footnote{\noindent A {\bf remove} procedure on a graph is defined as: (1) when a vertex is removed, all the incident edges are also removed; (2) when all the incident edges on a vertex are removed, this vertex is also removed. \label{remove}}
		\item[(2)] Remove the minimum number of vertices in $B$ such that at least $k$ vertices in $A$ are removed; 
		\item[(3)] Remove the minimum number of vertices in $A \cup B$ such that at least $k$ vertices in $A$ are removed; 
	\end{enumerate}
\end{lemma}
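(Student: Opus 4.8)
The plan is to reduce from \textsc{Clique} to each of the three graph problems; \textsc{Clique} is NP-hard even when the target clique size $t$ is required to be at least $4$ (for $t\le 3$ it is solved in $O(n^3)$ time). The single combinatorial fact used throughout is that a graph on at most $t$ vertices has at most $\binom{t}{2}$ edges, with equality precisely when it is a $t$-clique. I would prove part (2) first, obtain part (3) from it by a blow-up, and handle part (1) last; that one is the genuine obstacle, because — unlike the other two — the transparent incidence-type constructions degenerate, so I would either invoke or reproduce the known NP-hardness of partial (a.k.a.\ budgeted maximum) vertex cover on bipartite graphs.

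\textbf{Part (2).} From $(H=(V_H,E_H),t)$, build the vertex--edge incidence bipartite graph $G(A\cup B,E_G)$ with $A=\{x_e:e\in E_H\}$, $B=\{y_v:v\in V_H\}$, and $x_e\sim y_v$ iff $v$ is an endpoint of $e$; every $A$-vertex then has degree $2$. Set $k=\binom{t}{2}$. For $S\subseteq B$, letting $W=\{v:y_v\in S\}$, the $A$-vertex $x_e$ is removed exactly when both endpoints of $e$ lie in $W$, so the optimum of problem~(2) on $(G,k)$ equals $\min\{\,|W|:|E_H[W]|\ge\binom{t}{2}\,\}$. This is at most $t$ iff $H$ has a $t$-clique: in the nontrivial direction, $|W|\le t$ together with $\binom{t}{2}\le|E_H[W]|\le\binom{|W|}{2}\le\binom{t}{2}$ forces $|W|=t$ and $E_H[W]$ to consist of all $\binom{t}{2}$ pairs.

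\textbf{Part (3).} Now the solver may also delete $A$-vertices outright, which I must render useless. I would reuse $G$ but replace each $A$-vertex $x_e$ by $N>|V_H|$ identical copies (each with the two endpoints of $e$ as neighbors) and set the target to $N\binom{t}{2}$. Deleting a copy directly costs $1$ and removes only that one copy, so any solution performing $\ge N\bigl(\binom{t}{2}-|E_H[W]|\bigr)$ direct deletions costs $\ge N>|V_H|$ and is dominated by simply deleting all of $B$; hence an optimum uses only $B$-deletions, the number of removed copies is exactly $N\cdot|E_H[W]|$, and the analysis reduces to that of part~(2). A blow-up-free alternative: if $|W|$ vertices of $B$ plus $c$ direct $A$-deletions remove $\ge\binom{t}{2}$ $A$-vertices with $|W|+c\le t$, then $\binom{|W|}{2}\ge\binom{t}{2}-c\ge\binom{t}{2}-(t-|W|)$ rearranges to $(|W|-t)(|W|+t-3)\ge0$, which for $t\ge4$ forces $|W|=t$, $c=0$, and $W$ a clique.

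\textbf{Part (1).} Removing a minimum number of bipartite-graph vertices so that at least $k$ edges vanish is precisely the partial vertex cover problem on bipartite graphs, whose NP-hardness is known; one reduces from it in one line, or reproduces its reduction (which is from a standard NP-hard problem such as \textsc{Clique}). The step I expect to be the real work is this one: the naive incidence- or adjacency-based constructions collapse, since a bipartite graph that is regular on one side has the property that \emph{every} $s$-subset of that side covers the same number of edges, so no such construction can separate cliques from non-cliques. A correct reduction therefore needs a more careful gadget whose only near-optimal deletion sets are those encoding a clique — ruling out, in particular, unbalanced sets and sets that merely exploit a dense bipartite subgraph of $H$ — and verifying this ``no spurious solutions'' property, which in the mild degree-$2$ setting of parts~(2)--(3) is immediate, is exactly why appealing to the established hardness of partial vertex cover on bipartite graphs is the most economical route.
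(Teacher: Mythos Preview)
Your proposal is correct, and for part~(1) it coincides with the paper's: both simply identify the problem as partial vertex cover on bipartite graphs and cite its known NP-hardness.

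For parts~(2) and~(3), however, your route is genuinely different and considerably lighter. The paper handles~(2) by reducing from the $k$-Minimum Coverage problem (elements $\to B$, sets $\to A$); you instead reduce directly from \textsc{Clique} via the vertex--edge incidence graph, which is more elementary and self-contained. For~(3), the paper takes a substantial detour: it introduces an auxiliary problem \textsc{SVCB} (vertex cover with a lower bound on $|S\cap A|$), proves \textsc{SVCB} NP-hard through a page-long gadget reduction from \textsc{Clique} in regular graphs with several carefully tuned parameters, and then reduces \textsc{SVCB} to problem~(3). Your two arguments --- the $N$-fold blow-up of $A$, and especially the clean inequality $(|W|-t)(|W|+t-3)\ge 0$ that rules out direct $A$-deletions for $t\ge 4$ --- bypass all of that and are much shorter. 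One thing the paper's approach to~(2) buys that yours does not: the $k$-Minimum Coverage reduction is approximation-preserving, which the paper later exploits to derive an $\Omega(n^\epsilon)$ inapproximability bound for $\ourprob$ on $\swingquery$; your \textsc{Clique}-based reduction, while perfectly adequate for NP-hardness, does not immediately yield that.
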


Problem (1) is exactly {\em partial vertex cover for bipartite graphs}, which is known to be NP-hard~\cite{CaskurluMPS17}. The NP-hardness proofs for (2) and (3) are deferred to  Appendix~\ref{APPENDIX:HARD-CORE}.

\subsubsection{Hardness Preserving Mapping}

The high-level idea of relating an arbitrary query $Q$ characterized by Lemma~\ref{lem:others} to the core queries is to divide the attributes in $\attr(Q)$ into two groups, one mapped to $A$ and the other mapped to $B$. In this way, each relation in $Q$ plays the role of $R_1(A)$, $R_2(A,B)$ or $R_3(B)$ in the core queries. The notion of ``query mapping'' is formally defined below:
\begin{definition}[Query Mapping]
	Suppose we are given a function $f:\attr(Q_1) \to \attr(Q_2) \cup \{*\}$. Let
	\[g(R_i) = \{Y\in \attr(Q_2): \exists X \in \attr(R_i) \text{ s.t. } f(X) = Y\}.\]
	$f$ is said to be a query mapping if the following properties hold:
	(\romannumeral 1) 
	for every relation $R_i \in \rel(Q_1)$, there is a (unique) relation $R_j \in \rel(Q_2)$ such that
	$g(R_i) = \attr(R_j)$.
	(\romannumeral 2) for every relation $R_j \in \rel(Q_2)$, there exists at least one relation $R_i \in \rel(Q_1)$ such that $g(R_i) = \attr(R_j)$.
\end{definition}
In the definition above, if $g(R_i) = \attr(R_j)$ for relations $R_i \in \rel(Q_1)$ and $R_j \in \rel(Q_2)$, then $R_i \in \rel(Q_1)$ is said to be {\em mapped} to relation $R_j \in \rel(Q_2)$. The next lemma 
shows that query mappings preserve hardness of the \ourprob\ problem.

\begin{lemma}
	\label{lem:mapping-hard}
	If there is a mapping from a CQ $Q_1$ to another CQ $Q_2$, and $\ourprob(Q_2, D, k)$ is NP-hard, then $\ourprob(Q_1, D, k)$ is also NP-hard.
\end{lemma}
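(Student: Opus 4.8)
The plan is to reduce $\ourprob(Q_2,D',k')$ to $\ourprob(Q_1,D,k)$ by using the query mapping $f$ to "pull back" any instance $D'$ of $Q_2$ to an instance $D$ of $Q_1$. The key intuition is that $f$ tells us, for each attribute $X$ of $Q_1$, which attribute $Y=f(X)$ of $Q_2$ it "corresponds to" (with $*$ meaning "frozen"), so a tuple of $Q_2$ can be read as a tuple of $Q_1$ by assigning each attribute $X$ of a relation $R_i\in\rel(Q_1)$ the value that the mapped-to relation $R_j\in\rel(Q_2)$ assigns to $f(X)$, and assigning a fixed dummy value $\star$ to every attribute sent to $*$. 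Concretely, given $D'$, I would build $D$ as follows: for each $R_i\in\rel(Q_1)$ with mapped relation $R_j=R_j(R_i)\in\rel(Q_2)$, and for each tuple $t'\in R_j^{D'}$, create a tuple $t\in R_i^{D}$ with $\pi_X t = \pi_{f(X)} t'$ for every $X\in\attr(R_i)$ with $f(X)\neq *$, and $\pi_X t = \star$ otherwise. Note several relations of $Q_1$ may be mapped to the same $R_j$; they all get a copy of each tuple of $R_j^{D'}$, but indexed by distinct relation names, so there is no collision. The size of $D$ is at most $|\rel(Q_1)|\cdot|D'|$, hence polynomial.

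The first thing to verify is a bijection between output tuples: $Q_1(D)$ and $Q_2(D')$ are in one-to-one correspondence, and moreover the correspondence is "local" in the sense that removing an input tuple of $D$ that is the pullback of $t'\in R_j^{D'}$ kills exactly the output tuples of $Q_1(D)$ corresponding (under the bijection) to the output tuples of $Q_2(D')$ killed by removing $t'$. To see the bijection, observe that because $f$ is a query mapping, property~(\romannumeral 1) guarantees each relation of $Q_1$ maps onto a full relation-schema of $Q_2$, and $g$ commutes with taking attribute sets; a consistent assignment to $\attr(Q_1)$ (i.e., a join tuple of $Q_1$) in which every $*$-attribute takes value $\star$ projects, via $f$, to a consistent assignment to $\attr(Q_2)$, and conversely any join tuple of $Q_2$ lifts uniquely (filling $*$-attributes with $\star$). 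Here I need that attributes of $Q_1$ mapping to the same $Y\in\attr(Q_2)$ are forced to be equal in any surviving join tuple of $D$ — this holds by construction since, across the relations of $Q_1$ containing such attributes, the values all came from $\pi_Y$ of tuples of $Q_2$ and a join tuple of $D$ only survives when these agree (this is exactly where property~(\romannumeral 2), surjectivity, is used: every $Y$ is hit, so the equality constraints on the $Q_1$ side mirror the join structure of $Q_2$). Projecting onto $\head(Q_1)$ vs.\ $\head(Q_2)$ is compatible with $f$ because $f$ should map $\head(Q_1)$ onto $\head(Q_2)$; if the definition as stated does not already enforce this, the intended reading is that it does, and I would note this explicitly.

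Given the local bijection, the reduction is immediate: any deletion set $S'\subseteq D'$ with $|Q_2(D'\setminus S')|\le |Q_2(D')|-k'$ pulls back to a deletion set $S\subseteq D$ of the same size with $|Q_1(D\setminus S)|\le|Q_1(D)|-k'$, and conversely every deletion set for $D$ can be assumed to consist only of "whole" pullback tuples (removing a $Q_1$-tuple is removing one $Q_2$-tuple in one of the mapped relations), so it pushes forward to a deletion set for $D'$ of the same size removing the same number of output tuples. Hence $\ourprob(Q_1,D,k'\!)$ has a solution of size $\le c$ iff $\ourprob(Q_2,D',k'\!)$ does, so a polynomial-time algorithm for $\ourprob(Q_1,\cdot,\cdot)$ would give one for $\ourprob(Q_2,\cdot,\cdot)$, contradicting the assumed NP-hardness of the latter.

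The main obstacle I anticipate is the careful bookkeeping in the bijection proof: making precise why two distinct $Q_1$-attributes $X_1,X_2$ with $f(X_1)=f(X_2)=Y$ are necessarily equal in every surviving join tuple, and dually why a join tuple of $Q_2$ has a unique lift. This is where the two defining properties of a query mapping do the real work, and where one must be slightly careful about relations of $Q_1$ that map to the same relation of $Q_2$ (so that the "copies" don't accidentally create spurious join tuples). Everything else — the polynomial size bound, the transfer of deletion-set sizes, the head-projection compatibility — is routine once the bijection is nailed down.
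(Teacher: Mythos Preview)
Your proposal is essentially the paper's own argument: construct $D_1$ from $D_2$ by pulling back each tuple of $R_j^{D_2}$ along $f$ into every $R_i\in\rel(Q_1)$ with $g(R_i)=\attr(R_j)$, claim a one-to-one correspondence between output tuples, and then transfer deletion sets in both directions. Two minor points of comparison: (i) when pushing a solution $\S_1$ for $Q_1$ forward to $\S_2$ for $Q_2$, the paper only claims $|\S_2|\le|\S_1|$ (not equality), since several $Q_1$-relations may map to the same $Q_2$-relation and hence several deleted $Q_1$-tuples may collapse to one $Q_2$-tuple---your ``same size'' is slightly too strong, though the inequality is all that is needed; (ii) conversely, when lifting $\S_2$ to $\S_1$, the paper picks \emph{one} preimage relation for each $R_j$, which is the step that gives $|\S_1|=|\S_2|$ and which you leave implicit. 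Your observations about head compatibility and about attributes $X_1,X_2$ with $f(X_1)=f(X_2)$ being forced equal in surviving join tuples are exactly the places where the paper's proof is also terse; you are right to flag them as the delicate points.
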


\begin{proof}
	Assume $Q_1$ is mapped to $Q_2$ under the mapping function $f$.  Given any instance $D_2$ for $Q_2$, we construct an instance $D_1$ for $Q_1$ as follows.  Consider an arbitrary relation $R_i \in \rel(Q_1)$ that is mapped to relation $R_j \in \rel(Q_2)$ under $f$. If there is a tuple $t' \in R_j$, we create a tuple $t \in R_i$ such that for any $X \in \attr(R_i)$, $\pi_{X} t = \pi_{f(X)} t'$ if $f(X) \in \attr(R_j)$, and $\pi_{X} t = *$ otherwise. Overloading notation, we will say that $t$ is also mapped to $t'$. Note that there is a one-to-one correspondence between the output tuples in $Q_1(D_1)$ and $Q_2(D_2)$.

	We next show that the problem $\ourprob(Q_1, D_1, k)$ has a solution of size $\le c$ if and only if $\ourprob(Q_2, D_2, k)$ has a solution of size $\le c$. 
	
	{\bf The ``only-if'' direction.} Suppose we are given a solution $\S_1$ for $\ourprob(Q_1, D_1, k)$ has of size $\le c$. We next construct a solution $\S_2$ for $\ourprob(Q_2, D_2, k)$ as follows. For any relation $R_i \in \rel(Q_1)$, if tuple $t \in R_i$ is removed by $\S_1$, then tuple $t' \in R_j$
	is removed by $\S_2$, where $R_i \in \rel(Q_1)$ is mapped to $R_j \in \rel(Q_2)$ and $t$ is mapped to $t'$. 
	Since multiple tuples from different relations in $D_1$ could be mapped to $t'$, $|\S_2| \le |\S_1| \le c$.
	As a result, if an output tuple from $Q_1(D_1)$ is removed, its corresponding tuple from $Q_2(D_2)$ will also be removed. Thus, $\S_2$ removes at least $k$ results from $Q_2(D_2)$, with size $\le c$.
	
	{\bf The ``if'' direction.} Suppose we are given a solution $\S_2$ for $\ourprob(Q_2, D_2, k)$ of size $\le c$. We next construct a solution $\S_1$ for $\ourprob(Q_1, D_1, k)$ as follows. Consider any relation $R_j \in \rel(Q_2)$ with some tuples removed by $\S_2$. Let $R_i \in \rel(Q_1)$ be any one relation mapped to $R_j$ under $f$. If $t' \in R_j$ is removed,  remove the tuple $t$in $\S_1$ that is mapped to $t'$. Clearly, $|\S_1| = |\S_2| \le c$.
	As a result, if an output tuple from $Q_2(D_2)$ is removed, its corresponding tuple from $Q_1(D_1)$ will also be removed. Thus, $\S_1$ removes at least $k$ results from $Q_1(D_1)$, with size $\le c$.
\end{proof}

\subsubsection{Mapping to the core}
\label{SEC:MAPPING}
To prove the NP-hardness of the \ourprob\ problem on a query $Q$, it suffices to show a mapping to any core query, implied by Lemma~\ref{lem:mapping-hard}. The high-level idea is that for any query characterized by Lemma~\ref{lem:others}, we find a partition of attributes in $Q$ as $(\I, \J, \attr(Q) - \I -\J)$ where $\I \cap \J = \emptyset$ and define the mapping function $f: X \to \{A,B,*\}$ as follows:
\begin{displaymath}
f(X) = \left\{ \begin{array}{ll}
A & \textrm{if $X \in \I$}\\
B & \textrm{if $X \in \J$}\\
* & \textrm{otherwise}
\end{array} \right.
\end{displaymath}
Then it remains to show that $f$ is a mapping from $Q$ to one of the three core queries. As mentioned, we distinguish $Q$ into three cases in Figure~\ref{fig:others}, and identify the mapping for each case separately. 

Note that any query in Lemma~\ref{lem:others} is connected and does not have any universal attribute or vacuum relation. For simplicity, {\em head join} is defined as the residual query after removing all non-output attributes from all relations in $Q$, denoted as $\headQ$. In a CQ $Q$, a {\em path} between a pair of attributes $A,B \in \attr(Q)$, is a sequence of relations starting with some $R_i \in \rel(A)$ and $R_j \in \rel(B)$ such that each consecutive pair of relations share a common attribute. 

\paragraph{Case 1: Head join has at least one vacuum relation.}
In this case, observe that there must exist some relation $R_i \in \rel(Q)$ such that $\attr(R_i) \subseteq \attr(Q) - \head(Q)$.  Let $\I = \head(Q)$ and $\J = \attr(Q) - \head(Q)$. We next show that $f$ is a valid mapping from $Q$ to $\swingquery$ if there exists some relation $R_j \in \rel(Q)$ such that $\attr(R_j) \subseteq \head(Q)$, and to $\seesawquery$ otherwise.

Note that every relation $R_i \in \rel(Q)$ is mapped to $R_1(A)$, $R_2(A,B)$, or $R_3(B)$. 
Crucially, there is at least one relation that is mapped to $R_3(B)$, e.g., $R_i$.  Moreover, there is
at least one relation that is mapped to $R_2(A,B)$; otherwise attributes in $\I$ and $\J$ are not connected, contradicting the fact that $Q$ is connected. (Note that $Q$ is connected irrespective
of whether the head join is connected or not.) If there exists some relation $R_j \in \rel(Q)$ such that $\attr(R_j) \subseteq \head(Q)$, then $R_j$ will be mapped to $R_1(A)$; and 
$f$ is a valid mapping from $Q$ to $\seesawquery$. Otherwise, $f$ is a valid mapping from $Q$ to $\swingquery$.

\paragraph{Case 2: Head join is disconnected (and no vacuum relation).}
In this case, we can always identify a pair of attributes $X,Z \in \head(Q)$ such that there is no path between $X,Z$ in $\headQ$. As $Q$ is connected, every path between $X,Z$ in $Q$ uses at least one attribute in $\attr(Q) - \head(Q)$. In other words, removing $\attr(Q) - \head(Q)$ decomposes $Q$ into multiple connected subqueries, where $X,Z$ are in different ones.
Let $\I$ be the set of attributes appearing in the connected subquery containing $X$. Note that $\head(Q) - \I \neq \emptyset$ since $X,  Z$ are in different connected subqueries. 

Observe that there must exist a relation $R_\ell \in \rel(Q)$ such that $\attr(R_\ell) \cap \I
\neq \emptyset$ and $\attr(R_\ell) \cap (\attr(Q) - \head(Q)) \neq \emptyset$; otherwise, there is no path between $X$ and any non-output attribute, contradicting the fact that $Q$ is connected. Applying a similar argument to the connected subquery that doesn't contain $X$, there must exist a relation $R_h \in \rel(Q)$ such that $\attr(R_h) \cap (\head(Q) - \I)
\neq \emptyset$ and $\attr(R_h) \cap (\attr(Q) - \head(Q)) \neq \emptyset$.
Depending on whether there exists some relation $R_i \in \rel(Q)$ such that $\attr(R_i) \subseteq \I$ and some relation $R_j\in \rel(Q)$ such that $\attr(R_j) \subseteq \head(Q) - \I$, we have two different cases.

\textbf{Case 2.1:} Both relations $R_i$ and $R_j$ as described above exist. Set $\J = \attr(Q) - \I$. On one hand, each relation in $Q$ is mapped to any one of $R_1(A)$, $R_2(A,B)$ or $R_3(B)$. On the other hand, relations $R_i, R_\ell, R_j$ are mapped to $R_1, R_2, R_3$ respectively. Thus, $f$ is a valid mapping from $Q$ to $\pathquery$.

\textbf{Case 2.2:} At least one of $R_i, R_j$ doesn't exist, say $R_j$.  Set $\J = \attr(Q) - \head(Q)$. In this mapping, no relation has all of its attributes mapped to $*$; otherwise, $R_j$ exists, which is a contradiction. So, each relation in $Q$ is mapped to any one of $R_1(A)$, $R_2(A,B)$ or $R_3(B)$. On the other hand, relations $R_\ell, R_h$ are mapped to $R_2, R_3$ respectively. If $R_i$ exists, it will be mapped to $R_1(A)$ and $f$ is a valid mapping from $Q$ to $\seesawquery$. Otherwise, $f$ is a valid mapping from $Q$ to $\swingquery$.

\paragraph{Case 3: Head join is connected (and no vacuum relation).}
In this case, the head join is connected but has no vacuum relation. We further distinguish $Q$ into two cases: (3.1) there exists a pair of relations $R_i, R_j \in \rel(Q)$ such that $\attr(R_i) \cap \attr(R_j) \cap \head(Q) = \emptyset$; (3.2) for each pair of relations $R_i, R_j \in \rel(Q)$, we have $\attr(R_i) \cap \attr(R_j) \cap \head(Q) \neq \emptyset$.

\textbf{Case 3.1.} Set $\I = \attr(R_i) \cap \head(Q)$ and $\J = \head(Q) - \attr(R_i)$. In this mapping, no relation has its all attributes mapped to $*$; otherwise, there is a vacuum relation in the head join, which is a contradiction. 
So, each relation in $Q$ is mapped to any one of $R_1(A)$, $R_2(A,B)$ or $R_3(B)$. Moreover,  $R_i, R_j$ are mapped to $R_1(A), R_3(B)$ respectively. Note that there must also exist some relation mapped to $R_2(A,B)$; otherwise, $R_i$ is a single connected subquery of the head join, contradicting the fact that the head join is connected. Thus, $f$ is a valid mapping from $Q$ to $\pathquery$.

\textbf{Case 3.2.} In this case, we first observe that $|\attr(R_i) \cap \head(Q)| \ge 2$ for any relation $R_i \in \rel(Q)$. Suppose not, say $\attr(R_i) \cap \head(Q) = \{C\}$. Since $\attr(R_i) \cap \attr(R_j) \cap \head(Q) \neq \emptyset$ for any $R_j \in \rel(Q)$, then $C$ is a universal attribute of $Q$, which is a contradiction. For simplicity, assume no pair of relations in the head join have exactly the same attributes; otherwise, we just keep one of them in the mapping construction.

We label all relations in an increasing order of the number of output attributes, as $R_1, R_2, \cdots, R_p$, breaking ties arbitrarily.
For simplicity, denote $\attr(R_i) \cap \attr(R_j) \cap \head(Q)$ as $\allattr_{ij}$ with ordering $(i,j)$
if $i < j$, and $\allattr_{ji}$ with ordering $(j,i)$ otherwise. Let $R_i, R_j$ be the pair of relations whose intersection contains smallest number of output attributes. Without loss of generality, assume $i<j$. If there are multiple pairs with the same number of attributes in their intersection, we just break ties by their lexicographical order. 
We further distinguish the mappings into two cases as follows.

Case 3.2.1: $i>1$. We observe that $\allattr_{1i} - \allattr_{1j} \neq \emptyset$ and $\allattr_{1j} - \allattr_{1i} \neq \emptyset$. Suppose not, say $\allattr_{1i} - \allattr_{1j} = \emptyset$. This implies $\allattr_{1i} \subseteq \allattr_{1j} \subseteq \allattr_{ij}$, contradicting the fact that $\allattr_{ij}$ has smaller number of attributes than $\allattr_{1i}$. (Note that $(1, i)$ is lexicographically earlier than $(i, j)$ in the case of a tie.) Similarly, we can also show that $\allattr_{1j} - \allattr_{1i} \neq \emptyset$.  Moreover, there exists no relation $R_\ell \in \rel(Q)$ such that $\attr(R_\ell) \cap \head(Q) \subseteq \allattr_{ij}$. This is because of the fact that no pair of relations have exactly the same attributes, which in combination with $\attr(R_\ell) \cap \head(Q) \subseteq \allattr_{ij}$ would imply that  $\attr(R_\ell) \cap \head(Q) \subsetneq R_i\cap \head(Q)$. This would in turn imply $\ell < i$, and consequently, that $\allattr_{\ell i}$ has smaller number of attributes than $\allattr_{ij}$ (or is lexicographically earlier in the case of a tie), which is a contradiction.

Set $\I = (\attr(R_i) \cap \head(Q)) - \attr(R_j)$ and $\J = \head(Q) - \attr(R_i)$.
In this mapping, no relation gets all attributes mapped to $*$,
since there is no relation $R_\ell$ such that $\attr(R_\ell)\cap \head(Q)\subseteq \allattr_{ij}$ as discussed above. So, each relation in $Q$ is mapped to any one of $R_1(A)$, $R_2(A,B)$ or $R_3(B)$. Moreover, relations $R_i, R_1, R_j$ are mapped to $R_1, R_2, R_3$ respectively. Thus, $f$ is a valid mapping from $Q$ to $\pathquery$.

Case 3.2.2: $i=1$. For any attribute $C \in \allattr_{1j}$, there must exist a relation $R_\ell$ such that $C \notin \attr(R_\ell)$; otherwise, $C$ is an universal attribute, which is a contradiction. W.l.o.g., assume $\ell < j$. We claim that $\allattr_{\ell j} - \attr(R_1) \neq \emptyset$; otherwise, $\allattr_{\ell j}  \subseteq \allattr_{1j}$. Since $C \in \allattr_{1j} - \allattr_{\ell j}$, $|\allattr_{\ell j}| < |\allattr_{1j}|$, contradicting the fact that $R_1, R_j$ share the smallest number of output attributes among all pair of relations.  Moreover, there exists no relation $R_h \in \rel(Q)$ such that $\attr(R_h) \cap \head(Q) \subseteq \allattr_{1 \ell}$. Otherwise, either $\attr(R_h) \cap \head(Q) \subsetneq \attr(R_1) \cap \head(Q)$ which contradicts the fact that $h > 1$,  or $\attr(R_h) \cap \head(Q) = \attr(R_1) \cap \head(Q)$ which contradicts the fact that no pair of relations have exactly the same output attributes.

Set $\I = (\attr(R_1) \cap \head(Q)) - \attr(R_\ell)$ and $\J = \head(Q) - \attr(R_1)$. In this mapping, no relation gets all attributes mapped to $*$,  since there exists no relation $R_h$ such that $\attr(R_h) \cap \head(Q) \subseteq \allattr_{1 \ell}$ as discussed above. So, each relation in $Q$ is mapped to any one of $R_1(A)$, $R_2(A,B)$ or $R_3(B)$. Moreover, $R_1, R_j, R_\ell$ are mapped to $R_1, R_2, R_3$ respectively. Thus, $f$ is a valid mapping from $Q$ to $\pathquery$.

\medskip 
We show examples for each case in Figure~\ref{fig:others} separately. 

\begin{example} Consider an example query $Q_1(A,C,F):- R_1(A,C), R_2(B), R_3(B,C), R_4(C,E,F)$, with a vacuum relation $R_2$ in head join $Q'_1(A,C,F):- R_1(A,C), R_2(), R_3(C), R_4(C,F)$. In this example, we map attributes $A,C,F$ to $A$ and $B,C$ to $B$, yielding a new query $Q''_1(A):- R_1(A), R_2(B), R_3(B), R_4(A,B)$, i.e., the $\seesawquery$ query.  If $R_1(A,C)$ does not appears in $Q_1$, the same mapping yields another query $Q'''_1(A):- R_2(B),R_3(B), R_4(A,B)$, i.e., the $\swingquery$ query.
\end{example}

\begin{example}
	Consider an example query $Q_2(A,B):-R_1(A),R_2(A,C), R_3(C,B),R_4(B)$, where the head join $Q'_2(A,B):- R_1(A), R_2(A),R_3(B),R_4(B)$ is disconnected. For one connected subquery containing $A$, we can identify relation $R_2$ such that $A \in \attr(R_2)$ and $\attr(R_2)\cap (\attr(Q) - \head(Q)) \neq \emptyset$. Similarity, for the other connected subquery containing $B$, we can identify relation $R_3$ such that $B \in \attr(R_3)$ and $\attr(R_3)\cap (\attr(Q) - \head(Q)) \neq \emptyset$. In this case, we map attributes $B,C$ to $B$, yielding a new query $Q'_2(A,B):-R_1(A), R_2(A,B), R_3(B)$, i.e., the $\pathquery$ query. If $R_4(B)$ does not appear in $Q_2$, we map $B$ to $*$, yielding a new query $Q''_2(A):- R_1(A), R_2(A,C),R_3(C)$, i.e., the $\seesawquery$ query. If both $R_1(A), R_4(B)$ does not appear in $Q_2$, we map $B$ to $*$, yielding a new query $Q'''_2(A):- R_2(A,C),R_3(C)$, i.e., the $\swingquery$ query.
\end{example}

\begin{example}
	We show two examples for (3.1) and (3.2) separately.  In (3.1), there is a pair of relations $R_i, R_j \in \rel(Q)$ such that $\attr(R_i) \cap \attr(R_j) = \emptyset$. Consider a full CQ $Q_3(A,B,C,E):- R_1(A,C), R_2(C,E),\\ R_3(E,B)$. There is a pair of relations $R_1, R_3$ such that $\attr(R_1) \cap \attr(R_3)  \neq \emptyset$. We map attributes $A,C$ to attribute $A$ and $B,E$ to attribute $B$, yielding a new query $Q'_3(A,B):- R_1(A),R_2(A,B), R_3(B)$, i.e., the $\pathquery$ query. In (3.2), for every pair of relations $R_i, R_j \in \rel(Q)$, $\attr(R_i) \cap \attr(R_j) \cap \head(Q) \neq \emptyset$. Consider an example full CQ $Q_4(A,B,C,E,F):- R_1(A,B,C,E,F),R_2(B,C,E), R_3(A,C)$. We map attributes $C,E,F$ to $*$ and obtain a new query $Q'_4(A,B):- R_1(A,B),R_2(B), R_3(A)$, i.e., the $\pathquery$ query.
\end{example}

\section{Structural Characterization}
\label{sec:structure}

In the last section, we provided a simple poly-time algorithm \isptime\ to decide the poly-time solvability of the \ourprob\ problem for CQs without self-join. \revm{However, this algorithm does not provide structural insight into what makes the \ourprob\ problem NP-hard or poly-time solvable for individual queries. Namely, it does not provide a structural characterization for solvability of the \ourprob\ problem, 
such as the one shown for the special case of the resilience problem in \cite{FreireGIM15}. To rectify this shortcoming and complement the procedural dichotomy established in the last section, we provide, in this section, a {\em structural dichotomy} of the \ourprob\ problem for CQs. Interestingly, it turns out that the procedural and structural dichotomies do not have a one-one mapping; namely, distinct cases of the \isptime\ procedure map to same case in the structural characterization, and vice-versa. Our main theorem in this section is the following:
}
\begin{theorem}
	\label{THM:DICHOTOMY-STRUCTURE}
	For a CQ $Q$, $\ourprob(Q,k,D)$ is NP-hard if and only if one of the following happens:
	\begin{itemize}
	    \item $Q$ contains a ``triad-like'' structure,
	    \item $Q$ contains a ``strand'' structure, or
	    \item the head join of non-dominated relations is non-hierarchical.
	\end{itemize}
\end{theorem}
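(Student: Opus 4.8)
The plan is to deduce this structural dichotomy from the procedural one of Section~\ref{SEC:DICHOTOMY}. By Theorem~\ref{thm:dichotomy}, $\ourprob(Q,D,k)$ is NP-hard exactly when $\isptime(Q)$ returns $\false$, so I would prove instead that $\isptime(Q)=\false$ if and only if $Q$ contains a triad-like structure, or a strand, or the head join of its non-dominated relations is non-hierarchical. The first step is a \emph{normalization} lemma: all three structural conditions are invariant under the two simplification steps of \isptime, i.e., $Q$ has one of the structures iff, after deleting all universal attributes from every relation and decomposing $G_Q$ into connected components, some component does. Invariance under decomposition is immediate since each structure lives inside a single connected component of $G_Q$; invariance under deleting a universal attribute $A$ is routine, because $A$ belongs to every relation and hence cannot participate in a non-hierarchical pair of the head join, a strand, or a triad-like structure, and its removal destroys none of them (the triad-like notion being defined to commute with this operation, just as in the ``if'' direction of Lemma~\ref{LEM:COMMON}). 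Combined with Lemmas~\ref{LEM:COMMON} and~\ref{LEM:DECOMPOSE}, this reduces both directions of the theorem to the case where $Q$ is connected with no universal attribute.

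\smallskip
\noindent\textbf{The ``if'' direction.} Assuming $Q$ is connected and has one of the structures, I would produce NP-hardness using the machinery of Section~\ref{sec:hard-all-fail}. If the head join of non-dominated relations is non-hierarchical, pick attributes $X,Z$ with $\rel(X)\cap\rel(Z)$, $\rel(X)\setminus\rel(Z)$ and $\rel(Z)\setminus\rel(X)$ all nonempty among non-dominated relations, set $\I$ to the output attributes on the $X$-only side and $\J$ to those on the $Z$-only side, sending everything else to $*$; this is exactly a query mapping onto $Q_{\textup{cover}}$ in the style of Case~3 of Section~\ref{SEC:MAPPING}, so NP-hardness follows from Lemma~\ref{LEM:BIPARTITE-GRAPH}(1) and Lemma~\ref{lem:mapping-hard}. (First discarding dominated relations is what makes the mapping well defined: a dominated relation collapses into its dominator without changing $\ourprob$.) If $Q$ contains a strand, the strand is by construction a path of relations whose endpoints carry output and non-output attributes in precisely the pattern of $\swingquery$ or $\seesawquery$, which gives the mappings of Cases~1--2 of Section~\ref{SEC:MAPPING} and hence NP-hardness via Lemma~\ref{LEM:BIPARTITE-GRAPH}(2)--(3) and Lemma~\ref{lem:mapping-hard}. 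If $Q$ contains a triad-like structure, I argue that projecting out all output attributes leaves a boolean query with a genuine triad; resilience is then NP-hard for it by Theorem~\ref{thm:boolean} and~\cite{FreireGIM15}, and since resilience is the case $k=|Q(D)|$ and the output attributes can be reinstated with a single constant (the padding in the ``if'' direction of Lemma~\ref{LEM:COMMON}), $\ourprob(Q,D,k)$ is NP-hard.

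\smallskip
\noindent\textbf{The ``only if'' direction.} Suppose $\ourprob(Q,D,k)$ is NP-hard; after normalization, \isptime\ returns $\false$ on some connected, universal-attribute-free subquery $Q'$. Reading Algorithm~\ref{algo:isptime}, this occurs in one of two ways: either $Q'$ is boolean and contains a triad, in which case unfolding the normalization reduction back to $Q$ exhibits a triad-like structure; or $Q'$ lands in the ``others'' leaf of Figure~\ref{fig:isptime}. In the second case I would re-run the case analysis of Section~\ref{SEC:MAPPING} and record which structure each branch produces: Cases~1 and~2 (head join has a vacuum relation, or is disconnected) each expose a strand, while Cases~3.1 and~3.2 (head join connected) each expose, after deleting dominated relations, a non-hierarchical head join. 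Pulling this structure back through the normalization gives one of the three structures in $Q$, as required. Conversely, whenever \isptime\ returns $\true$ — boolean with no triad, a vacuum relation present, or a disconnected decomposition into poly-time components — the normalization argument shows none of the three structures can survive.

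\smallskip
\noindent\textbf{Main obstacle.} The hardness reductions are essentially those already in Section~\ref{sec:hard-all-fail}; the real difficulty is the bookkeeping around \emph{domination} and matching the three syntactic structures one-to-one against the leaves of \isptime. On the ``if'' side I must confirm that deleting dominated relations truly leaves $\ourprob$ unchanged, so that ``non-hierarchical head join of non-dominated relations'' genuinely yields a mapping onto $Q_{\textup{cover}}$; on the ``only if'' side I must establish the converse of the Section~\ref{SEC:MAPPING} case split, namely that if the head join of non-dominated relations is hierarchical and there is no strand and no triad-like structure, then \isptime\ never reaches the ``others'' leaf. Equally delicate is checking that the ``non-dominated'' qualifier is stable under removal of universal attributes, since domination among relations can shift when an attribute disappears. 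None of this requires a new hardness gadget, but reconciling the procedural and structural views — which, as the section notes, are not in one-to-one correspondence — is where the weight of the proof lies.
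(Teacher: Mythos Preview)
Your overall plan—reduce the structural statement to the procedural dichotomy of Theorem~\ref{thm:dichotomy} by showing that the three hard structures are preserved under the two simplification steps of \isptime\ and then checking agreement at the leaves—is exactly what the paper does (Lemmas~\ref{LEM:COMMON-HARD-STRUCTURE}, \ref{LEM:DECOMPOSE-HARD-STRUCTURE} for preservation, and the leaf analysis in Lemmas~\ref{lem:vacuum-structure}, \ref{lem:others-structure}). Your ``only if'' direction is essentially the paper's Lemma~\ref{lem:others-structure}.

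The gap is in your ``if'' direction. You try to bypass the leaf analysis and directly produce, from each hard structure, a query mapping onto a core query. This is not what the paper does, and your sketch does not work as stated. For the non-hierarchical case, the mapping you describe (set $\I$ to ``the output attributes on the $X$-only side'' and $\J$ to ``those on the $Z$-only side'', everything else to $*$) is not well defined: a relation containing neither $X$ nor $Z$ would have all its attributes sent to $*$ and map to no relation of $\pathquery$. Your remedy—``first discarding dominated relations\ldots a dominated relation collapses into its dominator without changing $\ourprob$''—is unjustified: Lemma~\ref{lem:endogenous} only says an optimal solution need not delete \emph{tuples} from exogenous relations; it does not say one may delete the relation from the \emph{body} of $Q$ without changing the join, and removing a body atom changes $Q(D)$ and hence the $\ourprob$ instance. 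Similarly, your strand case (``the strand is by construction a path\ldots which gives the mappings of Cases~1--2'') conflates the structural taxonomy with the case split of Section~\ref{SEC:MAPPING}; a strand is a pair of relations, not one of those cases, and the mappings there are driven by properties of the head join (vacuum, disconnected), not by the presence of a strand.

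The fix is to use your normalization argument symmetrically for both directions, as the paper does. After normalization $Q$ is connected with no universal attribute. If it also has a hard structure, then it cannot have a vacuum relation: a vacuum $R_i$ dominates every other relation (conditions (1)--(3) of Definition~\ref{def:dominated} are vacuous when $\attr(R_i)=\emptyset$) and is the unique endogenous relation, so no triad-like, no strand, and the head join of non-dominated relations is a single relation and hence hierarchical—this is the paper's Lemma~\ref{lem:vacuum-structure}. If $Q$ is boolean, the triad-like structure is literally a triad. Hence $Q$ is either boolean with a triad or lies in ``others'', and in either case $\isptime(Q)=\false$, so NP-hardness follows from Theorem~\ref{thm:dichotomy} without any new mapping construction.
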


In the rest of this section, we explain the the three ``hard structures'' in Theorem~\ref{THM:DICHOTOMY-STRUCTURE} and give some intuition for why they make the \ourprob\ problem NP-hard. The proof of Theorem~\ref{THM:DICHOTOMY-STRUCTURE} is given in Appendix~\ref{appendix:dichotomy-structure}.

\subsection{Boolean CQ Revisited} 

As mentioned earlier, a complete characterization of boolean CQs for the \ourprob\ problem is known from previous work: 
\begin{theorem}[\cite{FreireGIM15}]
\label{thm:boolean-dichotomy}
	On a boolean CQ $Q$ without self-joins, the problem $\ourprob(Q,D,1)$ is poly-time solvable if there is no triad structure, and NP-hard otherwise.
\end{theorem}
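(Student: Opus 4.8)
The plan is to prove the two directions of this resilience dichotomy separately, following~\cite{FreireGIM15}; recall that $\ourprob(Q,D,1)$ is the resilience problem, i.e., find the fewest tuples of $D$ whose deletion destroys every \emph{witness} (every valuation of the body of $Q$ satisfied by $D$). Recall also that a \emph{triad} is a set of three atoms $\{S_1,S_2,S_3\}$ of $Q$ such that for every pair $S_i,S_j$ there is a path connecting them (a sequence of atoms in which consecutive atoms share a variable) none of whose connecting variables occurs in the third atom $S_k$, and that an atom $R_j$ is \emph{dominated} by $R_i$ if $\vars{R_i}\subseteq\vars{R_j}$. The first step, used in both directions, is the reduction that removing a dominated atom preserves poly-time solvability: in any witness the $R_j$-tuple determines the $R_i$-tuple, so deleting the latter removes a superset of the witnesses removed by deleting the former; hence some optimal deletion set avoids $R_j$, and (with a matching construction in the other direction) the complexity of resilience is the same for $Q$ and for $Q$ with $R_j$ removed. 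So I may assume $Q$ has no dominated atom; and since no triad can span two connected components and disconnected resilience is the minimum of the components' resiliences, I may also assume $Q$ is connected.

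For the tractability direction (no triad $\Rightarrow$ poly-time) the key step is a structural lemma: a connected self-join-free Boolean CQ with no dominated atom and no triad has a ``linear'' shape, so that its witnesses are exactly the source-to-sink paths of a layered, polynomially sized network built from $D$; resilience then equals a minimum vertex cut in that network, computable by max-flow in polynomial time. I would prove this by induction on the number of atoms, peeling off an atom (or variable) at a time and arguing both that triad-freeness is preserved and that it is precisely what keeps the network faithful --- a triad would introduce three mutually ``independent'' directions of witnesses that no single cheap cut can simultaneously block, which is exactly the obstruction exploited in the hardness direction.

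For the hardness direction (triad $\Rightarrow$ NP-hard) I would first reduce an arbitrary triad to the triangle query $Q_{\textup{tri}}:-R_1(A,B),R_2(B,C),R_3(C,A)$: given a triad $\{S_1,S_2,S_3\}$ together with its three ``private'' connecting paths, translate an instance of $Q_{\textup{tri}}$ into one of $Q$ by setting every attribute not lying on those paths to a single constant (or to values local to its atom), so those atoms impose no constraint, and planting tuples along the three paths so that witnesses of $Q$ correspond one-to-one with witnesses of $Q_{\textup{tri}}$; the defining property of a triad --- that each connecting path avoids the variables of the opposite atom --- is exactly what lets the three coordinates of a triangle-witness be chosen independently. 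It then remains to show that resilience of $Q_{\textup{tri}}$ is NP-hard, i.e. that deleting the fewest edges to make a tripartite graph triangle-free is NP-hard; I would give a reduction from an NP-hard problem such as $3$-SAT, realizing variable and clause gadgets as small families of triangles.

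The main obstacle is the tractability direction: establishing the structural lemma requires a careful case analysis of how variables can co-occur across atoms once dominated atoms have been removed, and one must prove the flow reduction is \emph{exact} --- its optimum equals the true resilience value --- which is the single place where triad-freeness is used in an essential way and is the technical core of the argument. The hardness direction is, by comparison, a routine gadget reduction once the triangle case has been isolated.
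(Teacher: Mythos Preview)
Your proposal is essentially the approach of~\cite{FreireGIM15}, which the present paper does not reprove but merely cites (see the algorithmic summary in Section~\ref{SEC:ALGORITHMS}, \bool): triad-free self-join-free Boolean CQs are rewritten into an equivalent \emph{linear} query, whose witnesses are source--sink paths in a layered network, and resilience is a minimum cut. So on the tractability side you are on target; the only thing to sharpen is your terminology---what you call ``dominated'' is what this paper calls \emph{exogenous} (Definition~\ref{def:triad} requires the triad relations to be endogenous), and the paper reserves ``dominated'' (Definitions~\ref{def:dominated-full} and~\ref{def:dominated}) for a stricter notion used in the non-Boolean setting.

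On the hardness side, one caution: you propose to funnel every triad through the single triangle query $Q_\triangle$. In~\cite{FreireGIM15} the base hard queries are \emph{both} $Q_\triangle:-R_1(A,B),R_2(B,C),R_3(C,A)$ and the tripod $Q_T:-R_1(A,B,C),R_2(A),R_3(B),R_4(C)$, and the general reduction from a triad embeds the instance directly along the three private paths rather than first collapsing to $Q_\triangle$. Your gadget idea (constant-fill the non-path attributes, plant tuples along the three paths) is the right mechanism, but be aware that depending on whether the three paths meet in a common atom or only pairwise, the natural target is $Q_T$ or $Q_\triangle$ respectively; forcing everything through $Q_\triangle$ alone would require an extra argument. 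Apart from this, your plan matches the cited proof.
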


To explain this result, we introduce some new terminology. In a CQ $Q$, a relation $R_j \in \rel(Q)$ is {\em exogenous} if there exists another relation $R_i  \neq R_j \in \rel(Q)$ such that $\attr(R_i) \subsetneq \attr(R_j)$, and {\em endogenous} otherwise. If there is more than one relation defined on the same set of attributes, we just consider any one of them as {\em endogenous} and the remaining ones as {\em exogenous}. For example, in the boolean CQ $Q:- R_1(A), R_2(A,B), R_3(B,C), R_4(B,C), R_5(B,C)$, there are two endogenous relations: $R_1$ and any one of $R_3$, $R_4$, $R_5$. Next, we define a {\em path} between a pair of relations $R_i, R_j \in \rel(Q)$ as a path between any pair of attributes $A,B$ for $A \in \attr(R_i)$ and $B \in \attr(R_j)$. This brings us to the definition of the {\em triad} structure:

\begin{definition}[triad]
	\label{def:triad}
	A triad is a triple of endogenous relations $R_1, R_2, R_3$ such that for each pair of relations, say $R_1, R_2$, there is a path from $R_1$ to $R_2$ only using any attributes in $\attr(Q) - \attr(R_3)$. 
	\end{definition}

Two examples of boolean CQs containing a triad structure are $Q_\triangle:- R_1(A,B), R_2(B,C), R_3(C,A)$ and $Q_T:- R_1(A,B,C)$, $R_2(A)$, $R_3(B)$, $R_4(C)$, on which the \ourprob\ problem is NP-hard.

\subsection{Hard Structures for General CQs} 

A natural question for general CQs is how the existence of output attributes changes the hardness of \ourprob\ problem. We will explore this question 
starting with three hard structures. 

\subsubsection{Triad-like}

We observe that adding output attributes to a hard boolean CQ maintains the NP-hardness of the \ourprob\ problem. For example, the CQ $Q(E,F,G):-R_1(A,B,E), R_2(B, C,F), R_3(C,A,G)$ is NP-hard (since \isptime\ returns \false), which contains the $Q_\triangle$. \revc{We extend the notion of triad to capture this class of hard queries:} 
\begin{definition}[triad-like]
	\label{def:triad-like}
	A triad-like structure is a triple of endogenous relations $R_1, R_2, R_3$ such that for each pair of relations, say $R_1, R_2$, there is a path from $R_1$ to $R_2$ only using attributes in $\attr(Q) - (\head(Q) \cup \attr(R_3))$. 
\end{definition}
This takes care of our first case: if there is a triad-like structure (in the non-output attributes), the CQ is NP-hard.

\subsubsection{Non-hierarchical Join}
The situation becomes more complicated when we add output attributes to a poly-time solvable boolean CQ. For example, on a boolean CQ $Q:- R_1(C,E), R_2(E,F), R_3(F,H)$, adding a universal attribute $A$ leads to a poly-time solvable query $Q(A):- R_1(A,C,E), R_2(A,E,F), R_3(A,F,H)$, but adding attributes $A,B$ selectively to some of the relations (e.g., $Q(A,B):- R_1(A,C,E), R_2(A,B,E,F), R_3(B,F,H)$) can result in an NP-hard query. So, our goal is to understand how the addition of output attributes changes the complexity of the \ourprob\ problem. For simplicity, the {\em head join} for a CQ $Q$ denotes the residual query after removing all non-output attributes from all relations in $Q$. 
We start with the class of full CQs, i.e., without non-output attributes. A nice connection between {\em hierarchical join} and our previously defined procedure \isptime\ can be observed. 

\begin{definition}[Hierarchical Join]
	A full CQ $Q$ is hierarchical if for each pair of attributes $A,B \in \attr(Q)$, $\rel(A) \subseteq \rel(B)$, $\rel(B) \subseteq \rel(A)$, or $\rel(A) \cap \rel(B) = \emptyset$, and non-hierarchical otherwise.
\end{definition}

\begin{figure}
	\centering
	\includegraphics[scale=0.8]{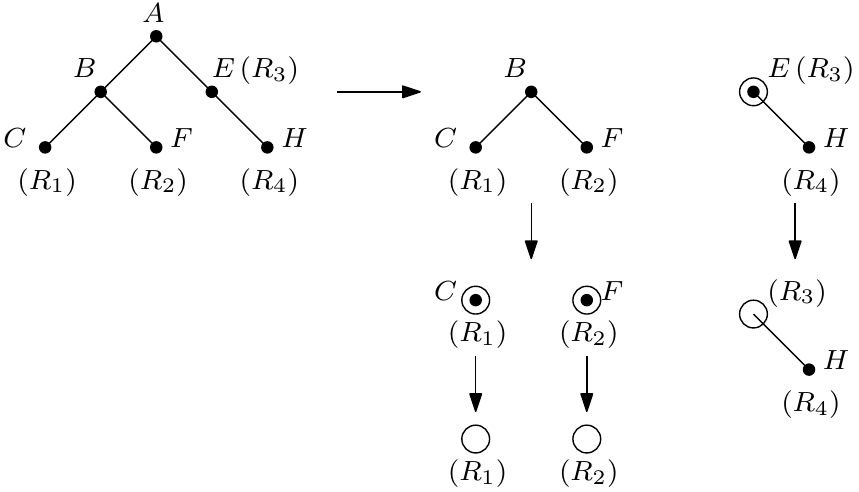}
	\caption{An example of hierarchical join $Q(A,B,C,E,F,H):- R_1(A,B,C), R_2(A,B,F),$ $R_3(A,E),$ $R_4(A,E,H)$, and 
	an illustration of applying procedure \isptime\ on it.}
	\label{fig:hierarchical}
\end{figure}

Note that a hierarchical CQ can be organized into a tree structure, where each relation is a root-to-node path. An example is given in Figure~\ref{fig:hierarchical}. Moreover, each relation ends up vacuum by alternately applying the two simplification steps in \isptime\ on this tree. In this way, if $Q$ is hierarchical, $\isptime(Q)$ always returns \true. However, the converse is not necessarily true. For example, $Q(A, B, E):-R_1(A,E), R_2(A,B,E), R_3(B,E), R_4(E)$ is non-hierarchical but $\isptime(Q)$ returns \true\ (after removing the universal attribute $E$, relation $R_4$ becomes vacuum).  We focus on non-hierarchical 
CQs in the rest of this discussion.

The previous result on boolean CQs only considers endogenous relations. Unfortunately, this is insufficient for a full CQ in general; for example, removing the exogenous relation $R_2$ would make $\pathquery(A,B):-R_1(A), R_2(A,B), R_3(B)$ poly-time solvable. So, we need a more fine-grained notion than exogenous/endogenous relations in characterizing the complexity of non-boolean CQs. 

\begin{definition}[Dominated Relation in Full CQs]
	\label{def:dominated-full}
	In a full CQ $Q$, relation $R_j$ is dominated by relation $R_i$ if
	(1) $\attr(R_i) \subseteq \attr(R_j)$; \revm{and} (2) for any relation $R_k$ with $\attr(R_i) - \attr(R_k) \neq \emptyset$, $\attr(R_j) \cap \attr(R_k) \subseteq \attr(R_i)$.
\end{definition}

We say that a relation is {\em dominated} if it is dominated by any other relation, and {\em non-dominated} otherwise. Note that a dominated relation must be exogenous, but all exogenous relations may not be dominated.  A structural dichotomy for full CQs based on dominated relations is given by: 

\begin{lemma}
    \label{lem:full}
	For a full CQ $Q$, the $\ourprob(Q,D,k)$ problem is NP-hard if and only if the non-dominated relations are non-hierarchical.
\end{lemma}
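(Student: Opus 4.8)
The plan is to deduce the lemma from the algorithmic characterization already in hand: I will show that, for every full CQ $Q$ with at least one relation, the non-dominated relations of $Q$ are hierarchical if and only if $\isptime(Q)$ returns \true. Since Theorem~\ref{thm:dichotomy} guarantees that $\isptime$ correctly decides poly-time solvability, this equivalence yields Lemma~\ref{lem:full} at once (and, since the head join of a full CQ is the query itself, it also settles the full-CQ case of Theorem~\ref{THM:DICHOTOMY-STRUCTURE}). Write $Q^{\mathrm{nd}}$ for the subquery of $Q$ on the non-dominated relations.

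First I would record a few easy structural facts about Definition~\ref{def:dominated-full}. (a) Domination is a strict partial order: it is transitive (chase the two occurrences of condition~(2)), and since attribute sets are distinct it forces $\attr(R_i)\subsetneq\attr(R_j)$, so every relation is non-dominated or dominated by a non-dominated relation. (b) If $R_j$ is dominated by $R_i$ then $\attr(R_i)\subseteq\attr(R_j)$, hence $R_i$ is in the same connected component of $G_Q$ as $R_j$; moreover, rerouting any $G_Q$-path through dominators (condition~(2) guarantees the rerouted edges exist) shows that two non-dominated relations are $G_Q$-connected iff they are $G_{Q^{\mathrm{nd}}}$-connected, so $Q$ is connected iff $Q^{\mathrm{nd}}$ is, and inside any connected component $Q_\ell$ of $Q$ the non-dominated relations are exactly one connected component of $Q^{\mathrm{nd}}$. (c) Removing a universal attribute from all relations preserves domination, non-domination, and the (non-)hierarchical property: the set-containment conditions are unaffected by deleting an attribute lying in every relation, and a universal attribute can never be one of the two attributes of a non-hierarchical witness (its $\rel$-set is the set of all relations). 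Similarly, passing to a connected subquery preserves these properties, and dropping a relation preserves hierarchy. Finally, if some relation $R_v$ has $\attr(R_v)$ contained in the set of universal attributes of $Q$ (so $R_v$ would become vacuum after the first step of $\isptime$), then $R_v$ dominates every other relation, so $Q^{\mathrm{nd}}=\{R_v\}$ is trivially hierarchical; the same holds when $Q$ is boolean (all relations vacuum).

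The key new fact to prove is: if a full CQ $Q$ with at least two relations is connected, has no universal attribute, and has no vacuum relation, then $Q^{\mathrm{nd}}$ is non-hierarchical. Suppose not. By (b), $Q^{\mathrm{nd}}$ is connected; a connected hierarchical full CQ has a ``root'' attribute $r$ contained in all of its relations, because the family $\{\rel(A)\}$ is laminar and the maximal set of a single tree equals $\rel(r)$ (if $Q^{\mathrm{nd}}$ is a single relation, take $r$ to be any of its attributes). By (a), $r\in\attr(R_i)\subseteq\attr(R_j)$ for every non-dominated $R_i$ and every $R_j$ it dominates, so $r$ lies in every relation of $Q$, contradicting the absence of a universal attribute; and if $Q^{\mathrm{nd}}$ is a single relation $R_i$, then $\attr(R_i)$ lies in every relation of $Q$, which is either empty (a vacuum relation) or a universal attribute, again a contradiction. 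With this in place I would finish by induction on the size of $Q$, following the recursion of $\isptime$: the base cases are ``$Q$ boolean or with a relation that becomes vacuum after the first step'' ($\isptime=\true$, and $Q^{\mathrm{nd}}$ hierarchical by the last sentence of the previous paragraph) and ``$Q$ in \emph{others}'' (connected, non-boolean, no universal attribute, no vacuum relation, so $\isptime=\false$ and $Q^{\mathrm{nd}}$ non-hierarchical by the key fact); the inductive step is either removing a universal attribute $A$ — where $(Q_{-A})^{\mathrm{nd}}$ is hierarchical iff $Q^{\mathrm{nd}}$ is, by (c) — or decomposing a disconnected $Q$ into $Q_1,\dots,Q_s$ — where $(Q_\ell)^{\mathrm{nd}}$ is the $\ell$-th component of $Q^{\mathrm{nd}}$ by (b), so $Q^{\mathrm{nd}}$ is hierarchical iff each $(Q_\ell)^{\mathrm{nd}}$ is, while $\isptime(Q)=\bigwedge_\ell \isptime(Q_\ell)$. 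Combining with Theorem~\ref{thm:dichotomy} finishes the proof.

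I expect the main obstacle to be the bookkeeping in facts (a)–(c): in particular, checking that a non-dominated relation cannot silently become dominated when a universal attribute is deleted, that the non-hierarchical witness genuinely survives both simplification steps (including the possible deduplication of relations), and that a connected hierarchical full CQ really does possess a universal ``root'' attribute. Everything else is a clean induction on the query size driven by the already-proved correctness of $\isptime$.
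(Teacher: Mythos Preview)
Your proposal is correct and follows essentially the same approach the paper uses (specialized to full CQs): the paper derives Lemma~\ref{lem:full} as the full-CQ instance of Theorem~\ref{THM:DICHOTOMY-STRUCTURE}, whose proof in Appendix~\ref{appendix:dichotomy-structure} proceeds exactly by showing that the two simplification steps of \isptime\ preserve the hard structures (your facts (b) and (c)), that a vacuum relation forces a single non-dominated relation (your vacuum base case), and that in the ``others'' case the head join of non-dominated relations is connected and hence non-hierarchical (your key fact, argued via the root attribute of a hierarchical join). Your write-up is, if anything, a bit more explicit than the paper's in justifying why the putative universal attribute of $Q^{\mathrm{nd}}$ must be universal in $Q$.
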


Note that full CQs do not have any non-output attributes. But, fortunately, the above hardness continues to hold even on adding output attributes. To make this formal, we need to extend the notion of dominated relations to general CQs. 

\begin{definition}[Dominated Relation in CQs]
	\label{def:dominated}
	In a CQ $Q$, relation $R_j$ is dominated by relation $R_i$ if (1) $\attr(R_i) \subseteq \attr(R_j)$; (2) for any relation $R_k$ with $\attr(R_i) - \attr(R_k)  \neq \emptyset$, $\attr(R_j) \cap \attr(R_k) \subseteq \attr(R_i) \cap \head(Q)$; (3) $\attr(R_i) \subseteq \head(Q)$ or $\head(Q) \subseteq \attr(R_i)$.
\end{definition}

If there is more than one relation defined on the same attributes, i.e., $\attr(R_i) = \attr(R_j)$,  then we just consider any one of them as {\em  non-dominated} and the remaining ones as {\em dominated}. We can now use this extended definition to claim our second hard case: if the head join of non-dominated relations is non-hierarchical, then the CQ is NP-hard. \revc{Note that these definitions of ``domination'' are different from~\cite{FreireGIM15}, as we need a more fine-grained characterization of exogenous relations for \ourprob.
Moreover, Lemma~\ref{lem:vacuum} can be easily interpreted as follows: If there is a vacuum relation $R_i$ in a CQ $Q$, then every remaining relation must be dominated by $R_i$, therefore $\ourprob(Q,D,k)$ is poly-time solvable by Theorem~\ref{THM:DICHOTOMY-STRUCTURE}. }

\subsubsection{Strand}

The remaining case is one where on the output attributes, the non-dominated relations are hierarchical {\em and} on the non-output attributes, there is no triad-like structure. These two conditions guarantee poly-time solvability for full and  boolean CQs respectively. But, interestingly, when appearing together in a general CQ, they no longer guarantee poly-time solvability.  For example, the CQ $Q(A,B,C):-$ $R_1(A,B,E), R_2(A,C,E)$ is NP-hard while both $Q(A,B,C):-$ $R_1(A,B), R_2(A,C)$ and $Q():-R_1(E), R_2(E)$ are poly-time solvable. To characterize this class of queries, we introduce our third hard structure that we call a {\em strand}:

\begin{definition}[strand]
	\label{def:pyramid}
	A strand is a pair of non-dominated relations $R_i, R_j \in \rel(Q)$ such that (1) $\head(Q) \cap \attr(R_i) \neq \head(Q) \cap \attr(R_j)$; (2) $(\attr(R_i) \cap \attr(R_j)) - \head(Q) \neq \emptyset$.
\end{definition}

The reason why the strand structure makes the \ourprob\ problem hard can be explained by the procedure \isptime. Consider any CQ with such a strand structure with  $R_i, R_j$. After applying two simplification steps, $R_i, R_j$ will be in the same connected subquery $Q_0$, since attributes in $(\attr(R_i) \cap \attr(R_j)) - \head(Q)$ are not universal and therefore couldn't have been removed by \isptime. Moreover, $Q_0$ is non-boolean, since $\attr(R_i)\cap \head(Q) \ne \attr(R_j)\cap \head(Q)$ and therefore, there is at least one non-universal output attribute. Next, we prove that there is no vacuum relation in $Q_0$. Suppose $R_\ell$ becomes vacuum in $Q_0$. Observe that $\attr(R_\ell) \subseteq \head(Q)$ and $\attr(R_\ell) \subseteq \attr(R_h)$ for every relation $R_h \in \attr(Q_0)$. Since $R_i$ is not dominated by $R_\ell$, there must exist another relation $R_k \in \rel(Q) - \{R_i, R_j\}$ such that $\attr(R_\ell) - \attr(R_k) \neq \emptyset$ and $(\attr(R_i) \cap \attr(R_k)) - \attr(R_\ell) \neq \emptyset$. Note that $R_k$ is not in $Q_0$; otherwise, $\attr(R_\ell) - \attr(R_k) = \emptyset$. In this case, $(\attr(R_i) \cap \attr(R_k)) - \attr(R_\ell) = \emptyset$, coming to a contradiction. Therefore, the $\isptime$ algorithm will go to ``others'', and return \false\ for $Q_0$, as well as for $Q$. This allows us to claim our third hard case: if a strand exists, then CQ is NP-hard.


\subsection{Sketch of Proof of Theorem~\ref{THM:DICHOTOMY-STRUCTURE}}
\label{SEC:EQUIVALENCE}

So far, we have defined three hard structures for general CQs, any one of which makes the \ourprob\ problem NP-hard. We now sketch the main ideas in the proof of Theorem~\ref{THM:DICHOTOMY-STRUCTURE}; the detailed proof is in Appendix. This proof uses  Theorem~\ref{thm:dichotomy} by mapping each of the NP-hard cases in Theorem~\ref{thm:dichotomy} to the existence of a hard structure as defined by Theorem~\ref{THM:DICHOTOMY-STRUCTURE}, and vice-versa. But, interestingly, this mapping is not one-one in the sense that multiple cases in the procedural dichotomy established by Theorem~\ref{thm:dichotomy} map to same case in the structural dichotomy of Theorem~\ref{THM:DICHOTOMY-STRUCTURE}, and vice-versa. This lends further credence to our assertion that the procedural dichotomy of the previous section is not sufficient by itself to explain the structural reasons behind the NP-hardness or poly-time solvability of the \ourprob\ problem for individual CQs.

We first point out that the two simplification steps in the \isptime\ procedure preserve the existence of hard structures.
\begin{lemma}
\label{LEM:COMMON-HARD-STRUCTURE}
Let $A$ be a universal attribute in $Q$. Then, there is a hard structure in $Q$ if and only if there is a hard structure in $Q_{-A}$. 
\end{lemma}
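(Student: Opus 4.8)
The plan is to prove both directions by showing that each of the three hard structures (triad-like, strand, and a non-hierarchical head join of non-dominated relations) is preserved under addition/removal of a universal attribute $A$. First I would record the elementary fact that, for a universal attribute $A$, $\attr(R_i)$ and $\attr(R_i) \setminus \{A\}$ differ by exactly one element in every relation, $A \in \head(Q)$, and $\head(Q_{-A}) = \head(Q) \setminus \{A\}$; moreover a path in $Q$ between two attributes gives a path in $Q_{-A}$ between the same attributes (if neither is $A$) and conversely, since removing $A$ from the shared attribute sets of consecutive relations does not disconnect them --- here one should be slightly careful that two relations sharing \emph{only} $A$ is impossible, because $A$ is in \emph{all} relations, so if $\attr(R_i) \cap \attr(R_j) = \{A\}$ then $Q$ would have been handled differently; in fact since $A$ is universal, $\attr(R_i) \cap \attr(R_j) \supseteq \{A\}$ always, but the \emph{connectivity graph} $G_Q$ is unaffected because every pair of relations is already adjacent in $G_Q$. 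Actually the cleaner statement is: $G_Q = G_{Q_{-A}}$ is the complete graph, so connectivity is trivially preserved; what matters is paths through specific attribute sets, which is what the hard-structure definitions use.

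The key steps, in order: (i) \emph{Triad-like preservation.} A triad-like structure in $Q$ is a triple $R_1,R_2,R_3$ of endogenous relations with, for each pair, a path avoiding $\head(Q)\cup\attr(R_3)$. Since $A\in\head(Q)$, the forbidden set $\head(Q)\cup\attr(R_3)$ already contains $A$, so the very same paths are legal paths in $Q_{-A}$ avoiding $\head(Q_{-A})\cup\attr(R_3\setminus\{A\})$, and conversely; I also need that ``endogenous'' is preserved, which holds because $\attr(R_i)\subsetneq\attr(R_j)$ in $Q$ iff $\attr(R_i)\setminus\{A\}\subsetneq\attr(R_j)\setminus\{A\}$ (removing a common element from both sides). (ii) \emph{Dominated/non-dominated preservation.} I must check that $R_j$ is dominated by $R_i$ in $Q$ iff in $Q_{-A}$, going through the three clauses of Definition~\ref{def:dominated}: clause (1) is the subset condition just discussed; clause (3), $\attr(R_i)\subseteq\head(Q)$ or $\head(Q)\subseteq\attr(R_i)$, is preserved since $A$ lies in both sides of each inclusion; clause (2) involves $\attr(R_j)\cap\attr(R_k)\subseteq\attr(R_i)\cap\head(Q)$, and again $A$ sits on both sides so the inclusion is equivalent after removing $A$. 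Hence the set of non-dominated relations is ``the same'' under the obvious correspondence. (iii) \emph{Non-hierarchical head join preservation.} The head join of the non-dominated relations of $Q$ has attribute set $\head(Q)$; removing the universal attribute $A$ removes a vertex that appears in \emph{every} relation of the head join, so for any other attribute pair $B,C$ the sets $\rel(B),\rel(C)$ inside the head join are unchanged, and $\rel(A)$ = all relations is comparable to everything; therefore the hierarchy condition is unaffected, i.e.\ the head join of non-dominated relations is non-hierarchical for $Q$ iff it is for $Q_{-A}$. (iv) \emph{Strand preservation.} A strand is a pair of non-dominated $R_i,R_j$ with $\head(Q)\cap\attr(R_i)\neq\head(Q)\cap\attr(R_j)$ and $(\attr(R_i)\cap\attr(R_j))\setminus\head(Q)\neq\emptyset$; since $A\in\head(Q)$, both conditions are literally unchanged when passing to $Q_{-A}$ (removing $A$ from both $\head$ and the $\attr$'s cancels), and non-dominatedness is preserved by step (ii). Combining (i)--(iv) in both directions gives the lemma.

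I expect the main obstacle to be step (ii): verifying that the (rather technical) definition of dominated relation is invariant under removing a universal attribute, and in particular making sure the correspondence $R_i \mapsto R_i\setminus\{A\}$ between relations of $Q$ and of $Q_{-A}$ is clean --- one must handle the degenerate possibility that after removing $A$ two previously-distinct relations collapse to the same attribute set (which can only happen if they differed exactly in $A$, i.e.\ $\attr(R_i)=\attr(R_j)\cup\{A\}$ is impossible since $A\in$ both; so actually $\attr(R_i)\setminus\{A\}=\attr(R_j)\setminus\{A\}$ forces $\attr(R_i)=\attr(R_j)$, and the paper already assumes distinct attribute sets, so no collapse occurs) --- and also the possibility that $R_i$ becomes vacuum in $Q_{-A}$ (then by Lemma~\ref{lem:vacuum}/the remark, every relation is dominated, so there is no hard structure on either side, consistent with the claim). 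Once these edge cases are dispatched, each of the four preservation statements is a short symmetric argument of the form ``$A$ appears on both sides of every relevant set relation, hence the relation is equivalent after deleting $A$,'' so the remainder is routine.
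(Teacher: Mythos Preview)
Your proposal is correct and follows essentially the same approach as the paper: both proofs verify that each of the three hard structures (triad-like, strand, non-hierarchical head join of non-dominated relations) is preserved under removing a universal attribute, relying on the key observation that endogenous/non-dominated status is invariant because $A$ lies in every relevant set on both sides of each defining inclusion. Your treatment is in fact more careful than the paper's, which dispatches the preservation of non-dominated relations and of the hierarchical property with ``can be easily checked by definition'' and does not explicitly address the edge cases (collapsing attribute sets, vacuum relations) that you handle.
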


\begin{lemma}
\label{LEM:DECOMPOSE-HARD-STRUCTURE}
Let $Q_1, Q_2, \cdots, Q_s$ be the connected subqueries of $Q$. Then, there is a hard structure in $Q$ if and only if there is a hard structure in $Q_i$ for some $i \in \{1,2,\cdots,s\}$. 
\end{lemma}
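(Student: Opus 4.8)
The plan is to prove both directions purely syntactically, by a case analysis on which of the three hard structures (triad-like, strand, or non-hierarchical head join of non-dominated relations) is present. We cannot shortcut this through Theorem~\ref{THM:DICHOTOMY-STRUCTURE} and Lemma~\ref{LEM:DECOMPOSE}, since that theorem's proof invokes the present lemma. The key structural facts to exploit are that the connected subqueries partition both the relations and the attributes of $Q$: relations in distinct $Q_i$'s have pairwise disjoint attribute sets, so (i) any ``path'' (a sequence of relations with consecutive ones sharing an attribute) lies entirely inside one $Q_i$; (ii) $\head(Q_i)=\head(Q)\cap\attr(Q_i)$ and $\attr(Q_i)\subseteq\attr(Q)$, so for a relation $R\in\rel(Q_i)$ one has $\attr(R)\cap\head(Q)=\attr(R)\cap\head(Q_i)$; and (iii) if there are no vacuum relations (which, as the remark after Definition~\ref{def:dominated} and Lemma~\ref{lem:vacuum} shows, is the only setting in which \isptime\ performs the decomposition step, and which we therefore assume), then a relation with a strictly smaller attribute set than $R$ must live in $R$'s component, so ``endogenous'' is unchanged by decomposition.

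For the ``only if'' direction, assume $Q$ has a hard structure. A triad-like structure on $R_1,R_2,R_3$ with its connecting paths lies, paths included, inside a single $Q_i$; the three relations are still endogenous there, and since $\head(Q_i)\subseteq\head(Q)$ the paths still avoid $\head(Q_i)\cup\attr(R_\cdot)$, so $Q_i$ has a triad-like structure. If $Q$ has a strand $R_i,R_j$, then $(\attr(R_i)\cap\attr(R_j))\setminus\head(Q)\neq\emptyset$ forces $R_i,R_j$ into the same $Q_\ell$, and by (ii) the two strand conditions read identically over $Q$ and over $Q_\ell$; the only remaining point is that $R_i,R_j$ stay non-dominated in $Q_\ell$. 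Comparing Definition~\ref{def:dominated} over $Q$ and over $Q_\ell$: conditions (1) and (2) are unchanged (for $R_k$ outside $Q_\ell$ the intersection with $\attr(R_i)$ is empty, so (2) is automatic), while condition (3) over $Q$ implies it over $Q_\ell$; hence a relation non-dominated over $Q$ can only become dominated over $Q_\ell$ through a relation $R'$ with $\head(Q_\ell)\subseteq\attr(R')\subseteq\attr(R_i)$ and $\head(Q)\not\subseteq\attr(R')$. But then condition (2) for $R'$ dominating $R_i$ applied with $R_k=R_j$ forces the shared non-output attribute of $R_i$ and $R_j$ into $\attr(R')\cap\head(Q_\ell)$ (unless $\attr(R')\subseteq\attr(R_j)$, in which case $\head(Q_\ell)\subseteq\attr(R_i)\cap\attr(R_j)$ contradicts the strand's first condition) — a contradiction. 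The non-hierarchy case is analogous: the three relations witnessing that the head join of non-dominated relations is non-hierarchical (one with output attribute $A$ but not $B$, one with $B$ but not $A$, one with both) lie in a common $Q_\ell$ with $A,B\in\head(Q_\ell)$, they stay non-dominated there (a would-be new dominator $R'$ of the relation lacking $B$ would satisfy $B\in\head(Q_\ell)\subseteq\attr(R')\subseteq\attr(R_1)$, impossible), so $A,B$ witness non-hierarchy in $Q_\ell$.

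For the ``if'' direction, assume some $Q_i$ has a hard structure; the claim is that the very same relations and attributes witness it in $Q$. Endogenous relations of $Q_i$ remain endogenous in $Q$ (no vacuum relations), and a path of $Q_i$ is a path of $Q$, so a triad-like structure of $Q_i$ transfers to $Q$ because enlarging the head from $\head(Q_i)$ to $\head(Q)$ only shrinks the ``allowed'' attribute set — wait, that is the wrong direction for triad-like; here one instead checks directly that the path avoids $\head(Q)\cup\attr(R_\cdot)$, which holds since the path lies in $Q_i$ and... so this needs the path to use no attribute of $\head(Q)\setminus\head(Q_i)$, which is automatic as those attributes are not in $\attr(Q_i)$. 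For a strand or non-hierarchy witness, one only needs the witnessing relations to stay non-dominated when passing from $Q_i$ to $Q$: a new dominator $R'\in\rel(Q)$ must share an attribute with the dominated relation (condition (1), no vacuum relations), hence $R'\in\rel(Q_i)$, and then conditions (1)--(2) are unchanged while condition (3) over $Q$ implies it over $Q_i$, so $R'$ would already dominate in $Q_i$, a contradiction; the strand/non-hierarchy conditions themselves are preserved by (ii). I expect the main obstacle to be exactly this interplay between decomposition and Definition~\ref{def:dominated}: because ``dominated'' is defined relative to $\head(Q)$, which shrinks when a component is split off, spurious new dominations could in principle arise, and the heart of the argument is to rule them out in each case using the structure's defining feature — a shared non-output attribute for strands, a missing output attribute for the non-hierarchy witness — together with condition (2) of the domination definition.
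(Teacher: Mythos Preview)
Your approach is essentially the paper's: a case analysis on the three hard structures, showing each is confined to one connected subquery and that the defining conditions read identically there. The paper compresses the domination analysis into a bare assertion that ``the set of non-dominated (resp.\ endogenous) relations in $Q$ is just the disjoint union of non-dominated (resp.\ endogenous) relations in each subquery,'' whereas you correctly identify that condition~(3) of Definition~\ref{def:dominated} is sensitive to $\head(Q)$ versus $\head(Q_\ell)$ and work around this using the specific structure at hand. Your explicit ``no vacuum relations'' assumption is also appropriate (and implicitly needed by the paper too).

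There is one small gap in your non-hierarchy case. You argue that $R_1$ (lacking $B$) and, symmetrically, $R_3$ (lacking $A$) stay non-dominated in $Q_\ell$, but you do not handle $R_2$, the relation containing both $A$ and $B$. Your argument ``$B\in\head(Q_\ell)\subseteq\attr(R')\subseteq\attr(R_1)$'' does not apply to $R_2$, since $R_2$ contains both $A$ and $B$. The fix is easy: if $R_2$ becomes dominated in $Q_\ell$ by some $R'$, then (as you observed) $\head(Q_\ell)\subseteq\attr(R')$, so $A,B\in\attr(R')$; now replace $R_2$ by $R'$ and iterate. Each step strictly shrinks the attribute set, so the process terminates at a non-dominated relation in $Q_\ell$ still containing both $A$ and $B$, which together with $R_1,R_3$ witnesses non-hierarchy in $Q_\ell$.
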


When neither of the simplification steps can be applied, \isptime$(Q)$\ ends up with three cases. If there is a vacuum relation in $Q$, say $R_i$, $\isptime(Q)$ returns \true. In this case, $Q$ does not contain any hard structure as $R_i$ is the only endogenous and non-dominated relation. If $Q$ is boolean, $\isptime(Q)$ returns \false\ if and only if it contains a triad. Then, we are left with the case when $\isptime(Q)$ goes into the ``Others'' bucket. Each core query shown in Section~\ref{sec:hardest} contains hard structure; more specifically, the head join of non-dominated relations in $\pathquery$ is non-hierarchical, and both $\swingquery$ and $\seesawquery$ contain a strand. In general, we can show the existence of hard structures for $Q$ falling into one of the three cases in Figure~\ref{fig:others}. The correspondence between different cases of the procedural and structural characterizations are shown in Figure~\ref{fig:mapping-hard-structure}.

\begin{figure}
    \centering
    \includegraphics[scale=0.8]{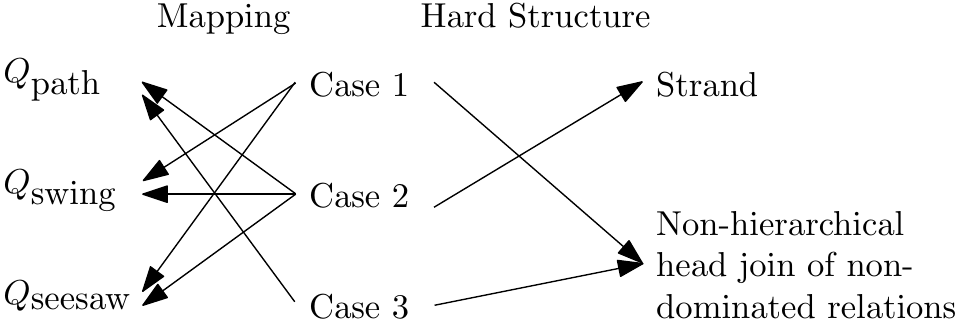}
    \caption{\revm{Correspondence between the three cases of CQs on which \isptime\ falling into ``other'' bucket in Figure~\ref{fig:others}, the core query it maps to (the left) and the hard structure it contains (the right).}}
    \label{fig:mapping-hard-structure}
\end{figure}

\revm{\section{Approximations}
\label{SEC:APPROX}
In this section, we discuss approximations for 
the $\ourprob(Q, D,k)$ problem when it is NP-hard.

\subsection{Full CQs}\label{sec:approx-full-CQ}

We first consider full CQs, on which \ourprob\ problem can be related to the \emph{Partial Set Cover problem} (PSC).

\begin{definition}
\label{def:kPSC}
    Given a set of elements $\mathcal{U}$, a family of subsets $\mathcal{S}\subseteq 2^U$, 
    and a positive integer $k'$, the goal of the Partial Set Cover problem is to pick a minimum collection of sets from $\mathcal{S}$ that covers at least $k'$ elements in $\mathcal{U}$.  
\end{definition}

Observe that $\ourprob(Q, D, k)$, where the goal is to pick the smallest number of input tuples that intervene on at least $k$ output tuples, can be modeled as a PSC problem as follows. Sets correspond to input tuples from relations in the body of $Q$ and elements to output tuples in $Q(D)$. The set corresponding to an input tuple comprises all elements corresponding to output tuples that are deleted on the deletion of the input tuple. Also, $k'=k$. 
Additionally, if there are $p$ relations in $Q$, then every element belongs to at most $p$ sets.
It is known that the PSC problem admits greedy and primal-dual algorithms with approximation factors of $O(\log k)$ and $p$ respectively~\cite{GKS04}. Hence, we get the same results for the $\ourprob$ problem.

\begin{theorem}\label{THM:FULL-APPROXIMATE}
    For a full CQ $Q$ with $p$ relations, any instance $D$ and integer $k$, $\ourprob(Q, k, D)$ admits $O(\log k)$ and $p$-approximations.
\end{theorem}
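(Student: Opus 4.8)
The plan is to realize $\ourprob(Q,D,k)$ for a full CQ as an instance of the Partial Set Cover problem (Definition~\ref{def:kPSC}) in an exactly cost-preserving way, and then quote the greedy and primal-dual guarantees for PSC from~\cite{GKS04}. Concretely, take the universe $\mathcal{U}$ to be the set of output tuples $Q(D)$, and for every input tuple $t$ (in any relation $R_i$ of $Q$) introduce the set $S_t \subseteq \mathcal{U}$ consisting of exactly those output tuples of $Q(D)$ whose witnessing join uses $t$ --- equivalently, the output tuples that vanish when $t$ is deleted from $D$. Put $k' = k$. Deleting a set $T$ of input tuples removes precisely the output tuples in $\bigcup_{t \in T} S_t$, so a collection of $c$ input tuples removes at least $k$ output tuples if and only if the corresponding $c$ sets cover at least $k$ elements of $\mathcal{U}$. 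This gives a bijection between feasible \ourprob\ solutions and feasible PSC solutions of equal cardinality; hence the two optima coincide and any $\alpha$-approximate PSC solution pulls back to an $\alpha$-approximate \ourprob\ solution.

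Next I would establish the frequency bound on which the primal-dual guarantee relies. Because $Q$ is full (no projection), each output tuple $o \in Q(D)$ is the join of a \emph{unique} tuple $\pi_{\attr(R_i)} o \in R_i$ from each of the $p$ relations, and deleting any input tuple other than these $p$ leaves $o$ intact; thus $o$ lies in at most $p$ of the sets $S_t$. So the constructed PSC instance has element-frequency at most $p$. Plugging this into~\cite{GKS04} gives a $p$-approximation from the primal-dual algorithm and an $O(\log k)$-approximation from greedy (the logarithm being in the coverage target $k$, not the size of $\mathcal{U}$), and by the correspondence above both transfer back to $\ourprob(Q,D,k)$. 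For completeness I would also observe that the reduction is polynomial in data complexity: $|Q(D)| \le |D|^{p}$ and there are at most $|D|$ candidate input tuples, so for a fixed query the PSC instance --- and the sets $S_t$, obtained by one provenance-tracking evaluation of $Q$ --- have size polynomial in $|D|$.

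There is no deep obstacle here; the one point that must be argued with care is exactly why the frequency parameter equals the number of relations $p$, and this is precisely where fullness of the CQ is used. For a CQ with projection a single output tuple can be witnessed by many (indeed, up to exponentially many) distinct combinations of input tuples, so its frequency is not bounded by $p$ and this route to a $p$-approximation breaks --- which is consistent with the inapproximability that holds once projection is allowed. I would therefore state the dependence on fullness explicitly so that the argument does not over-claim.
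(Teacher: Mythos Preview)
Your proposal is correct and follows essentially the same approach as the paper: reduce $\ourprob(Q,D,k)$ to Partial Set Cover with elements being output tuples and sets being input tuples, observe that fullness of $Q$ bounds the element frequency by $p$, and invoke the greedy and primal--dual guarantees of~\cite{GKS04}. Your write-up is, if anything, slightly more careful in spelling out the bijection and in isolating exactly where fullness is used.
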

\begin{proof}
	We prove that the reduction preserves the approximation guarantee in two steps: 1) given an instance of $\ourprob(Q, k, D)$, how to construct an instance of $k'$-PSC, and 2) given a solution to $k'$-PSC, how to recover a solution to $\ourprob(Q, k, D)$.
	
	Given the full CQ $Q$ containing $p$ relations in its body, namely $R_1, R_2, \cdots, R_p$, we create a set per input tuple in the $p$ relations, and an element per output tuple in $Q(D)$. Each set contains elements that correspond to the output tuples resulting from the join between the associated input tuple and tuples from other relations in $Q$. It is well-known that the natural join on $R_1, R_2, \cdots, R_p$ can be computed in poly-time. Moreover, exactly one tuple in each of the $p$ relations participates in the join operation that produces a particular output tuple. Therefore, each element in the $k'$-PSC instance belongs to exactly $p$ sets. As a result, the size of the $k'$-PSC instance that we create is polynomial in the data complexity of $\ourprob(Q, k, D)$. Moreover, there is a one-on-one correspondence between instances of the two problems.
	
	Lastly, given a $p$-approximate solution to $k'$-PSC, we recover a solution to $\ourprob(Q, D, k)$ by picking the tuples associated with the sets in the solution, say $I$. Observe that the sets in $I$ cover $k' = k$ elements in $U$. Thus, removing the corresponding input tuples from $\ourprob(Q, D, k)$ will intervene on at least $k$ output tuples.

	Note that this implies that if the query has constant size, i.e., $p$ is a constant, full CQs admit a constant-factor approximation for the \ourprob\ problem.
\end{proof}

This implies that if the query has constant size, i.e., $p$ is a constant, full CQs admit a constant-factor approximation for the \ourprob\ problem.

\subsection{Inapproximability of General CQs}

The situation, however, is quite different for general CQs. We first observe that obtaining even sub-polynomial approximations for the \ourprob\ problem in general is unlikely. In particular, on $\swingquery(A): R_2(A,B), \\R_3(B)$, which is the core hard query in Section~\ref{sec:hardest}, we show the following hardness:

\begin{lemma}
\label{lem:swing-hard-approximate}
Under some mild cryptographic assumptions, the $\ourprob(\swingquery, D, k)$ problem with $|D| = n$ is hard to approximate within $\Omega(n^{\epsilon})$ factor for some constant $\epsilon>0$.
\end{lemma}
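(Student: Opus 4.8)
The plan is to give an approximation-preserving reduction from the \emph{Minimum $k$-Union} problem (pick $k$ sets from a given family minimizing the size of their union; equivalently, one-sided small-set vertex expansion in a bipartite graph), whose $\Omega(N^{\epsilon})$-inapproximability in its own instance size $N$ under standard average-case/cryptographic assumptions is known, and then to bookkeep the instance sizes so that the bound reads $\Omega(n^{\epsilon})$ in terms of $|D|=n$.

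First I would reformulate $\ourprob(\swingquery,D,k)$ combinatorially. Since $\swingquery(A):-R_2(A,B),R_3(B)$ is one of the core queries, the domination argument of Appendix~\ref{appendix:endogenous} applies, and --- crucially, since it is a pointwise argument --- it shows that \emph{every} candidate solution, not just an optimal one, can be assumed to delete tuples only from $R_3$: deleting $(a,b)\in R_2$ removes no more join contributions than deleting $(b)\in R_3$ and never costs less. A solution is then a set $T$ of $B$-values appearing in $R_3$, and an output $a$ disappears from $Q(D)$ exactly when its support $N(a)=\{\,b:(a,b)\in R_2,\ (b)\in R_3\,\}$ satisfies $N(a)\subseteq T$. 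Hence $\ourprob(\swingquery,D,k)$ asks for a smallest $T$ containing at least $k$ of the sets $N(a)$; equivalently, choosing which $k$ outputs to kill, it asks for $k$ of the sets $\{N(a)\}_a$ whose union is smallest --- precisely Minimum $k$-Union.

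Next I would make the reduction explicit: from a Minimum $k$-Union instance with ground set $\mathcal{U}$, nonempty sets $S_1,\dots,S_m$ and target $k'$, put $R_3=\{(b):b\in\mathcal{U}\}$, $R_2=\{(a_i,b):1\le i\le m,\ b\in S_i\}$, and $k=k'$. Then $Q(D)=\{a_1,\dots,a_m\}$, $N(a_i)=S_i$, and $|D|=\sum_i|S_i|+|\mathcal{U}|$ is polynomial in the input size. Any feasible $\ourprob$-solution of cost $c$ can first be normalized, without increasing cost, to touch only $R_3$ (the domination step), and then the union of the supports of any $k$ of the killed outputs has size $\le c$, giving a feasible Minimum $k$-Union solution of cost $\le c$; conversely, the union of any $k$ chosen sets, read as $R_3$-tuples, is a feasible $\ourprob$-solution of equal cost. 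So the two optima coincide and a $\rho$-approximation for $\ourprob(\swingquery,\cdot,\cdot)$ yields a $\rho$-approximation for Minimum $k$-Union.

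Finally, since the hard Minimum $k$-Union instances have $|\mathcal{U}|$ and $\sum_i|S_i|$ polynomially bounded in $m$, we get $n=|D|\le m^{C}$ for a constant $C$, so an $m^{\epsilon'}$ lower bound transfers to an $n^{\epsilon'/C}=n^{\epsilon}$ lower bound, proving the lemma. I expect the main obstacle to be not the reduction --- which is essentially syntactic --- but locating a source problem whose $N^{\epsilon}$-hardness holds under a genuinely ``mild'' assumption in exactly the regime the reduction produces (arbitrary input $k$, unbounded set sizes), and verifying that the polynomial size blow-up degrades the exponent only by a constant; one must also be careful that the $R_2$-domination step is invoked for approximate rather than merely optimal solutions.
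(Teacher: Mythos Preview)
Your proposal is correct and follows essentially the same route as the paper: both reduce from Minimum $k$-Union (which the paper calls $k$-Minimum Coverage, \kmc), observe that the reduction is approximation-preserving, and then invoke the known $\Omega(N^{\epsilon})$-inapproximability of that problem. You are in fact more explicit than the paper on two points the paper leaves implicit --- that the endogenous/domination replacement works for \emph{any} feasible solution (not just optimal ones), which is what makes the reduction approximation-preserving, and the polynomial size bookkeeping that converts the exponent from the source instance size to $|D|$.
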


Recall that we established NP-hardness of $\ourprob(\swingquery, D, k)$ via a reduction from the {\em k-minimum coverage} (\kmc) problem.
As shown in Appendix~\ref{APPENDIX:HARD-CORE},
his reduction is also approximation-preserving, which implies the above lemma via known hardness results for the \kmc\ problem~\cite{applebaum2013pseudorandom,chlamtavc2017minimizing, chlamtac2018densest}.
While this rules out the possibility of approximation algorithms in general for the \ourprob\ problem, there are several query classes on which we had shown NP-hardness of the problem but their approximability is still open. This includes simple CQs such as $\seesawquery(A):R_1(A), R_2(A,B), R_3(B)$. We leave the precise classification of query classes according to approximability of the \ourprob\ problem as an interesting direction for future work.
}

\section{Algorithms and Optimizations}
\label{SEC:ALGORITHMS}


\reva{The framework of our poly-time algorithm, which returns the exact solution for ``easy'' queries and a heuristic for hard queries, is described as \computeADP\ in Algorithm~\ref{algo:computeadp}. It builds upon the algorithm for the Resilience problem \cite{FreireGIM15}, which is a special case of the \ourprob\ problem. Our algorithm recursively calls itself through {\sc Universal} and {\sc Decompose} procedures. For poly-time solvable CQs, it only uses the first four cases: this follows the proof of Theorem~\ref{thm:dichotomy} by applying the two simplifications repeatedly until it becomes a boolean query or contains a vacuum relation. Our first optimization is to include a new base case that we call {\em singleton}. If the conditions of this case (we describe them below) are satisfied, then a simple algorithm {\sc Singleton} is directly applied instead of continuing to apply the two simplification steps. In addition to computing the optimal solution for poly-time solvable CQs, Algorithm~\ref{algo:computeadp} also generates a feasible solution for NP-hard CQs. In this case, it alternately applies these two simplification steps until it becomes boolean or goes to the ``others'' category in Figure~\ref{fig:isptime}. We eventually invoke an approximate procedure {\sc GreedyForCQ} on the non-boolean CQ when neither simplification step can be applied any more. Our second optimization is a smarter way of solving the recurrent formula for these two simplification steps, as shown in
\universe$(Q, D, k)$ and \decompose$(Q, D, k)$. Note that the simplification steps involve large dynamic programs; so, this optimization provides significant scalability in practice. Both poly-time solvable and NP-hard queries benefit from the improvement of two simplification steps.}

\begin{algorithm}[t]
	\caption{$\textrm{\sc ComputeADP}(Q, D, k)$}
	\label{algo:computeadp}
	\textbf{If} $Q$ is Boolean
	\Return \medskip \noindent {\sc Boolean}$(Q,D,k)$\;
	
	\textbf{ElseIf} $Q$ is a singleton 
	\Return \medskip \noindent {\sc Singleton}$(Q,D,k)$\;
		
	\textbf{ElseIf} $Q$ has universal attribute \textbf{then} \medskip \noindent \universe$(Q,D,k)$\;
	
	\textbf{ElseIf} $Q$ is disconnected
    \textbf{then} \medskip \noindent \decompose$(Q,D,k)$\;
	
	\textbf{Else} \Return \medskip \noindent {\sc GreedyForCQ}$(Q,D,k)$\;
\end{algorithm}
In the recursion tree of \computeADP, each leaf node (\bool,  \singleton\ and {\sc GreedyForCQ}) can be computed in poly-time and internal node (\universe\ and \decompose) can be built upon its children in poly-time.  Also, there are $O(1)$ nodes in this recursion tree, since the query size (in terms of number of attributes and relations) is constant and each recursive call decreases the query by at least one relation or attribute. Hence, we get an poly-time algorithm overall.

\subsection{Boolean}

In~\cite{FreireGIM15}, a poly-time algorithm was proposed for boolean CQs without a triad structure. A boolean query is {\em linear} if its relations may be arranged in linear order such that each attribute occurs in a contiguous sequence of atoms. It is proved that every boolean query without a triad structure can be transformed into a query of equivalent complexity that is linear. Thus, we only provide the algorithm for computing the \ourprob\ problem on an arbitrary linear query.

\medskip \noindent {\bf \bool}$(Q,D,k)$. We first label relations in linear ordering $R_1, R_2,\cdots, R_p$ and then build a network construct a network $G$ as follows. Note that $G$ is an $(p+1)$-partite graph consists of vertices $V = \{x\} \cup V_1 \cup V_2 \cup \cdots \cup V_{p-1} \cup \{y\}$, where $V_i = \attr(R_i) \cap \attr(R_{i+1})$. There is an edge $e = (u,v)$ for $u \in V_i, v \in V_{i+1}$ if there exists a tuple $t \in R_{i+1}$ with $\pi_{V_i} t = u$ and $\pi_{V_{i+1}} t = v$. Moreover, there is an edge between every vertex in $V_1$ and $x$, and every vertex in $V_{p-1}$ and $y$. Each edge has weight $1$.

A minimum cut of $G$ is exactly the solution for $\ourprob(Q, D, k)$, which can be computed using the standard Edmonds–Karp algorithm with time complexity $O(|D|^3)$.

\subsection{Singleton}
We first lay out the conditions of this new base case for a poly-time solvable CQ:
\begin{definition}[Singleton]
	\label{def:singleton}
	A CQ $Q$ is singleton, if there exists a relation $R_i \in \rel(Q)$ such that \revc{(1) $\attr(R_i) \subseteq \attr(R_j)$ holds for every other relation $R_j \in \rel(Q)$;} and (2) either $\attr(R_i) \subseteq \head(Q)$ or $\head(Q) \subseteq \attr(R_i)$.
\end{definition}

Note that the execution of \isptime\ can also be modeled as {\em recursion tree}, where each leaf node is either a Boolean query or contains vacuum relation, and each internal node corresponds to one simplification step. 
On this recursion tree, we point out an important property for singleton structure, as stated in Lemma~\ref{LEM:SINGLETON}. 

\begin{lemma}
	\label{LEM:SINGLETON}
	For a CQ $Q$ on which $\isptime(Q)$ returns \true, each leaf (not root) node containing a vacuum relation must have an ancestor that is a singleton query.
\end{lemma}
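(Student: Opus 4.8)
The plan is to locate, for a fixed vacuum leaf, the precise simplification step at which the vacuum relation lost its last attribute, and to show that the query at that step is already a singleton. Model the execution of $\isptime(Q)$ as a recursion tree whose internal nodes are labelled by one of the two simplification steps (removal of a universal attribute, or decomposition into connected components) and whose leaves are boolean queries or queries containing a vacuum relation — there are no ``others'' leaves, since $\isptime(Q)$ returns $\true$. Fix a non-root leaf $L$ whose query $Q_L$ has a vacuum relation $R^\ast$. First I would argue that the root query $Q$ has no vacuum relation: any CQ with a vacuum relation is automatically a leaf of $\isptime$ — if it is non-boolean this is the explicit ``$\attr(R_i)=\emptyset$'' case, and if it is boolean then a vacuum relation makes every other relation exogenous, leaving at most one endogenous relation and hence no triad — so if $Q$ had a vacuum relation it would be a leaf and we would have $L=Q$, contrary to $L$ not being the root. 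Since neither simplification step ever creates a relation or adds an attribute, and decomposition only distributes relations among components, $R^\ast$ can be traced, as ``the same'' relation, along the entire root-to-$L$ path: non-vacuum at the root, vacuum at $L$, with a (weakly) shrinking attribute set in between.

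Next I would take $N$ to be the last node on this path at which $R^\ast$ is still non-vacuum, so that $R^\ast$ is vacuum in $N$'s child along the path. Because decomposition preserves attribute sets, the only way $R^\ast$ can pass from non-vacuum to vacuum is by a universal-attribute-removal step; hence $N$ is such a node, and the nonempty set $S$ of attributes of $R^\ast$ in $Q_N$ consists entirely of attributes that are universal in $Q_N$ (they are exactly the attributes removed from $R^\ast$ at that step). I would then verify directly that $R^\ast$ witnesses that $Q_N$ is a singleton: every $A\in S$ is universal in $Q_N$, hence (i) $A$ occurs in every relation of $Q_N$, giving $\attr(R^\ast)=S\subseteq\attr(R_j)$ for every other $R_j\in\rel(Q_N)$, and (ii) $A$ is an output attribute, giving $S\subseteq\head(Q_N)$ — precisely the two conditions in the definition of a singleton. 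Since $R^\ast$ is non-vacuum at $N$ but vacuum at $L$, we have $N\neq L$, so $N$ is a proper ancestor of $L$, which is exactly the conclusion of the lemma.

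I do not expect a serious obstacle; the proof is essentially structural bookkeeping. The two points that deserve a careful word are (a) the claim that the root contains no vacuum relation, which I would reduce to the observation above that a vacuum relation forces a query to be a leaf of $\isptime$, and (b) the claim that the transition node $N$ is a universal-removal node rather than a decomposition node, which is immediate once one records that decomposition leaves all attribute sets unchanged. A cosmetic subtlety is whether the universal-removal step strips one universal attribute at a time or all of them at once; the argument is insensitive to this, because in either case all attributes of $R^\ast$ surviving at $N$ are universal in $Q_N$, and that is all that conditions (i) and (ii) require.
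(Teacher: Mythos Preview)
Your proposal is correct and follows essentially the same approach as the paper. The paper's proof is terser: it directly takes the parent $u$ of the leaf $v$, observes that $u$ cannot contain a vacuum relation (else $u$ would itself be a leaf), rules out the decomposition case for the edge $u\to v$ because decomposition preserves attribute sets, and concludes that $u$ is a universal-removal node whose query is a singleton witnessed by $R_i$. Your ``last non-vacuum node'' $N$ is exactly this parent $u$ --- once $R^\ast$ becomes vacuum the node is forced to be a leaf, so there are no intermediate vacuum nodes --- and your verification of the two singleton conditions is the same as the paper's. Your extra care about the root case and about whether universal attributes are removed singly or all at once is sound and addresses points the paper leaves implicit.
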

\begin{proof}
	Note that each node $v$ in the recursive tree is associated with a query $Q_v$.
	Let $v$ be a leaf node in the recursion tree containing a vacuum relation $R_i$. Let $u$ be the parent node of $v$. Observe that $u$ doesn't contain a vacuum relation; otherwise, $u$ itself is a leaf. If $u$ generates $v$ by decomposing a disconnected, then $R_i$ is also a vacuum relation in $Q_u$, coming to a contradiction. If $u$ generates $v$ by removing an universal attribute $A$, $\attr(R_i) = \{A\}$ in query $Q_v$. As $A$ is an universal attribute in $Q_u$, $Q_u$ is a singleton by Definition~\ref{def:singleton}. 
\end{proof}

So, it suffices to replace the vacuum relation base case with the singleton.  

\begin{algorithm}[h]
	\caption{\singleton$(Q, k, D)$}
	\label{algo:singleton}
	$R_i \gets \arg \min_{R_j \in \rel(Q)} |\attr(R_j)|$\;
	\If{$\attr(R_i) \subseteq \head(Q)$}{
		\ForEach{tuple $t \in R_i$}{
			$p_t \gets \pi_{\attr(R_i) = t} Q(D)$\; 
		}
		Sort all $p_t$'s in decreasing order as $p_1, p_2, \cdots, p_m$\;
		Find index $i$ such that $\sum_{j=1}^{i-1} p_j < k \le \sum_{j = 1}^i p_j$\;
		\Return $i$\;
	}
	\Else{
		Remove all dangling tuples in $R_i$\;
		\ForEach{$t \in Q(D)$}{
			$c_t \gets |\pi_{\head(Q)= t} R_i|$\;
		}
		Sort all $c_t$'s in increasing order as $c_1,c_2, \cdots, c_m$\;
		\Return $\sum_{j =1}^k c_j$\;
	}
	
\end{algorithm}

\medskip \noindent {\bf \singleton $(Q, D, k)$}. Let $R_i$ be the relation with 
the minimum number of attributes. By definition, either $\head(Q) \subseteq \attr(R_i)$ or $\attr(R_i) \subseteq \head(Q)$. 

Case 1: $\attr(R_i) \subseteq \head(Q)$.  We compute the number of output tuples that inherent attribute values from a tuple $t \in R_i$ and call it the ``profit'' of $t$, denoted as $p_t$. Then, we sort the tuples by their profits and choose greedily in decreasing order until their sum exceeds $k$. These chosen tuples form an optimal solution.

Case 2: $\head(Q) \subseteq \attr(R_i)$. We first remove all {\em dangling tuples}\footnote{A tuple is {\em dangling} if it doesn't participate in any full join result, and {\em non-dangling} otherwise. For $R_j \in \rel(Q)$, its non-dangling tuples can be obtained by projecting full join results on $\attr(R_i)$. This can be done in poly-time.} in $R_i$, i.e., those don't participate in the full join result of the body of $Q$. Then we count for each output tuple $t \in Q(D)$, the number of tuples in $R_i$ whose projection on attributes $\head(Q)$ is equivalent to $t$, and call it the ``cost'' of $t$, denoted by $c_t$. Finally, we sort the output tuples by cost and choose in increasing order the first $k$ tuples. The optimal solution is now obtained as the set of input tuples in $R_i$ whose removal deletes $k$ output tuples.

This algorithm takes $O(|D|^{|Q|})$ time since computing full join results dominates the complexity.

\subsection{Universe and Decompose}
We show some optimization for \decompose\ and \universe\ procedures respectively.

\begin{algorithm}[h]
	\caption{\universe$(Q, D, k)$}
	\label{algo:universal}
	
	$A \gets \head(Q) \cap \left(\bigcap_{R \in \rel(Q)} \attr(R)\right)$\;
	Label all possible combinations over $A$ as $\{a_1, a_2, \cdots, a_g\}$\; 
	\ForEach{$i \in \{1,2,\cdots, g\}$}{
		$D_i \gets \{\sigma_{\pi_A t = a_i} R_i: \forall R_i \in \rel(Q)\}$\;
	}
	\ForEach{$j \in \{1,2,\cdots,k\}$}{
		$\optcost[1][j] \gets \computeADP(Q, D_1, j)$\;
	}
	\ForEach{$i \in \{2, \cdots, g\}$}{			 
		\ForEach{$j \in \{1,2,\cdots,k\}$}{
			$\optcost[i][j] \gets \optcost[i-1][j]$\;
			\For{$m = 1$ to $j-1$}{
				$c_{i, m} \gets \computeADP(Q, D_i, m)$\;
				\If{$\optcost[i][j] > \optcost[i-1][j - m] + c_{i, m}$}{							 
					$\optcost[i][j] \gets \optcost[i-1][j - m] + c_{i, m}$\;
				}
			}
		}
	}
	\Return $\optcost[g][k]$\;
\end{algorithm}

\begin{algorithm}[h]
	\caption{{\decompose}$(Q,D,k)$}
	\label{algo:decompose}
	
	Let $Q_1, Q_2, \cdots, Q_s$ be the connected subquery of $Q$\;
	$Q_\alpha \gets Q_1$\;
	\ForEach{$j \in \{1,2,\cdots,k\}$}{
		$\optcost[1][j] \gets \computeADP(Q, D_1, j)$\;
	}
	\ForEach{$i \in \{2,3,\cdots, s\}$}{
		$m_1 \gets \prod_{\ell=1}^{i-1}|Q_\ell(D)|$, $m_2 \gets |Q_i(D)|$\;
		\ForEach{$j \in \{1,2,\cdots, k\}$}{
			$\optcost[i][j] \gets +\infty$\;
			\ForEach{$(k_1, k_2) \in \{0,1,\cdots,j\} \times \{0,1,\cdots, j\}$}{
				\If{$k_1m_2 + k_2m_1 - k_1k_2 \geq j$}{
					$c_{i, k_2} \gets \computeADP(Q_i, D, k_2)$\;
					\If{$\optcost[i][j] >  \optcost[i-1][k_1] + c_{i, k_2}$}{
						$\optcost[i][j] \gets \optcost[i-1][k_1] + c_{i, k_2}$\;
					}
				}
			}
		}
		$Q_{\alpha} \gets Q_{\alpha} \times Q_i$\;
	}
	\Return $\optcost[s][k]$\; 
\end{algorithm}

\medskip \noindent {\bf \decompose $(Q, D, k)$}. 
Assume $Q$ has $s$ connected subqueries, $Q_1, Q_2, \cdots, Q_s$. The divide-and-conquer strategy will first compute $\ourprob(Q_i, D, k_i)$ for each subquery $Q_i$ over $k_i$, and then find an optimal combination of $k_1, k_2, \cdots, k_s$ by enumeration over $\Theta(k^s)$ solutions, which becomes expensive for large $s$. We give an optimized algorithm.

Let $\optcost[i][j]$ denote the minimum number of input tuples to remove at least $j$ output tuples from subquery $\times_{j=1}^i Q_j(D)$. $\optcost[i][j]$ can be computed using the following dynamic program: 
\begin{equation*}
\optcost[i][j] =  \min_{k_1, k_2 \in K(i,j)}  \optcost[i-1][k_1] + \computeADP(Q_i, D, k_2)
\end{equation*}
where $K(i,j) = \{k_1, k_2: k_1 |Q_i(D)| + k_2  \prod_{\ell=1}^{i-1} |Q_\ell(D)| - k_1k_2 \geq j, k_1, k_2 \in \mathbb{Z}^+\}$ and Algorithm~\ref{algo:computeadp} is invoked for solving \ourprob$(Q_i, D, k_2)$. To remove at least $j$ output tuples from $\times_{j=1}^i Q_j(D)$, we remove $k_1$ output tuples from first $i-1$ queries and $k_2$ output tuples from $Q_i(D)$, the total number of results removed is $k_1 |Q_i(D)| + k_2 \prod_{\ell=1}^{i-1} |Q_\ell(D)| - k_1k_2$ since results across subqueries are joined by Cartesian product. Thus, after recursively computing the solution to $\ourprob(Q_i, D, k_2)$ for each subquery $Q_i$ over all values of $k_2$, the recurrence formula can be solved in $O(s \cdot k^3) = O(|Q| \cdot k^3)$ time since there are $O(sk)$ cells in the two-dimensional data structure $\optcost[i][j]$ and each can be computed in $O(k^2)$ time.

\medskip \noindent{\bf \universe $(Q, D, k)$}. 
Let $A$ be an universal attribute in $Q$. The input instance $D$ is partitioned into $D_1, D_2, \cdots, D_g$ corresponding to possible combinations of values $a_1, a_2, \cdots, a_g$ over $A$. In $D_i$, each tuple $t$ has $\pi_A t= a_i$. Note that the query result $Q(D)$ is a disjoint union of the subquery results $Q(D_1), Q(D_2),\cdots, Q(D_i)$. 

Let $\optcost[i][s]$ denote the minimum number of input tuples that have to be removed in order to remove at least $s$ output tuples from $Q(D)$, under the constraint that the input tuples can only be chosen from $D_1$ to $D_i$. Using this notation, we can now write the following dynamic program: 
\begin{equation*}
\optcost[i][s] = \min_{m = 0}^{s}\Big \{ \optcost[i-1][s - m] + \computeADP(Q, D_i, m) \Big\}.
\end{equation*} 
where Algorithm~\ref{algo:computeadp} is revoked for solving the \ourprob$(Q, D_i, m)$ over $1 \le i \le g$ and $0\le m \le s$. 

When there are more than one universal attributes, they should be removed as one ``combined'' attribute, instead of one by one. Let $A_1, A_2, \cdots, A_h$ be the universal attributes in $Q$. Assume all subproblems $\ourprob(Q, D_i, j)$ over $1 \le i\le g$ and $1 \le j \le k$ have been computed. Then, removing $A_1$, $A_2$, $\cdots$, $A_h$ one by one takes $O(k \cdot |\pi_{A_1, A_2, \cdots, A_h} Q(D)|)$ time while removing them as whole (say in index ordering) takes $O(k \cdot \sum_{\ell=1}^h|\pi_{A_1,\cdots, A_\ell} Q(D)|)$ time. 
Our experiments show this difference in practice.

\subsection{Greedy Heuristics}

	Clearly, we cannot hope for a poly-time algorithm on NP-hard CQs for all input instances $D$ and integers $k$. We provide the following greedy heuristics for computing a feasible solution to $\ourprob(Q,D,k)$ when it is NP-hard. 
	
	\medskip \noindent{\bf GreedyForCQ}$(Q, D, k)$: 
	For many simple queries, the \ourprob\ problem is NP-hard, and is even hard to approximate implied by the results in Section~\ref{SEC:APPROX}. The prime-dual approximation algorithm~\cite{GKS04} for full CQs mentioned in Section~\ref{sec:approx-full-CQ} is not scalable since the size of linear programming would become very large, and  not applicable to CQs with projections. So, 
	we give a greedy heuristic for handling all NP-hard CQs when neither simplification steps can be applied.
	It greedily chooses a tuple which removes the maximum number of output tuples among the remaining ones in every step (like the approximation algorithm for the set cover problem). Moreover, we can narrow our scope to tuples in endogenous relations in the greedy algorithm. Note that {\sc GreedyForCQ} achieves $O(\log k)$-approximation for full CQs, but there is no theoretical guarantees on the approximation ratio when projection exists. 

\begin{algorithm}[h]
	\caption{$\textrm{\sc GreedyForCQ}(Q, D, k)$}
	\label{algo:greedy}
	$S \gets \emptyset$\;
	\While{$k > 0$}{
		$t' \gets null$, $p(t') \gets 0$\;
		\ForEach{tuple $t$ from an endogenous relation}{
			$p(t) = |Q(D -S )| - |Q(D- S - t)|$\;
			\If{$p(t) \ge p(t')$}{
				$t' \gets t$, $p(t') \gets p(t)$\;
			}
		}
		$S \gets S \cup \{t'\}$, $k \gets k - p(t')$\;
	}
	\Return $S$\;
\end{algorithm}

	\medskip \noindent{\bf DrasticGreedyForFullCQ}$(Q, D, k)$: In the heuristic above,
	however, computing the ``profit'' for all input tuples from endogenous relations after every one input tuple is removed is expensive in practice. 
	For full CQs, we propose a more `drastic' greedy solution where we remove input tuples only from one endogenous relation (goes over all endogenous relations and picks the one giving smallest cost
	). This  significantly improves the efficiency in our experiments, since the profits are computed for all input tuples only once (since different tuples in the same relation remove disjoint full join results), but theoretically the approximation ratio is no longer guaranteed. Moreover, this strategy fails on CQs with projection. The reason is that input tuples from the same relation do not necessarily remove distinct query results, thus adding their individual profits together is not equivalent to the profit of their union.

\begin{algorithm}[h]
	\caption{$\textrm{\sc DrasticGreedyForFullCQ}(Q, D, k)$}
	\label{algo:greedy-full}
	$S \gets \emptyset$\;
	\ForEach{endogenous relation $R(e)$}{
		\ForEach{$t \in R(e)$}{
			$p(t) = |Q(D)| - |Q(D -t)|$\;
		}
		Sort $R(e)$ by $p(t)$ decreasingly, as $t_1, t_2, \cdots, t_{|R(e)|}$\;
		Find the smallest $i$ such that $\sum_{j=1}^i p(t_j) \ge k$\;
		\If{$i \le |S|$}{
			$S \gets \{t_j \in R(e): j \le i\}$\;
		}
	}
	\Return $S$\;
\end{algorithm}

\subsection{Supporting Selection Operator}\label{sec:selection}
So far, we focused on the class of CQs only with {\em project} and {\em join} operators. In fact, our algorithm also supports a larger class of CQs involving selection operator (\revc{when the domain of some of the attributes is restricted to be constant}). The class of {\em conjunctive queries with selections} can be described as
\[Q(\bm{A}): -\sigma_{\theta_1} R_1(\mathbb{A}_1), \sigma_{\theta_2} R_2(\mathbb{A}_2), \cdots, \sigma_{\theta_p} R_p(\mathbb{A}_p)\]
where $\theta_i$ is a set of predicates each in form of $A = a$ for some attribute $A \in \mathbb{A}$ and value $a$. The result of $\sigma_{\theta_i} R_i(\mathbb{A}_i)$ is the set of tuples in $R_i$ satisfying all predicates in $\theta_i$. Note that we do not have any selection in the head, since any selection in the head can be pushed down to relations in the query body. An attribute is {\em selected} if it appears in any selection; and {\em unselected} otherwise. Let $\mathbb{A}_\theta \subseteq \mathbb{A}$ be the set of {\em selected attributes} in $Q$. Here, we also don't include any self-joins, i.e., each $R_i$ in $Q$ is distinct.

Interestingly, for the \ourprob\ problem, the polynomial solvability of a CQ with selections is equivalent to that of the residual query on the unselected attributes. This is formally stated in Lemma~\ref{LEM:SELECTION}.

\begin{lemma}
	\label{LEM:SELECTION}
	For a CQ $Q$ with selection predicates $\theta$, the $\ourprob(Q,D,k)$ is NP-hard if and only if $\ourprob(Q_{-{\mathbb{A}_{\theta}}}, D, k)$ is NP-hard, where $Q_{-{\mathbb{A}_{\theta}}}$ is the residual query after removing selected attributes $\mathbb{A}_{\theta}$ from $Q$.
\end{lemma}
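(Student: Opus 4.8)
The plan is to prove both directions by reductions that operate at the level of \emph{instances}, exploiting the fact that selection on a constant predicate just restricts each relation to a sub-instance. Let $\mathbb{A}_\theta = \{A_1, \dots, A_r\}$ be the selected attributes; by pushing predicates down we may assume each $A_\ell$ is fixed to a single constant $a_\ell$ in every relation where it appears (if some relation requires two different constants for the same attribute, the query is trivial, and if an attribute is selected but the constraints are consistent, only that one value matters). So on any instance $D$, evaluating $Q$ amounts to first filtering every $R_i$ to the tuples agreeing with $\theta$, and then evaluating the selection-free core. The key observation is that the selected attributes behave exactly like \emph{universal-style} constants: once filtered, every surviving tuple carries the same values on $\mathbb{A}_\theta$, so these attributes contribute nothing to connectivity, to which output tuples an input tuple kills, or to the combinatorics of the deletion problem. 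This is the analogue of Lemma~\ref{LEM:COMMON}, except the ``partition'' over the selected attribute has only one nonempty class (the one matching $\theta$).

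For the ``only-if'' direction (hardness of $Q$ implies hardness of $Q_{-\mathbb{A}_\theta}$): given an instance $D$ for $Q$, I would build an instance $D'$ for $Q_{-\mathbb{A}_\theta}$ by taking, for each relation, only the $\theta$-satisfying tuples of $D$ and dropping the columns in $\mathbb{A}_\theta$. Tuples that fail $\theta$ are irrelevant for both $Q(D)$ and for any optimal \ourprob{} solution (they never touch an output tuple, so no optimal solution deletes one, and they do not block anything), so we may assume $D$ contains none. Then there is a bijection between $Q(D)$ and $Q_{-\mathbb{A}_\theta}(D')$ and between the input tuples of $D$ and $D'$, and this bijection is deletion-preserving: a set of input tuples removes $\ge k$ outputs in one instance iff its image does in the other. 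Hence $\ourprob(Q_{-\mathbb{A}_\theta}, D', k)$ and $\ourprob(Q, D, k)$ have the same optimum. Since the reduction is poly-time in data complexity, poly-time solvability transfers, so hardness transfers back.

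For the ``if'' direction (hardness of $Q_{-\mathbb{A}_\theta}$ implies hardness of $Q$): given an instance $D'$ for $Q_{-\mathbb{A}_\theta}$, construct $D$ for $Q$ by taking each tuple of $D'$ and extending it with the fixed constant $a_\ell$ on each selected attribute $A_\ell$ in the relations that contain $A_\ell$. Every such tuple satisfies $\theta$, so $\sigma_{\theta_i} R_i^D = R_i^D$, and again we get a deletion-preserving bijection on both outputs and inputs; the optima coincide, and the reduction is poly-time. Combining the two directions gives the equivalence. The only genuinely delicate point, and the one I would spell out carefully, is the normalization step at the start: arguing that (a) we may assume each selected attribute is pinned to exactly one value (consistency of $\theta$), (b) a selected attribute pinned to one value is genuinely absent from $Q_{-\mathbb{A}_\theta}$ with no leftover structure, and (c) it is safe to discard $\theta$-violating tuples from the input instance before reducing --- this last because such tuples contribute to neither $Q(D)$ nor to the cost of an optimal solution. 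Once these are established, the two reductions are essentially identity maps on the relevant part of the data, so no real combinatorial work remains.
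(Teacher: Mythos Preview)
Your proposal is correct and follows essentially the same approach as the paper: both directions are proved by the obvious instance-level reductions (extend tuples with the fixed constants in one direction; filter by $\theta$ and drop the selected columns in the other), yielding deletion-preserving bijections on inputs and outputs. Your normalization discussion (consistency of $\theta$, safety of discarding $\theta$-violating tuples) is slightly more careful than the paper's version, but the argument is the same.
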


\begin{proof}
	We will first show that if $\ourprob(Q_{-{\mathbb{A}_{\theta}}}, D, k)$ is NP-hard, then $\ourprob(Q,D,k)$ is also NP-hard. For an arbitrary instance $D_\theta$ for $Q_{-{\mathbb{A}_{\theta}}}$, we construct another instance $D$ for $Q$ by setting a single value $*$ in the domain of every attribute $A \in \mathbb{A}_{\theta}$ and the related predicate as $A = *$. It can be easily checked that any solution for $\ourprob(Q,D,k)$ with selections is also a solution for $\ourprob(Q_{-{\mathbb{A}_{\theta}}}, D, k)$. If there is an poly-time algorithm for $\ourprob(Q,D,k)$, $\ourprob(Q_{-{\mathbb{A}_{\theta}}}, D, k)$ is also poly-time solvable, coming to a contradiction. Thus, the problem $\ourprob(Q,D,k)$ is NP-hard.
	
	Next we show that if there is a poly-time algorithm $\mathcal{A}$ for  $\ourprob(Q_{-{\mathbb{A}_{\theta}}}, D, k)$ over all instances $D$ and integer $k$, then there is also an poly-time algorithm $\mathcal{A}_\theta$ for $\ourprob(Q, D, k)$. Consider an arbitrary instance $D$ for query $Q$. Let $D'$ be the residual instance of applying predicates to $D$. Observe that the solution for $\ourprob(Q, D', k)$ is exactly that for $\ourprob(Q, D, k)$ since tuples in $D$ violating any predicate will not be removed. Moreover, tuples in $D'$ have the same value on every attribute $A \in \mathbb{A}_{\theta}$. Let $D''$ be the instance of removing attributes $\mathbb{A}_{\theta}$ from $D'$. The solution for $\ourprob(Q_{-{\mathbb{A}_{\theta}}}, D'', k)$ is also the solution for $\ourprob(Q, D', k)$, and can be computed in poly-time. Thus, $\ourprob(Q, D, k)$ is also poly-time solvable for any instance $D$ and integer $k$.
\end{proof}

\begin{figure*}
\begin{center}
	\minipage{0.31\textwidth}
	\includegraphics[width=\linewidth]{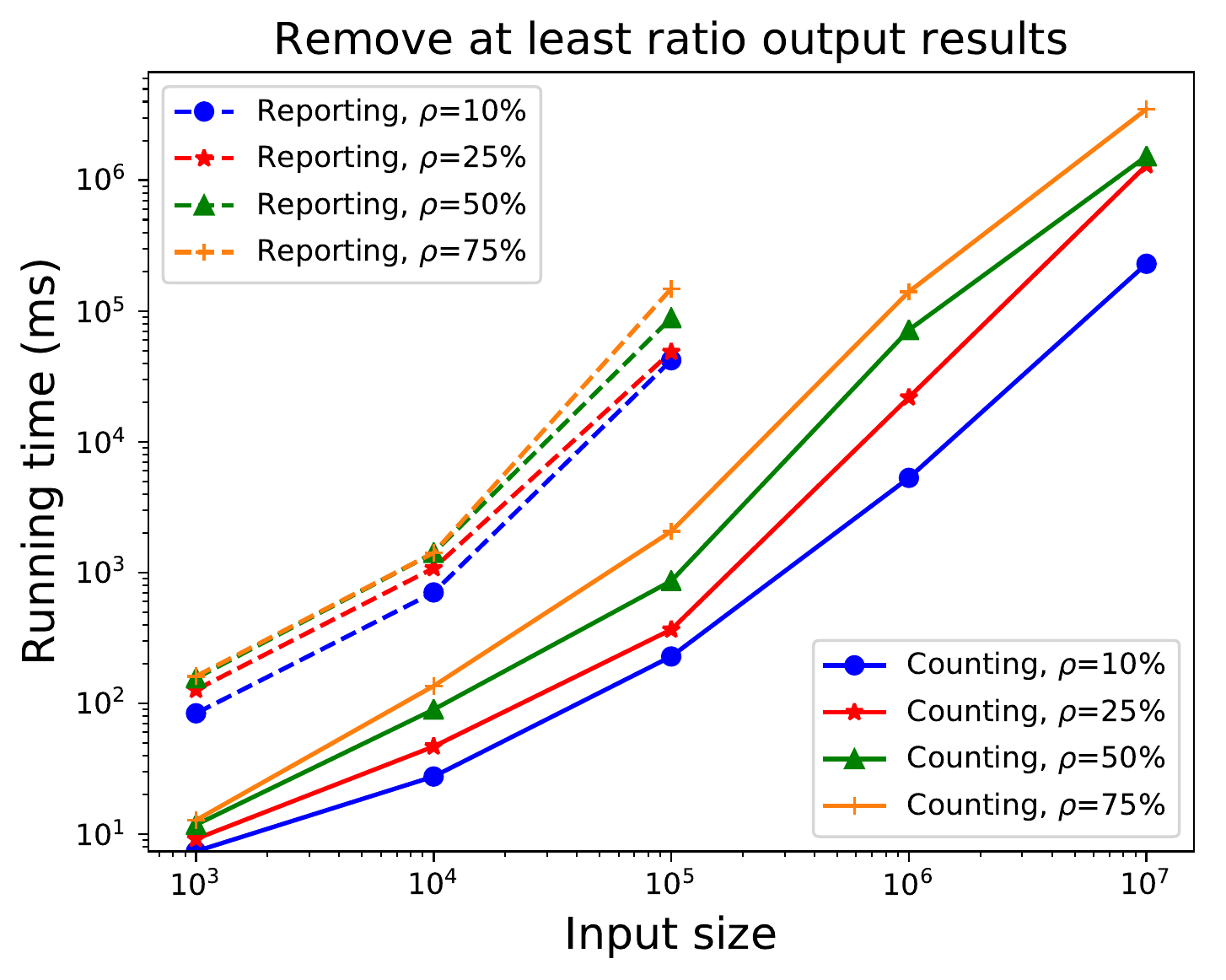}
	\caption{ \revm{Running Time: 
	$\sigma_{\theta} Q_1$ (easy) exactly (count/report).}}
	\label{fig:Q1Selection}
	\endminipage \hfill
	\minipage{0.31\textwidth}
	\includegraphics[width=\linewidth]{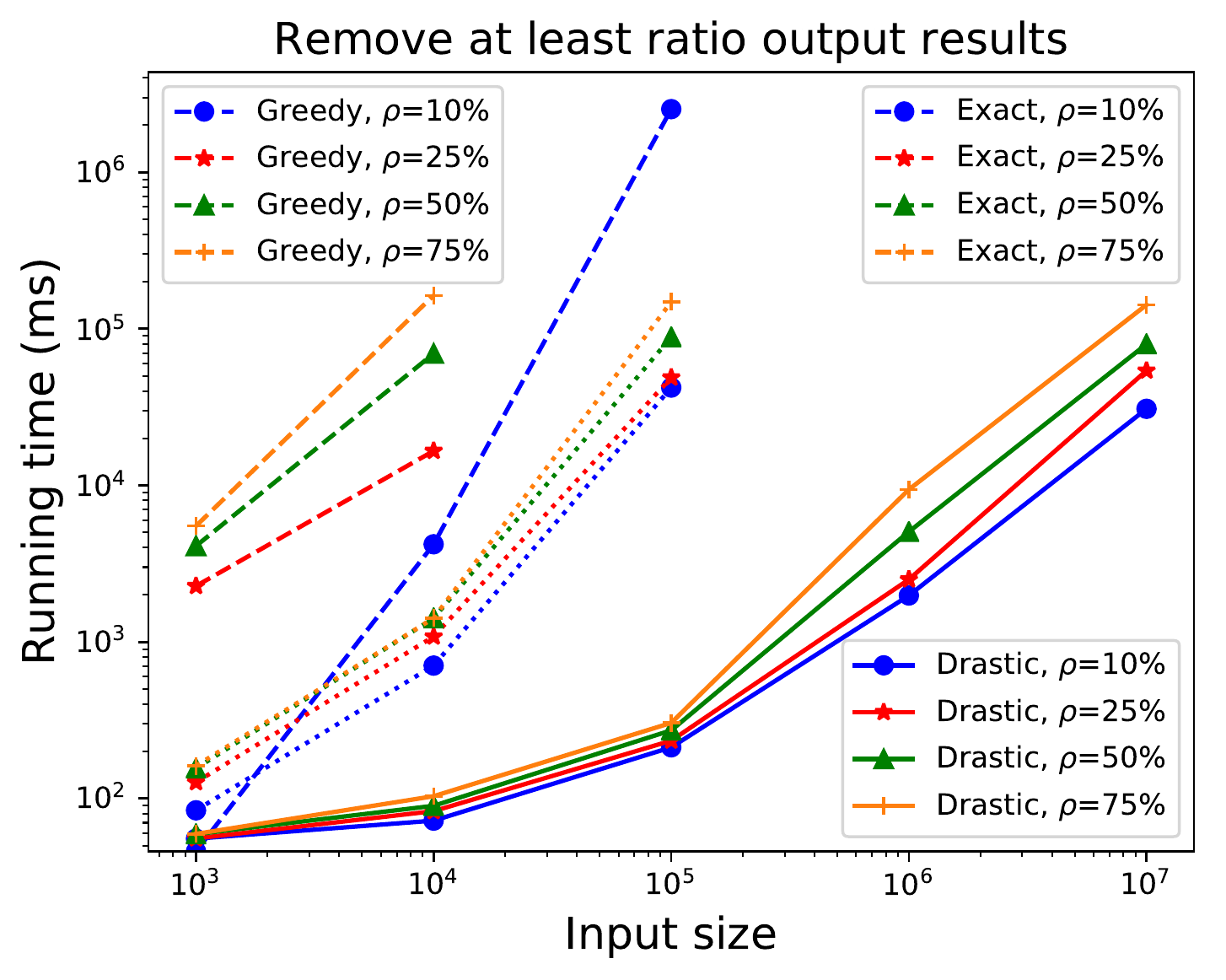}
	\caption{ \revm{Running Time: reporting $\sigma_{\theta}Q_1$ (easy) by heuristics. }}
	\label{fig:heuristicsEasyTime}
	\endminipage \hfill
	\minipage{0.31\textwidth}
	\includegraphics[width=\linewidth]{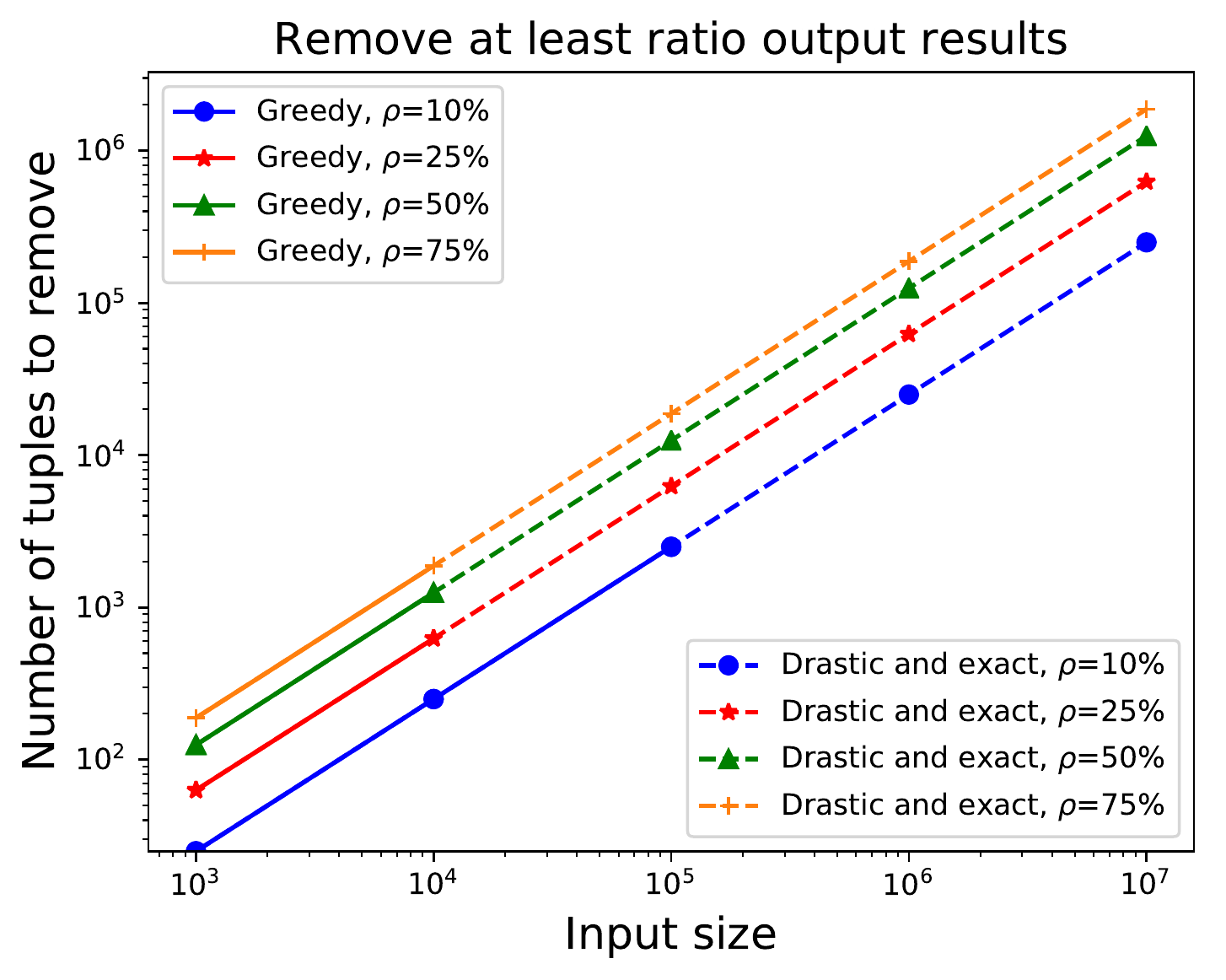}
	\caption{ \revm{Quality:  $\sigma_{\theta} Q_1$ (easy) by heuristics.}}
	\label{fig:Q1SelectionQuality}
	\endminipage \\
    \minipage{0.31\textwidth}
	\includegraphics[width=\linewidth]{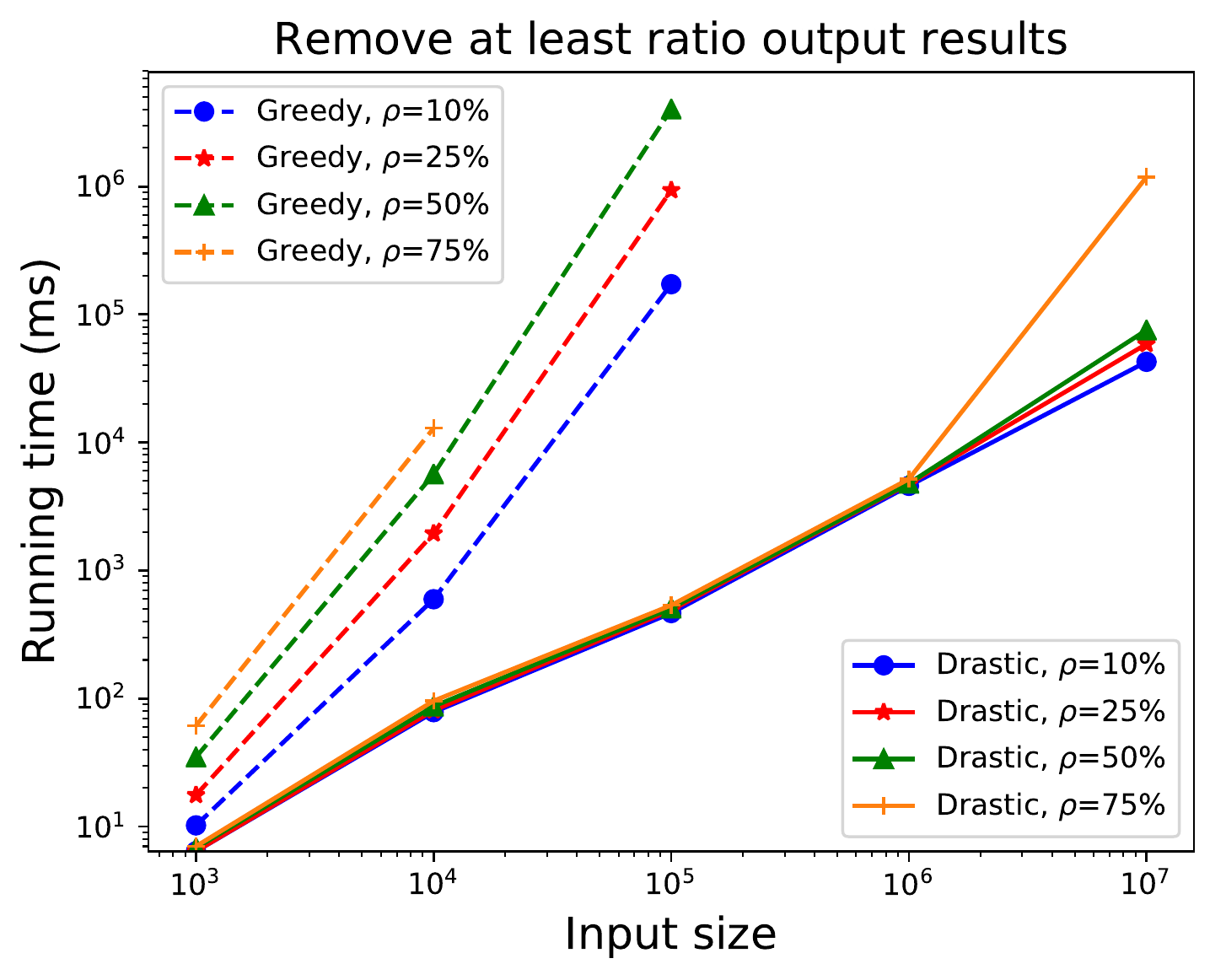}
	\caption{ \revm{Running Time: reporting $Q_1$ (hard) by heuristics.} }
	\label{fig:hard-time}
	\endminipage \hfill
	\minipage{0.31\textwidth}
	\includegraphics[width=\linewidth]{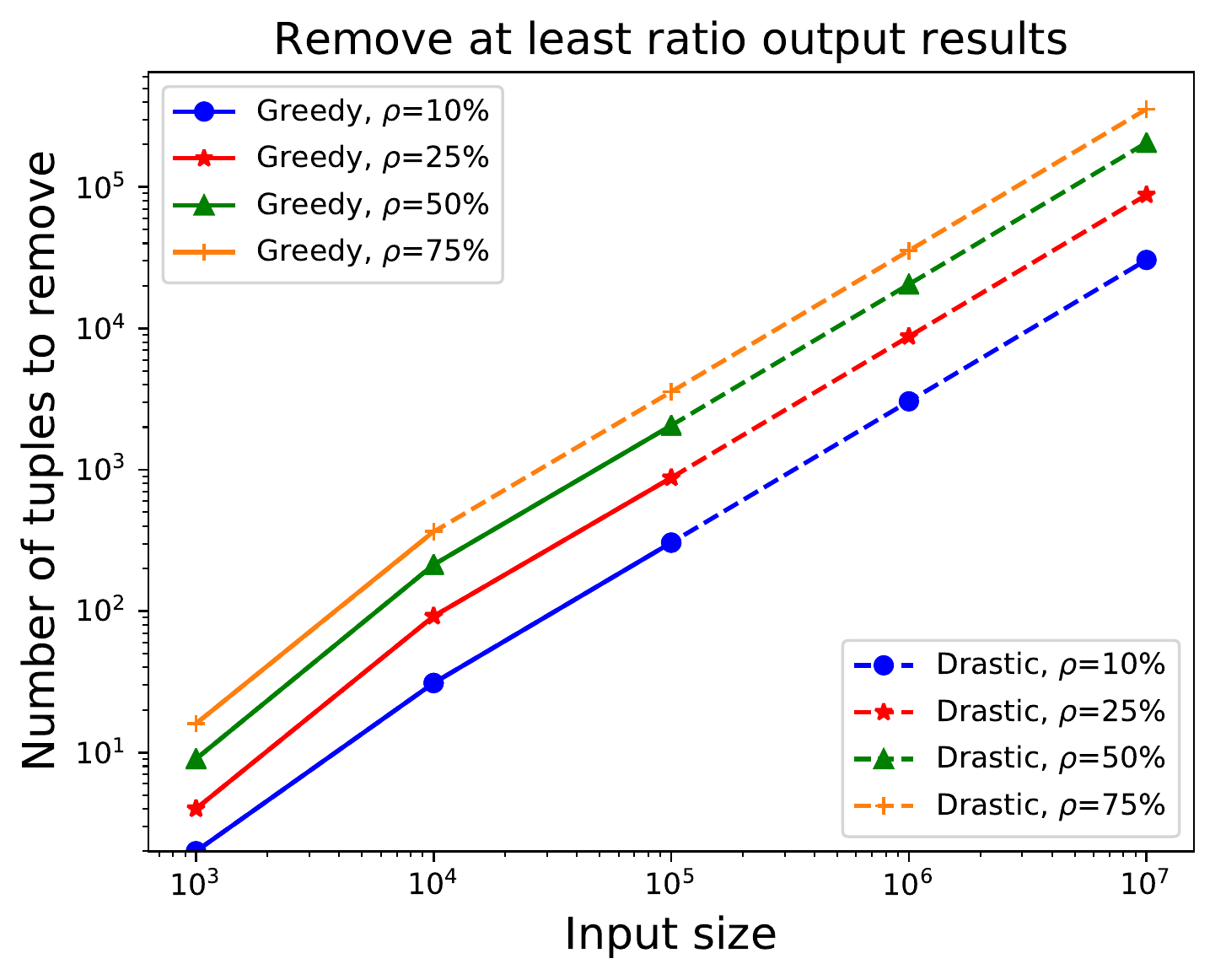}
	\caption{ \revm{Quality: $Q_1$ (hard) by heuristics.}}
	\label{fig:hard-quality}
	\endminipage \hfill
	\minipage{0.32\textwidth}
	\includegraphics[width=\linewidth]{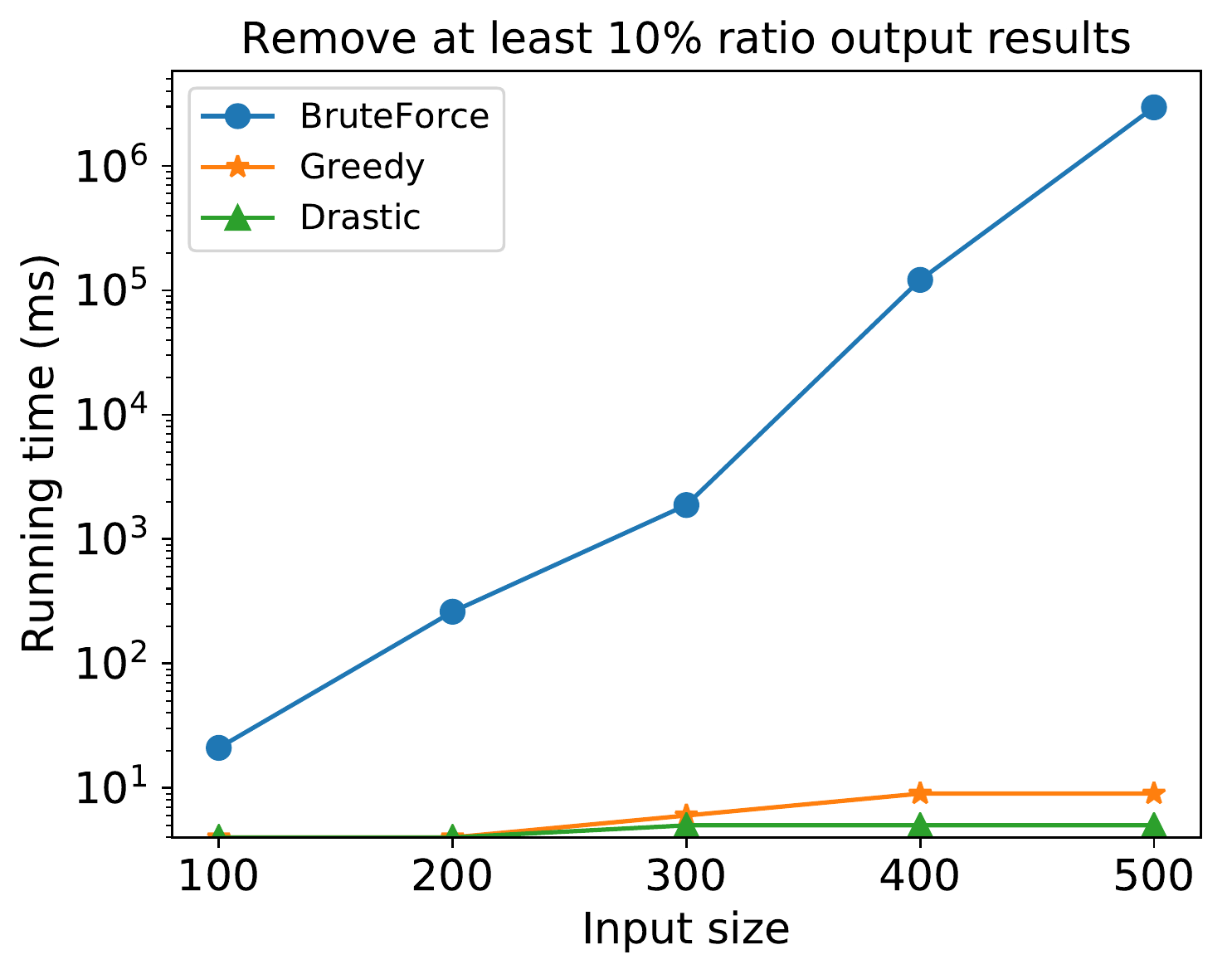}
	\caption{ \revm{Running Time:  brute-force v.s. heuristics for $Q_1$ (hard).}}
	\label{fig:bruteforceTime}
	\endminipage \\
    \minipage{0.32\textwidth}
	\includegraphics[width=1\linewidth]{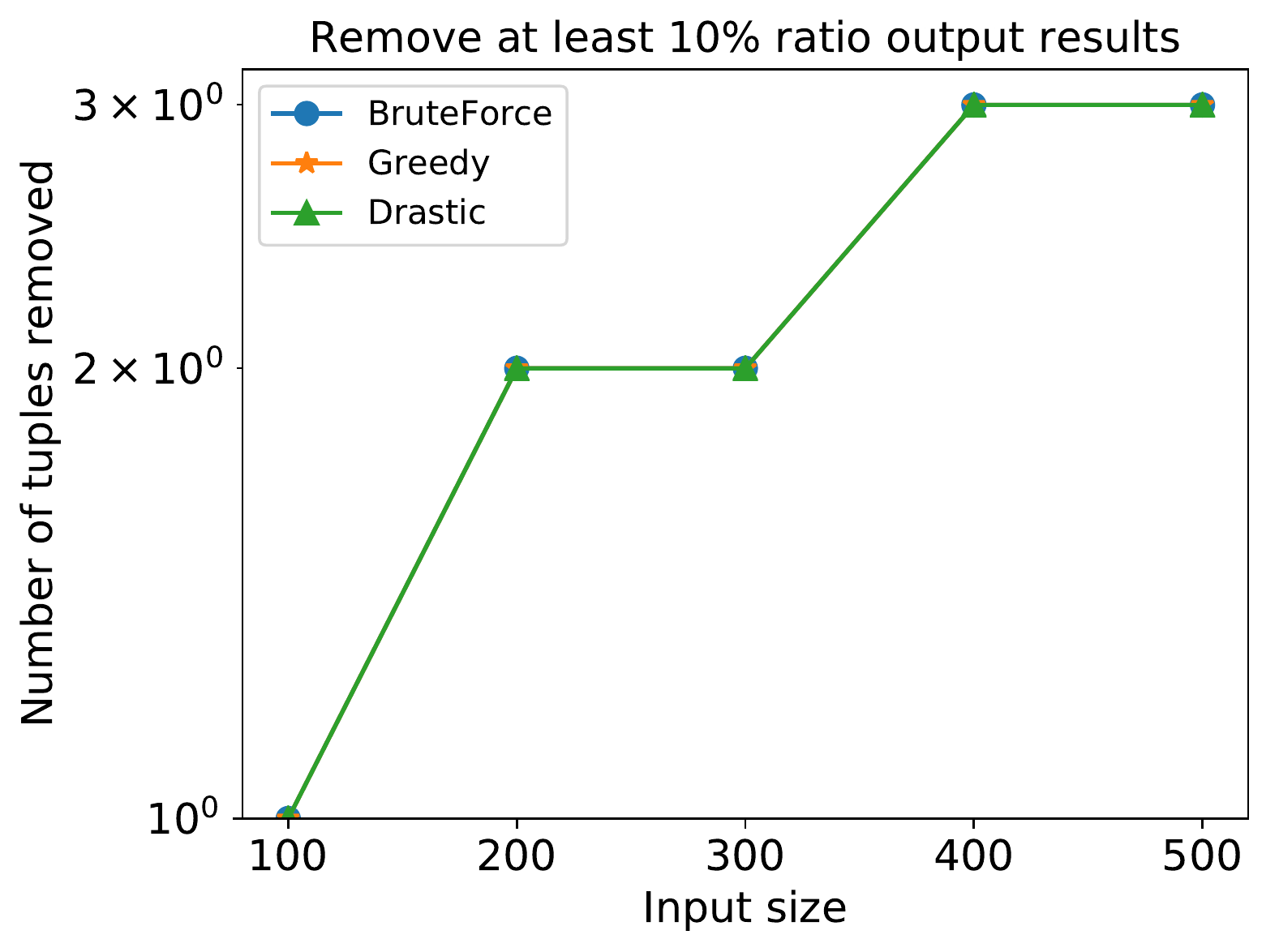}
	\caption{  \revm{Quality: brute-force v.s. heuristics for $Q_1$ (hard).}}
	\label{fig:bruteforceQuality}
	\endminipage \hfill
	\minipage{0.32\textwidth} 
	\includegraphics[width=\linewidth]{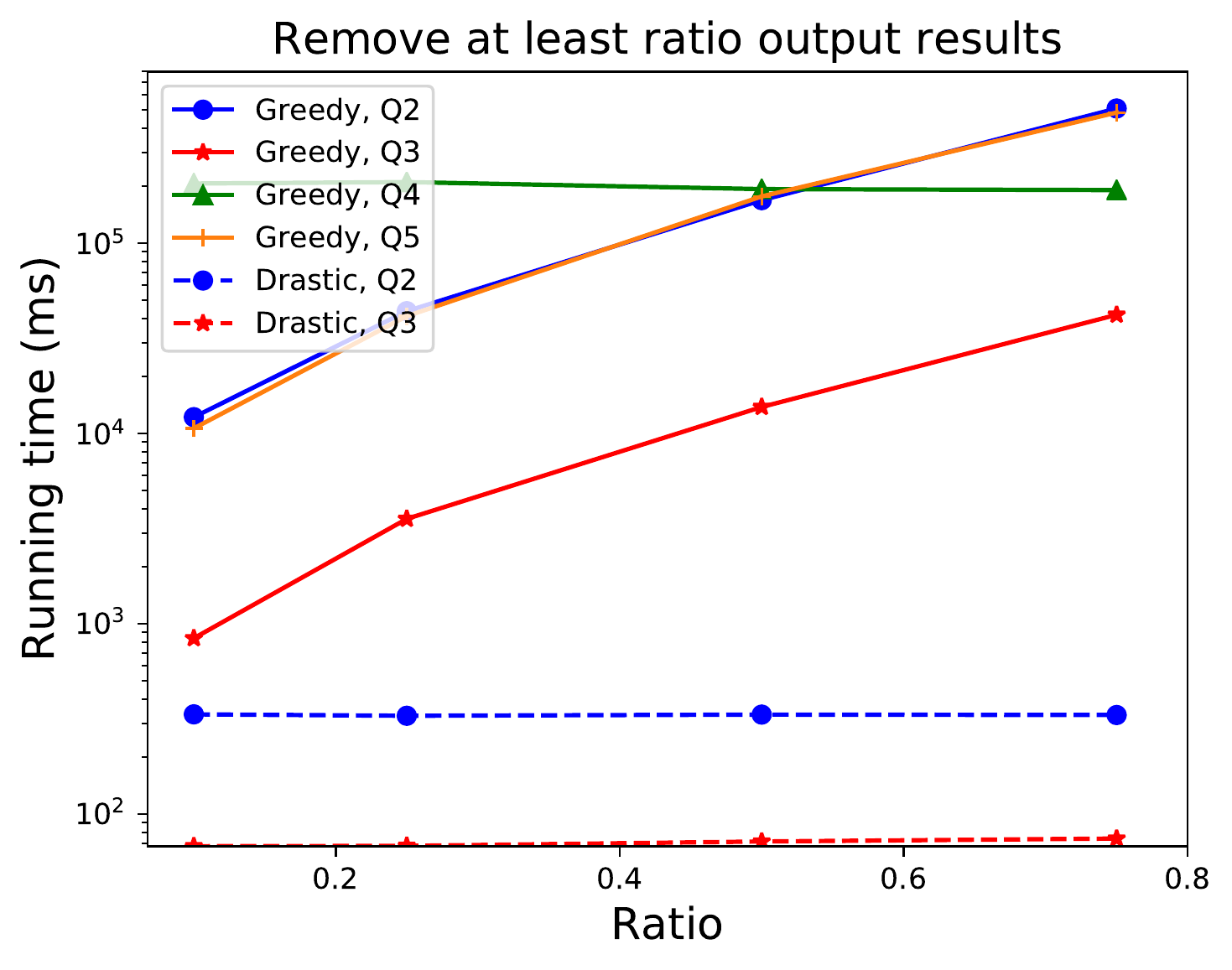}
	\caption{ \revm{Running Time:  
	$Q_2$, $Q_3$, $Q_4$, $Q_5$ (hard) by heuristics.}}
	\label{fig:facebookTime}
	\endminipage \hfill
	\minipage{0.32\textwidth}
	\includegraphics[width=\linewidth]{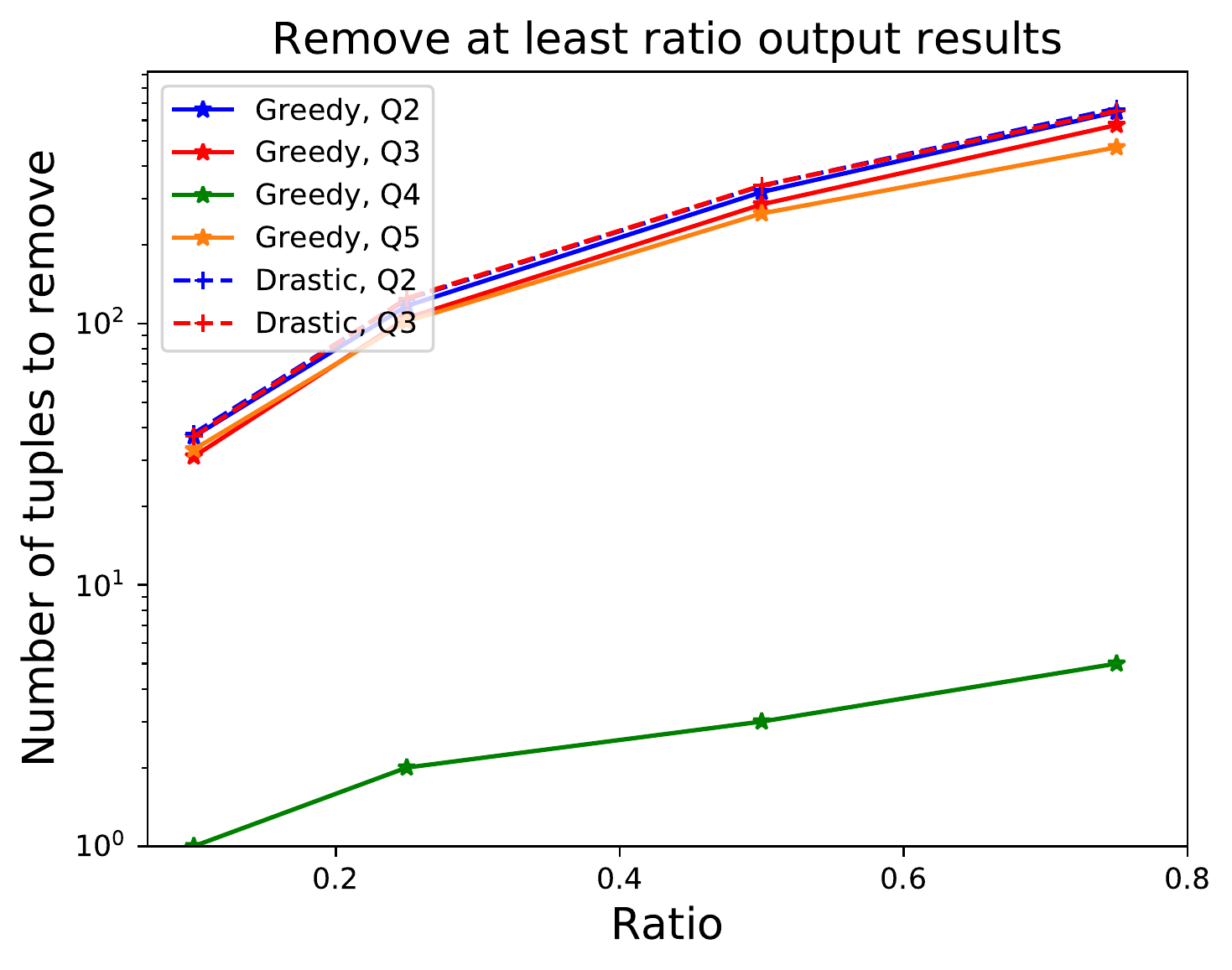}
	\caption{ \revm{Quality: $Q_2$, $Q_3$, $Q_4$, $Q_5$ (hard) by heuristics.}}
	\label{fig:facebookQuality}
	\endminipage
\end{center}
\end{figure*}

\section{Experiments}
\label{SEC:EXPERIMENT}

\revm{In this section, we evaluate the running time, scalability, and quality of \computeADP\ algorithm, and compare it with other baselines.}
\par
\revm{\textbf{Algorithms:}} In our plots, we call the exact algorithm using \computeADP\ for easy (poly-time) queries as ``{\sc Exact}''. For hard queries, and also for easy queries for scalability, we have implemented two versions of \computeADP\ embedded with {\sc GreedyForCQ} and {\sc DrasticGreedyForFullCQ} separately, shorted as ``\greedy'' and ``\drastic''. 
%
We also implemented a baseline brute-force algorithm called ``{\sc BruteForce}'', which enumerates all subsets of input tuples, computes the number of query results that can be removed by each subset (by invoking a SQL query), and finds the minimum one among which removes at least $k$ results.  
\par
\revm{\textbf{Reporting vs. counting versions:} Wherever applicable and feasible, we report the running time for both \emph{counting version}, when the goal is to only count the minimum number of input tuples to remove to achieve the desired effect, and the \emph{reporting version}, which reports the actual input tuples in one such solution. Note that for some of our motivating examples, e.g., for understanding robustness, the counting version suffices.
}
\par
\textbf{\revm{Setup:}} We implemented our algorithms in JavaSE-1.8 with the database stored in PostgreSQL 10.12. The experiments were performed on MacOS, with 16GB of RAM and Intel Core i7 2.9 GHz processor. 
We run the experiment 10 times and present the average results (metric) of the 10 runs.

\subsection{\revm{Datasets and Queries}}\label{sec:expt-data}
\textbf{\revm{TPC-H dataset and queries:}} The TPC-H dataset has three relations: Supplier(S:NK, SK), PartSupp(PS:SK, PK), LineItem(L:OK, SK, PK).
Consider the following two queries: 
\emph{(1) Remove least number of orders or suppliers so that at least $\rho$\% trading records can be restricted. (2) The same query but
for the specific PartKey = 13370.} 
They can be characterized by two problems $\ourprob(Q_1, D, k)$ and $\ourprob(\sigma_\theta Q_1, D, k_\theta)$  respectively, where
\begin{itemize}
	\item $Q_1$(NK, SK, PK, OK):-Supplier(S: NK, SK), PartSupp(PS: SK, PK), LineItem(L: OK, PK), $\theta: PK = 13370$, $k_\theta = \rho \cdot |\sigma_{\theta} Q(D)|$ and $k = \rho \cdot |Q(D)|$, where $\rho$ fraction of outputs are removed.
\end{itemize}
As shown in Lemma~\ref{LEM:SELECTION}, the $\ourprob(\sigma_\theta Q_1, D, k)$ is poly-time solvable with exact optimal solution returned, while the $\ourprob(Q_1, D, k)$ is NP-hard with only heuristic solution returned, by \computeADP.

\textbf{\revm{SNAP dataset and queries}:} We use the common ego-networks from SNAP (Stanford Network Analysis Project)~\cite{snap} for Facebook, where an ego-network of a user is a set of ``social circles'' formed by this user's friends~\cite{leskovec2012learning}. This dataset consists 10 ego-networks, 4233 circles, 4039 nodes, and 88234 edges. We choose the network around user 414 which consists of 7 circles, 150 nodes and 3386 edges. We further create tables $R_i(A,B)$ for $i \in [4]$ and insert $E_j$ into $R_i$ if the rank of $E_j \mod 4 = i$. All edges are bi-directed. We evaluate three different queries on this dataset as below:
\begin{itemize}
\itemsep0em
	\item $Q_2(A,B,C,D):- R_1(A,B), R_2(B,C), R_3(C,D)$
	\item $Q_3(A,B,C):- R_1(A,B), R_2(B,C), R_3(C,A)$
	\item $Q_4(A,C,E,G):- R_1(A,B), R_2(B,C), R_3(E,F), R_4(F,G)$. 
	\item \revb{$Q_5(A,B,C):- R_1(A,E), R_2(B,E), R_3(C,E)$}
\end{itemize} 
which are commonly used in community detection or friend recommendation over social networks. For instance, $Q_2$ finds a path of length three, $Q_3$ finds a triangle, $Q_4$ finds a pair of length-2 connection, and $Q_5$ captures a common friend. All of them are NP-hard, so \computeADP\ only returns heuristic results. 

\subsection{Scalability} 

\textbf{\revm{Poly-time query:}} We evaluate $\ourprob(\sigma_\theta Q_1, D, k_\theta)$ on the TPC-H dataset with different input sizes \revm{$N = $1k, 10k, 100k, 1M, 10M}, which denotes the number of survived tuples after selection. \revm{We use different fractions $\rho = 0.1, 0.25, 0.5, 0.75$. Figure~\ref{fig:Q1Selection} display the results for both reporting and counting versions. The running time increases with increase of input data size and the $\rho$. Since the counting version only performs computation on numbers in dynamic programming, it uses much less memory and behaves much more scalable than the reporting version does. Moreover, as a remedy for reporting results when the data size becomes large, we also test the \greedy\ and \drastic\ on $\sigma_\theta Q_1$ (by 
directly invoking Line 5
in Algorithm~\ref{algo:computeadp}), whose running time is much smaller than the exact algorithm as shown in Figure~\ref{fig:heuristicsEasyTime}. Meanwhile, we also show the quality of these three techniques in Figure~\ref{fig:Q1SelectionQuality}. All of them coincide due to the data distribution for $\sigma_{\theta}Q_1$, which implies that \greedy\ and \drastic\ 
also find optimal solutions. But \greedy\ is not as scalable as \drastic\ to larger dataset with input size 100K or more.

\textbf{Hard query:} We next evaluate $\ourprob(Q_1, D, k_\theta)$ on the TPC-H dataset with different input sizes $N = $1k, 10k, 100k, 1M, 10M and $\rho = 0.1, 0.25, 0.5, 0.75$ using \greedy\ and \drastic\ separately. Since \drastic\ only computes the ``profit'' for all input tuples through a SQL query once, while \greedy\ needs to update these statistics once an input tuple is removed. Thus, \drastic\ takes much less time than \greedy, as shown in Figure~\ref{fig:hard-time}.   
We also compare the quality of solutions returned by these two heuristics, as shown in Figure~\ref{fig:hard-quality}. Due to the data distribution (which is varied in Section~\ref{sec:expt-distribution}), \greedy\ and \drastic\ have the same quality when data size is smaller than 100K. However, \greedy\ is not scalable to larger dataset and quality results are only shown for \drastic\ in Figure~\ref{fig:hard-quality}.

\textbf{Comparison with brute-force:} Next, we  evaluate the {\sc BruteForce} algorithm on the TPC-H dataset for the NP-hard query $\ourprob(Q_1, D, k)$ with input size $N= 500$ and $\rho = 0.1$. The straightforward brute-force implementation does not work even on such a small dataset, since it iterates over all subsets of input tuples and issues as many as $2^{500}$ SQL queries in total. We use an optimization here by iterating all subsets in increasing order of their sizes, until a feasible solution (removing at least $k$ query results) is found. 

We compare the optimized {\sc BruteForce} with two heuristics. All three algorithms have their quality coinciding for this small dataset, as shown in Figure~\ref{fig:bruteforceQuality}. But heuristics significantly improve the running time of {\sc BruteForce}, as shown in Figure~\ref{fig:bruteforceTime}. The {\sc BruteForce} did not stop in several hours for $N = 1000$ or $\rho = 0.2$.


}

\eat{It should be noted that we count the number of tuples passing the selection as input size for $\ourprob(\sigma_\theta Q_1, D, k)$, while the number of tuples in total as input size for  $\ourprob(Q_1, D, k_\theta)$.
 Figures~\ref{fig:tpch-w} and \ref{fig:tpch-o} display the results. As expected, the running time increases with increasing input data size $|D|$ and the number of query results to be removed. For smaller dataset, the algorithms take a few seconds whereas for larger dataset it goes to more than few hundred seconds finishing in reasonable time.}
\eat{
\begin{figure}[h]
	\minipage{0.23\textwidth}
	\includegraphics[width=\linewidth]{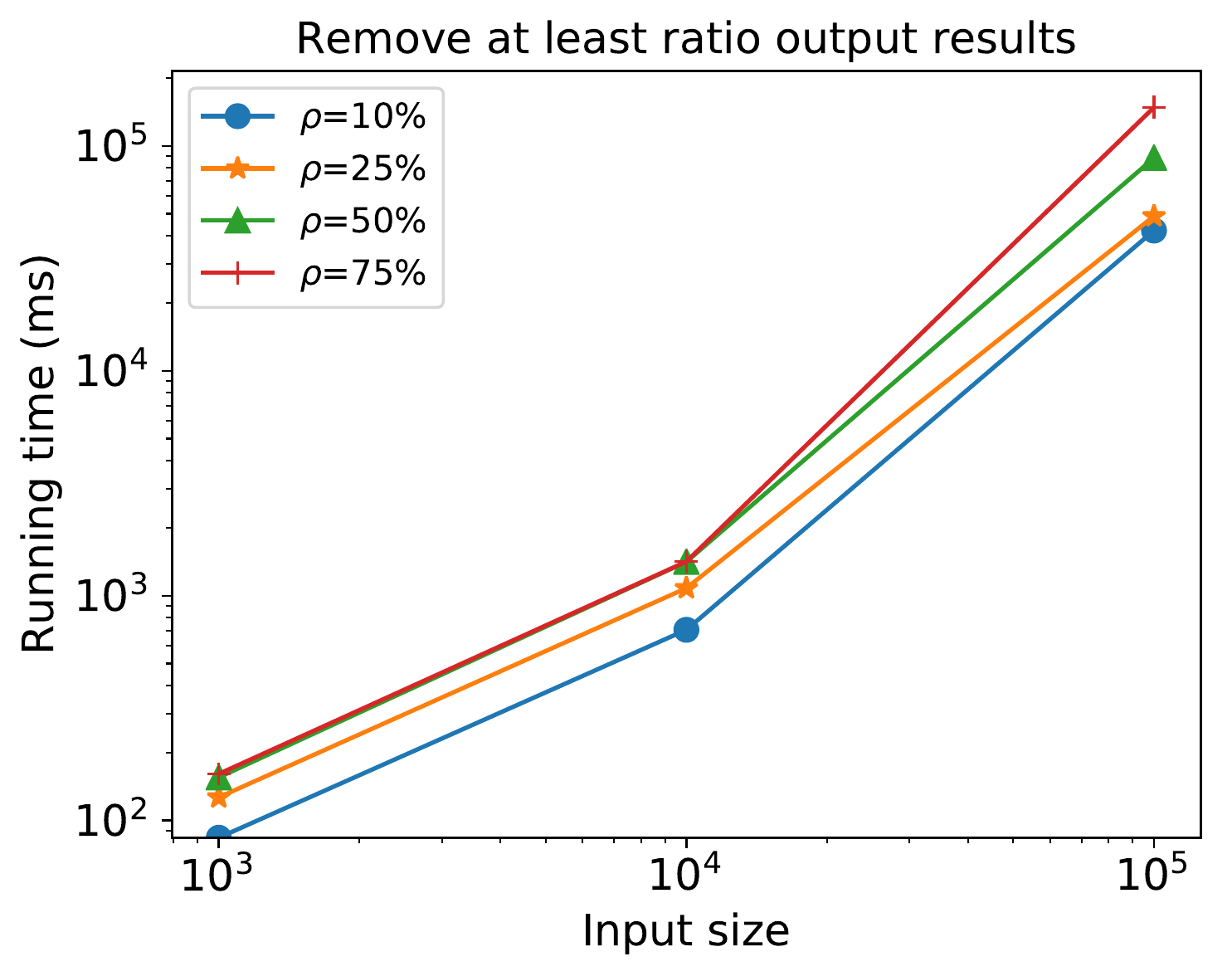}
	\caption{Query $\sigma_{\theta}Q_1$.}
	\label{fig:Q1SelectionReport}
	\endminipage\hfill
	\minipage{0.23\textwidth}
	\includegraphics[width=\linewidth]{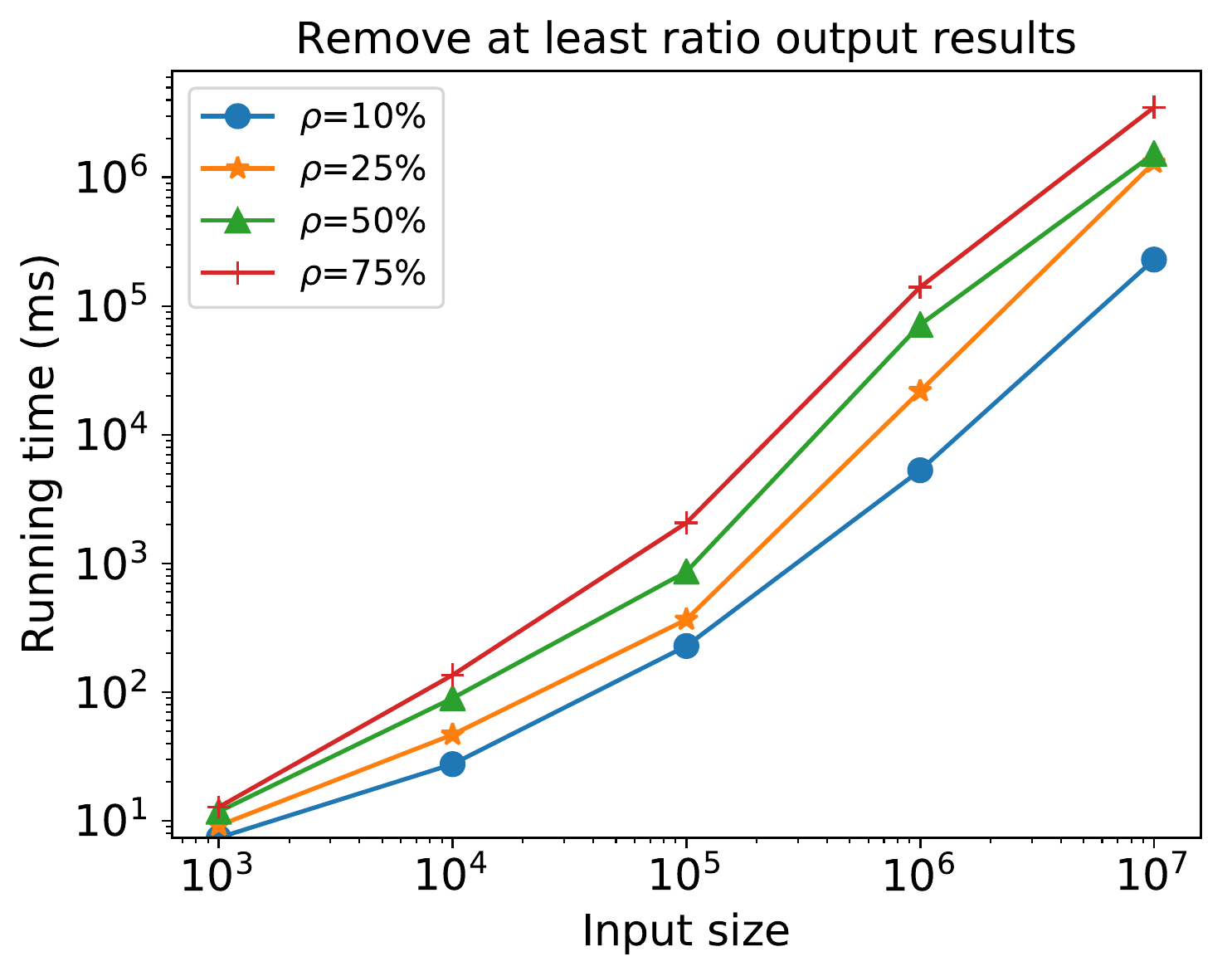}
	\caption{Query $Q_1$.}
	\label{fig:Q1SelectionCount}
	\endminipage\hfill
\end{figure}
}

\subsection{Complexity of Queries}


\revm{For each of $Q_2, Q_3, Q_4,Q_5$, we ran our experiments on the SNAP dataset and varied the fraction of query results to be removed (denoted as $\rho$) over $\{0.1, 0.25, 0.5, 0.75\}$. We evaluated \greedy\ and \drastic\ as follows. First, we invoked {\sc GreedyForCQ} directly on $Q_2, Q_3, Q_5$ since neither of the simplification steps can be applied to these queries. For $Q_4$, \greedy\ first decomposes it into two subqueries as $Q_{41}(A,C): -R_1(A,B), R_2(B,C)$ and $Q_{42}(E,G): -R_3(E,F), R_4(F,G)$ using \decompose, and handles them using {\sc GreedyForCQ} separately. Next, we invoked {\sc DrasticGreedyForFullCQ} on $Q_2, Q_3$ directly. All running times are displayed in Figure~\ref{fig:facebookTime}. As \drastic\ cannot be applied to $Q_4, Q_5$ with projection, these are not in Figure~\ref{fig:facebookTime}. The quality of these heuristics is displayed in Figure~\ref{fig:facebookQuality}.


The running time of \drastic\ depends on (i) the number of endogenous relations, (ii) computing the profits for all tuples in an endogenous relation by SQL queries, (iii) sorting the tuples by profit, and (iv) finding tuples with largest profits whose profits add up to at least $k$. Note that $Q_2, Q_3$ are executed on the same dataset and the number of input tuples to be removed are almost the same (see Figure~\ref{fig:facebookQuality}). So Figure~\ref{fig:facebookTime} displays the difference in runtimes for executing the SQL queries for $Q_2, Q_3$.

The running time of \greedy\ depends on (i) the number of 
iterations of the while loop, which is equal to the number of
input tuples to be removed, (ii) the number of SQL queries for each iteration of the while loop, which is the number of endogenous relations, and (iii) the time for executing one SQL query. On $Q_2, Q_3, Q_5$, \greedy\ removes almost the same number of tuples as shown in Figure~\ref{fig:facebookQuality}. So, Figure~\ref{fig:facebookTime} displays the difference in running time for executing SQL queries for $Q_2, Q_3, Q_5$ respectively. Note that \greedy\ needs to solve a dynamic program in \decompose\ as well as a large number of sub-problems for both $Q_{41}, Q_{42}$, which is only relevant to the sizes of their own query results, 
so $Q_4$ has a larger and stable running time even though it removes much fewer input tuples. 
}

%

\revm{
\subsection{Data Distribution}
\label{sec:expt-distribution}


We study the performance of \computeADP\ for a poly-time solvable singleton query $Q_6(A,B):-R_1(A),\\ R_2(A,B)$ and an NP-hard query $\pathquery(A,B):-R_1(A),R_2(A,B),R_3(B)$ on various data distributions, where the degrees of values from $A$ or $B$ in relation $R_2(A,B)$ is varied according to to obtain the different distributions. 
We used the Zipfian distribution, where the frequency of the $i$-th distinct key is proportional to $i^{-\alpha}$. The parameter $\alpha \ge 0$ controls the skewness of the distribution: larger $\alpha$ means larger skew. We fix the distribution of degrees for values in $B$ as uniform and vary the skewness of degrees of values in $A$ by varying $\alpha$. We evaluate both $Q_6$ and $\pathquery$ on our synthetic dataset with different input sizes $N = 1k, 10k, 100k, 1M$ and $0.2N$ distinct values in $A$ and $B$ separately. The results for $\pathquery$ are shown in Figure~\ref{fig:0skewTime}--\ref{fig:1skewQuality}, and those for $Q_6$ are shown in Figure~\ref{fig:0skewExactTime}--\ref{fig:1skewExactQuality}. 
We also tested other values of $\alpha$, which are reported in Figures~\ref{fig:0.25skewTime}, \ref{fig:0.25skewQuality}, \ref{fig:0.5skewTime}, \ref{fig:0.5skewQuality}.

For every fixed value of $\alpha$, the running time as well as the size of solutions returned by any algorithm increase with the input size and the value of $\rho$. If both the input size and $\rho$ are fixed, the size of the solution decreases with increasing $\alpha$. This is because on a skewed instance, the same number of output tuples 
can be removed by removing fewer input tuples. 
The running time for \drastic\ and 
{\sc Exact}
stays almost the same since computing the profits for input tuples is the most costly step, independent of the size of the solution. However, the running time of \greedy\ decreases with the size of the solution, which is affected by $\alpha$.

\begin{figure*}
	\minipage{0.24\textwidth}
	\includegraphics[width=\linewidth]{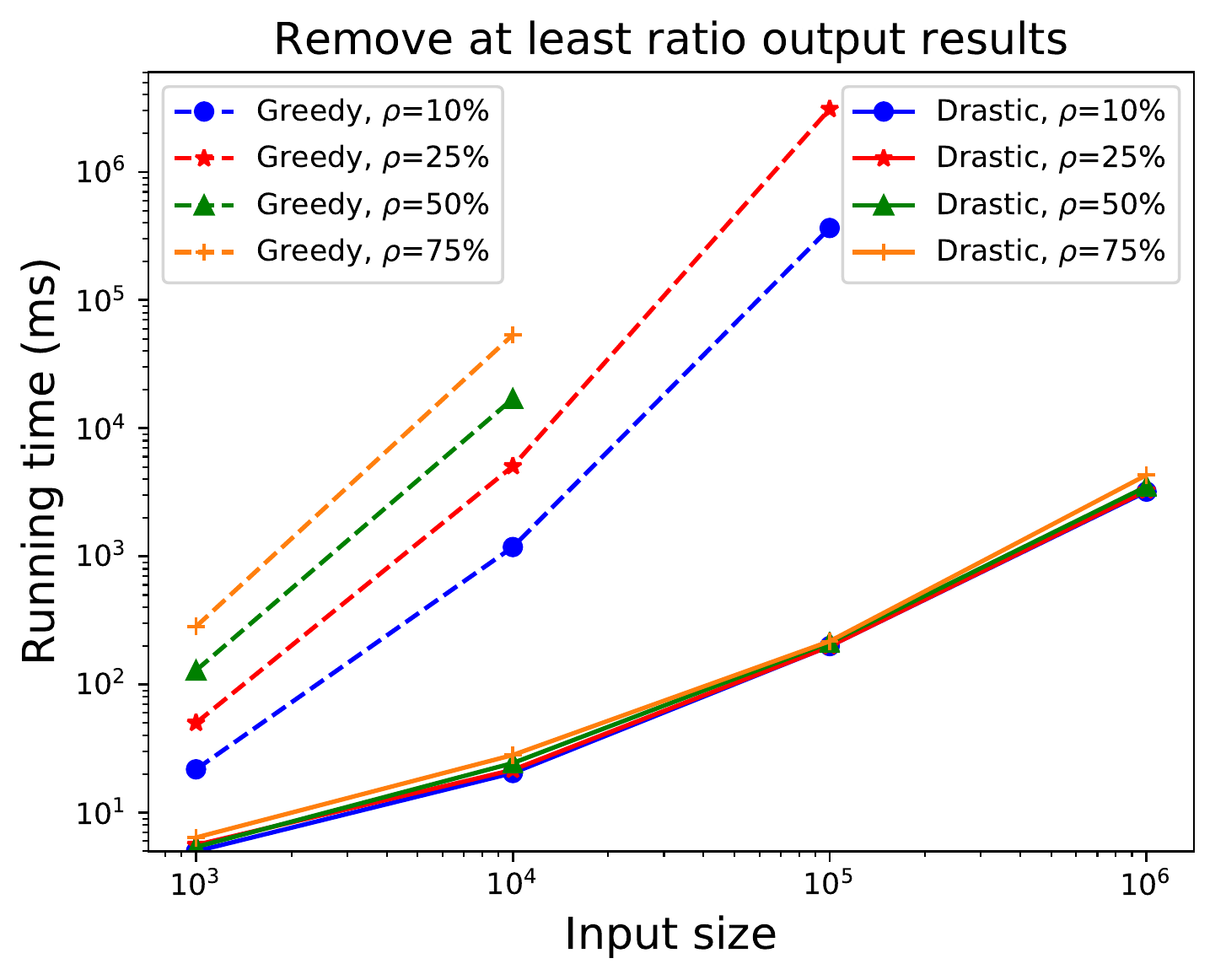}
	\caption{ {\revm{$\alpha = 0$ (hard)}}}
	\label{fig:0skewTime}
	\endminipage\hfill
	\minipage{0.24\textwidth}
	\includegraphics[width=\linewidth]{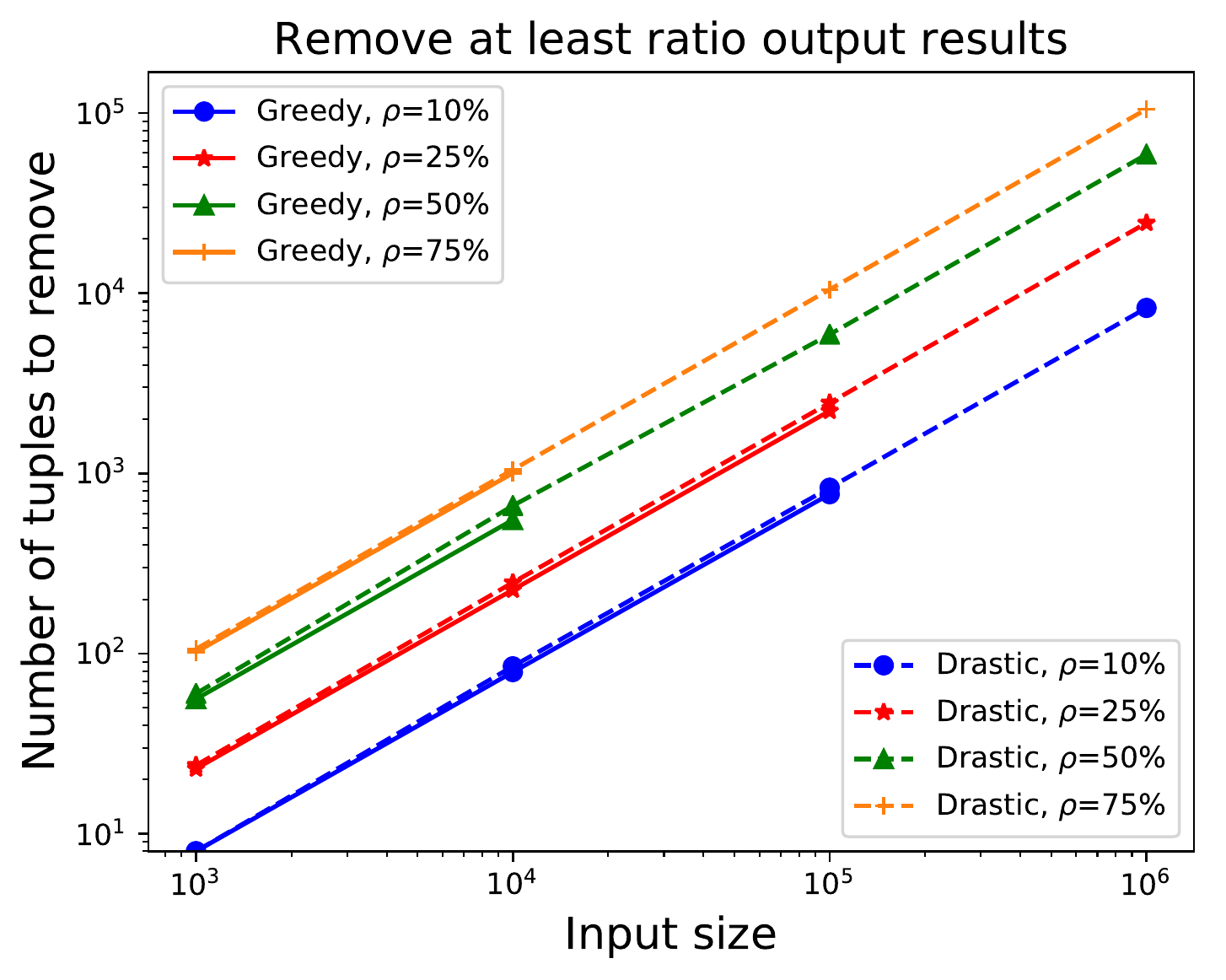}
	\caption{\revm{$\alpha = 0$ (hard)}}
	\label{fig:0skewQuality}
	\endminipage\hfill
	\minipage{0.24\textwidth}
	\includegraphics[width=\linewidth]{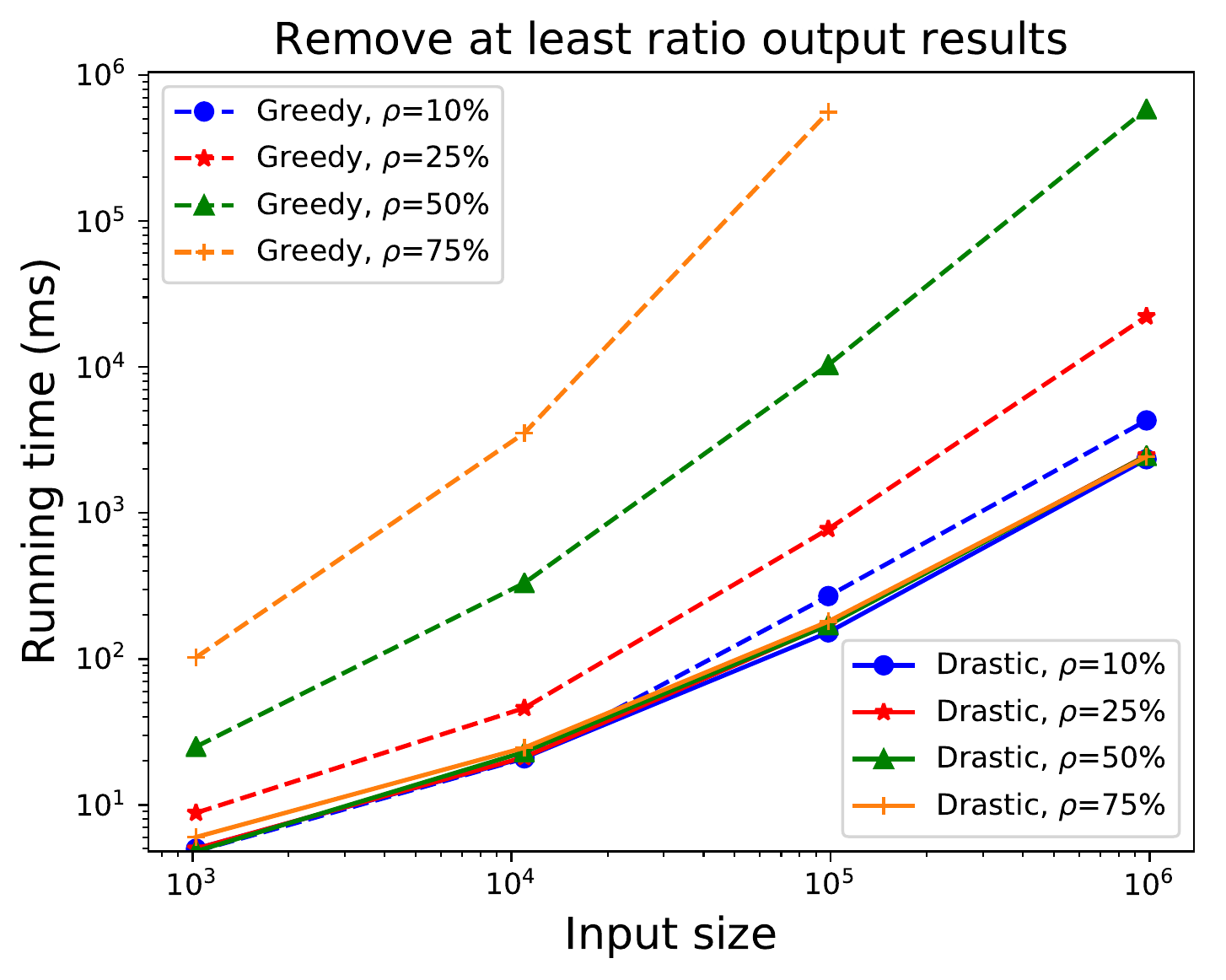}
	\caption{\revm{$\alpha = 1$ (hard)}}
	\label{fig:1skewTime}
	\endminipage \hspace{0.6em}
	\minipage{0.24\textwidth}
	\includegraphics[width=\linewidth]{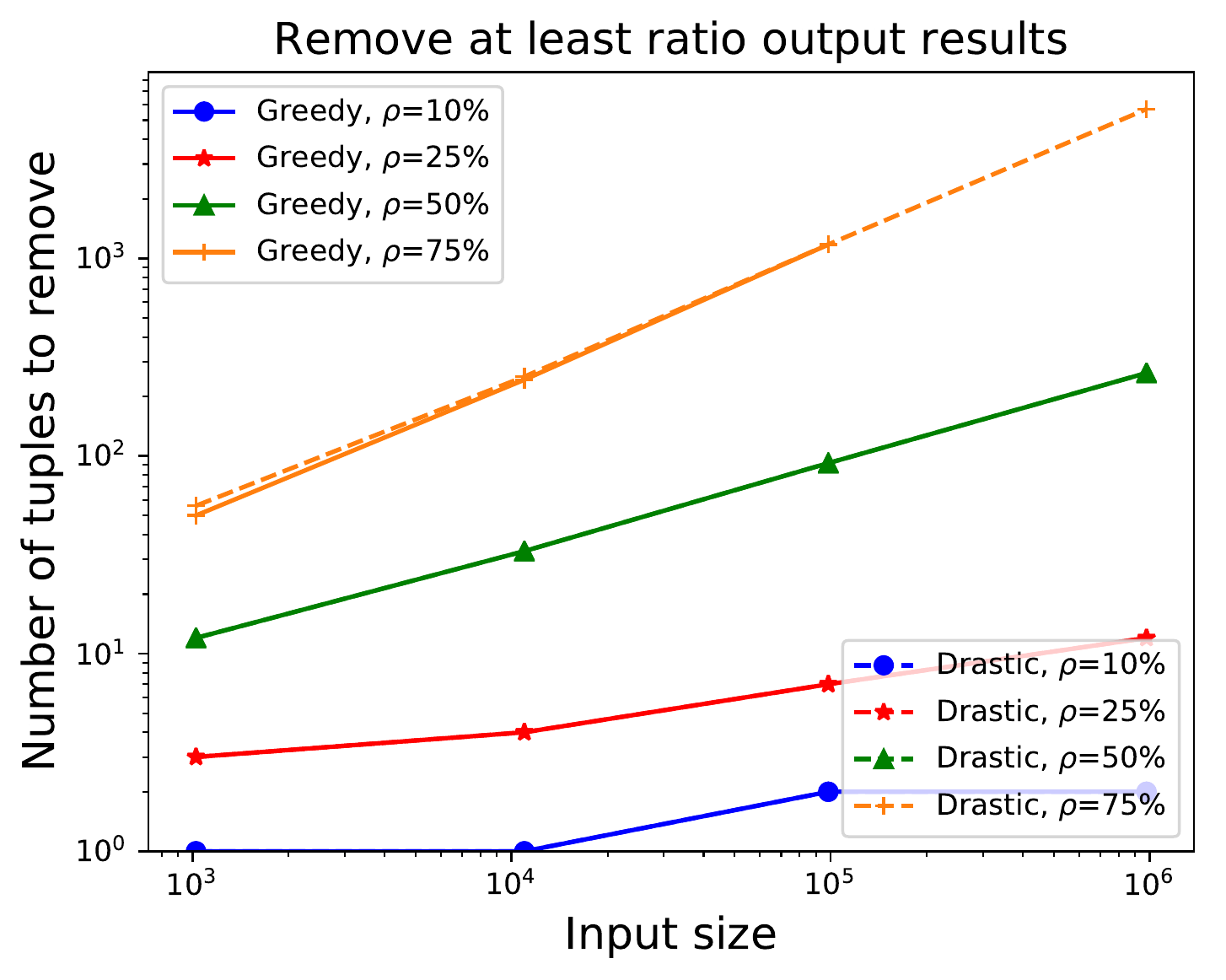}
	\caption{\revm{$\alpha = 1$ (hard)}}
	\label{fig:1skewQuality}
	\endminipage\hfill
	\minipage{0.24\textwidth}
	\includegraphics[width=\linewidth]{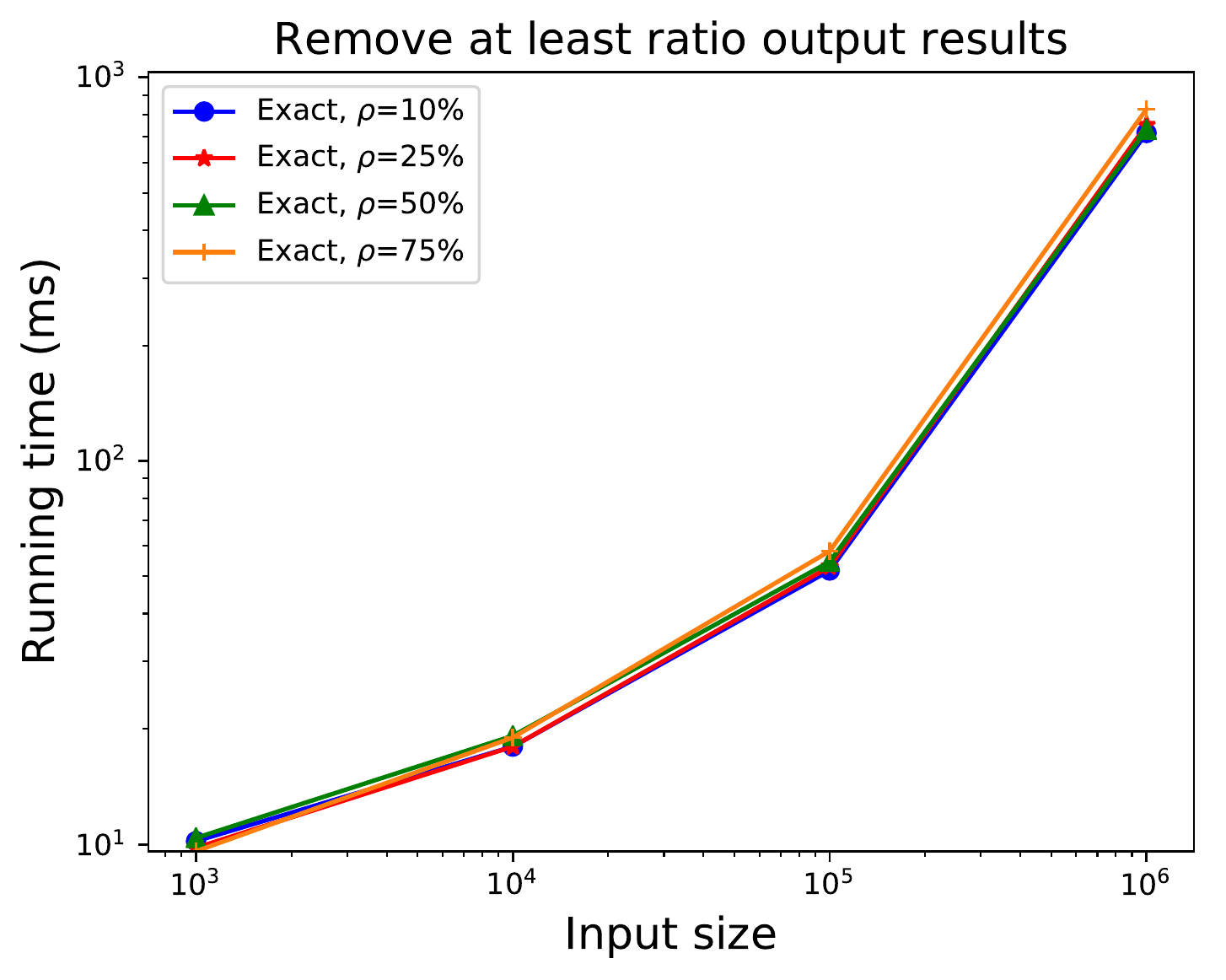}
	\caption{\revm{$\alpha = 0$ (easy)}}
	\label{fig:0skewExactTime}
	\endminipage\hfill
	\minipage{0.24\textwidth}
	\includegraphics[width=\linewidth]{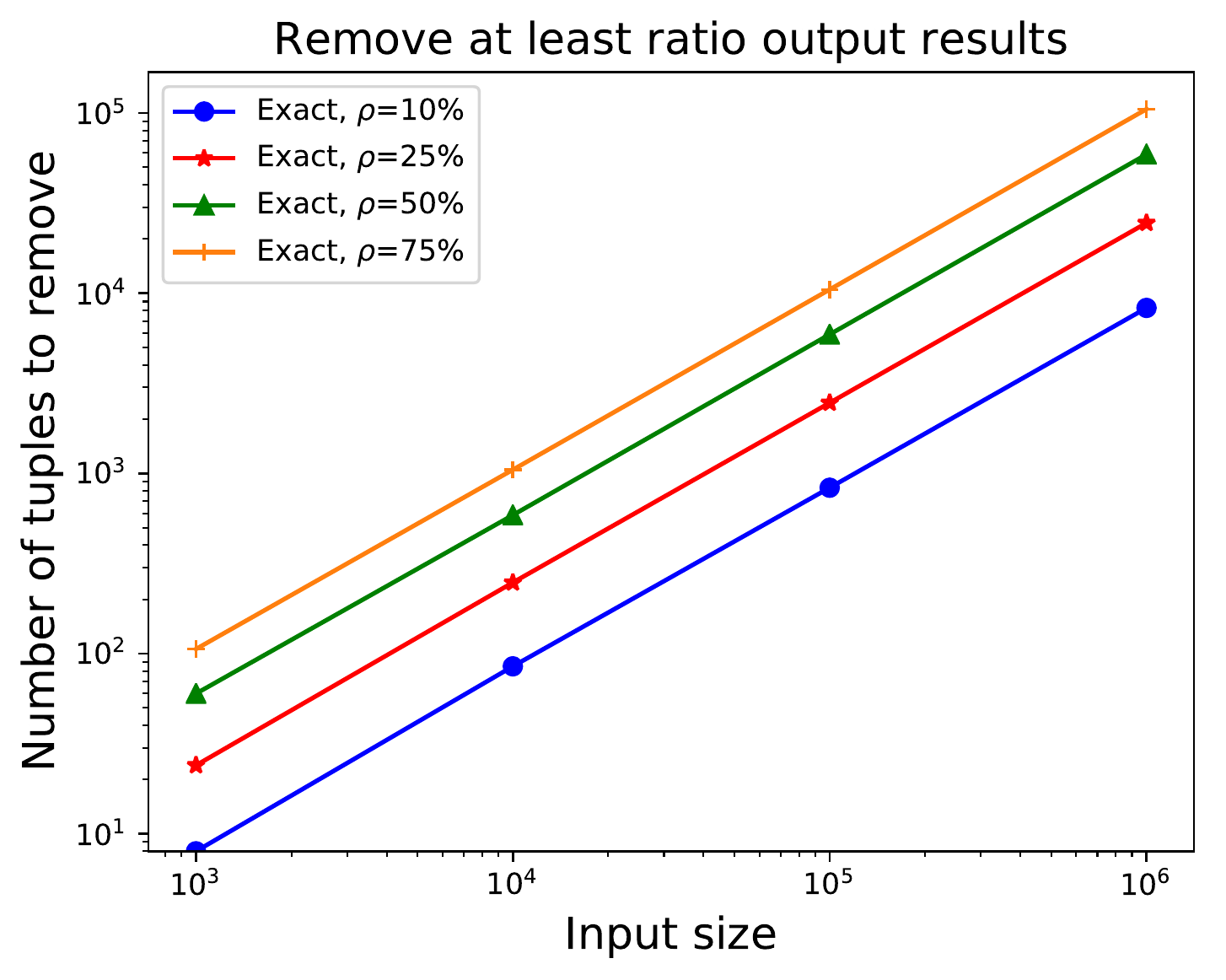}
	\caption{\revm{$\alpha = 0$ (easy)}}
	\label{fig:0skewExactQuality}
	\endminipage\hfill
	\minipage{0.24\textwidth}
	\includegraphics[width=\linewidth]{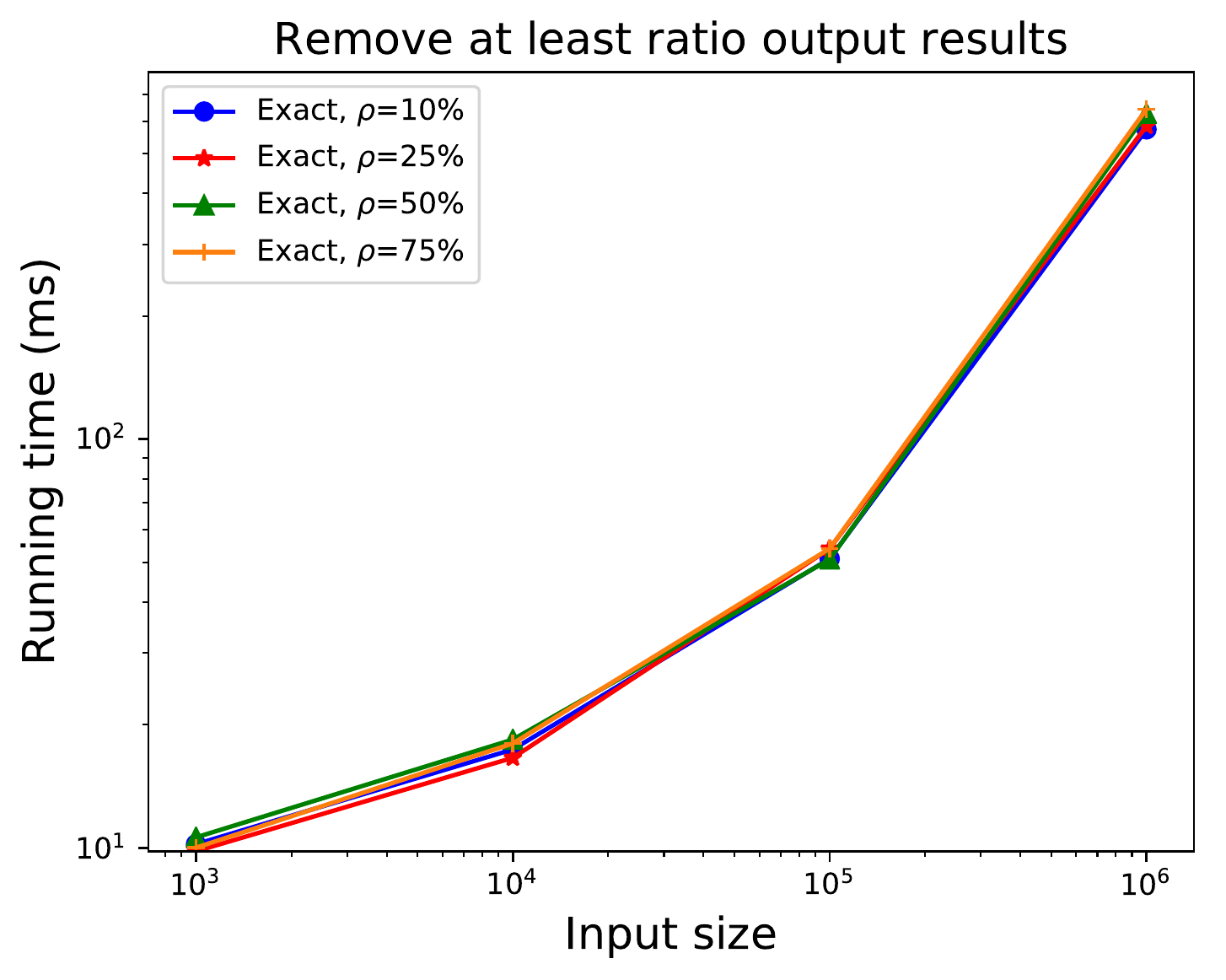}
	\caption{\revm{$\alpha = 1$ (easy)}}
	\label{fig:1skewExactTime}
	\endminipage \hspace{0.6em}
	\minipage{0.24\textwidth}
	\includegraphics[width=\linewidth]{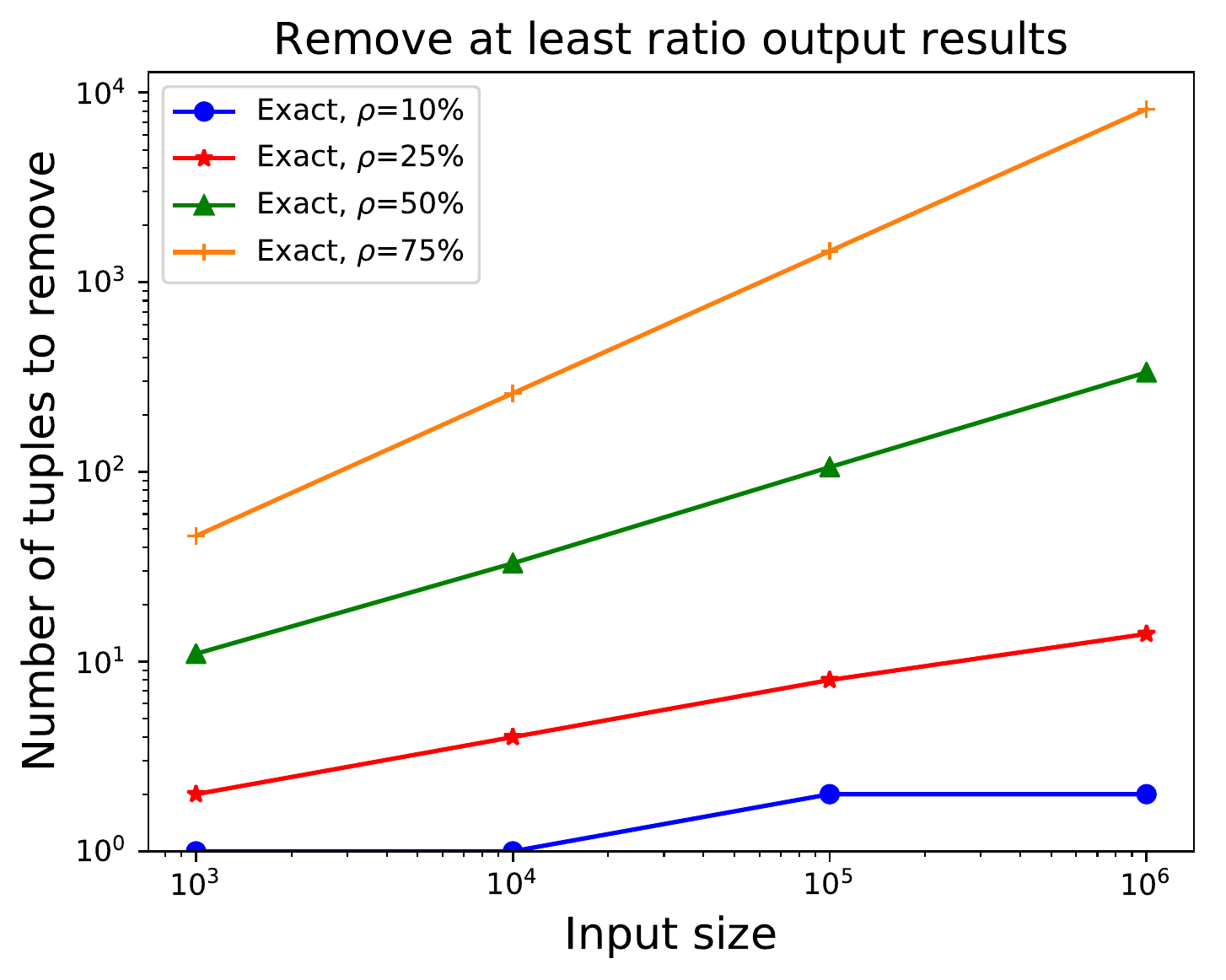}
	\caption{\revm{$\alpha = 1$ (easy)}}
	\label{fig:1skewExactQuality}
	\endminipage\hfill
	\minipage{0.25\textwidth}
	\includegraphics[width=\linewidth]{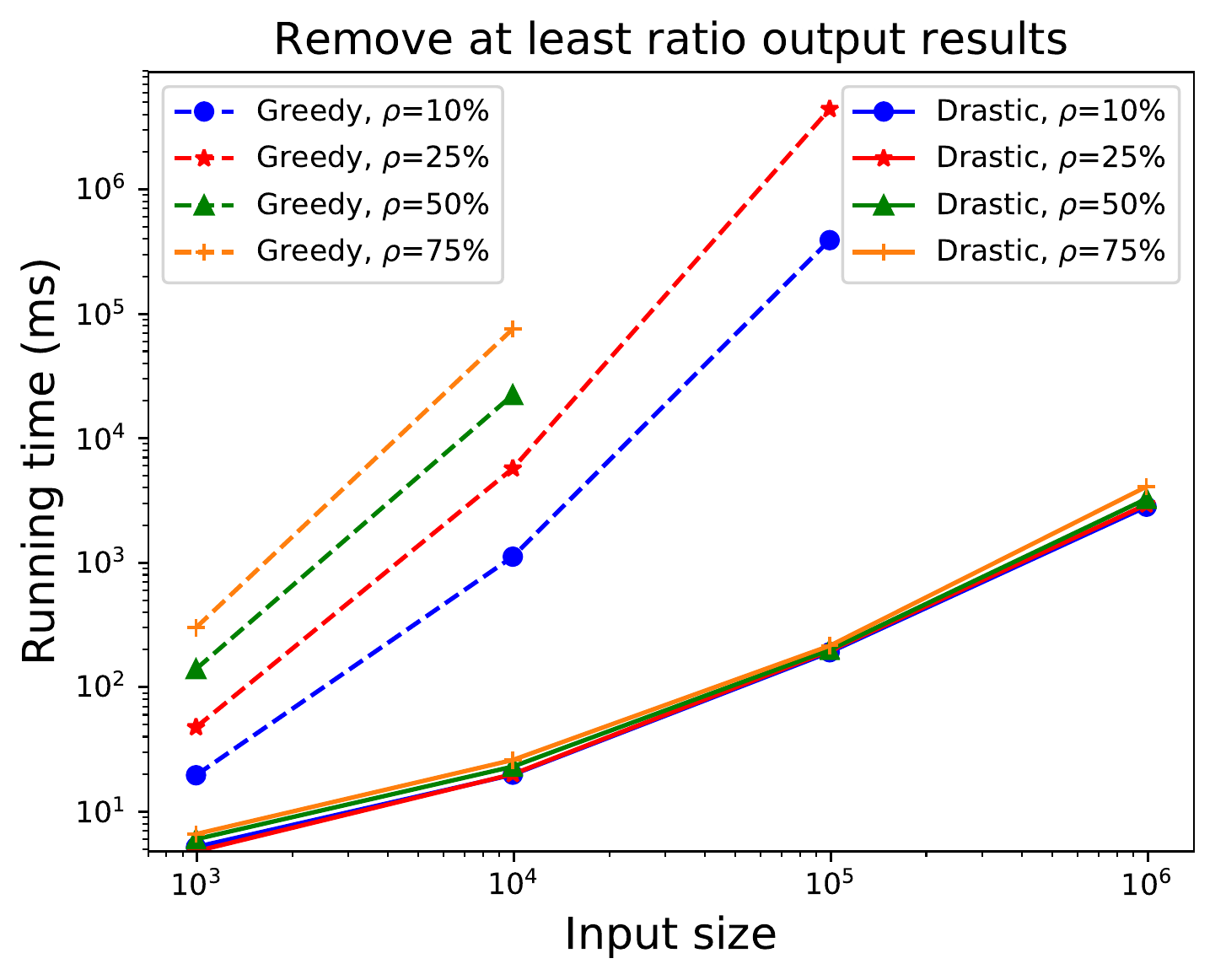}
	\caption{\revm{$\alpha = 0.25$ (hard)}}
	\label{fig:0.25skewTime}
	\endminipage\hfill
	\minipage{0.25\textwidth}
	\includegraphics[width=\linewidth]{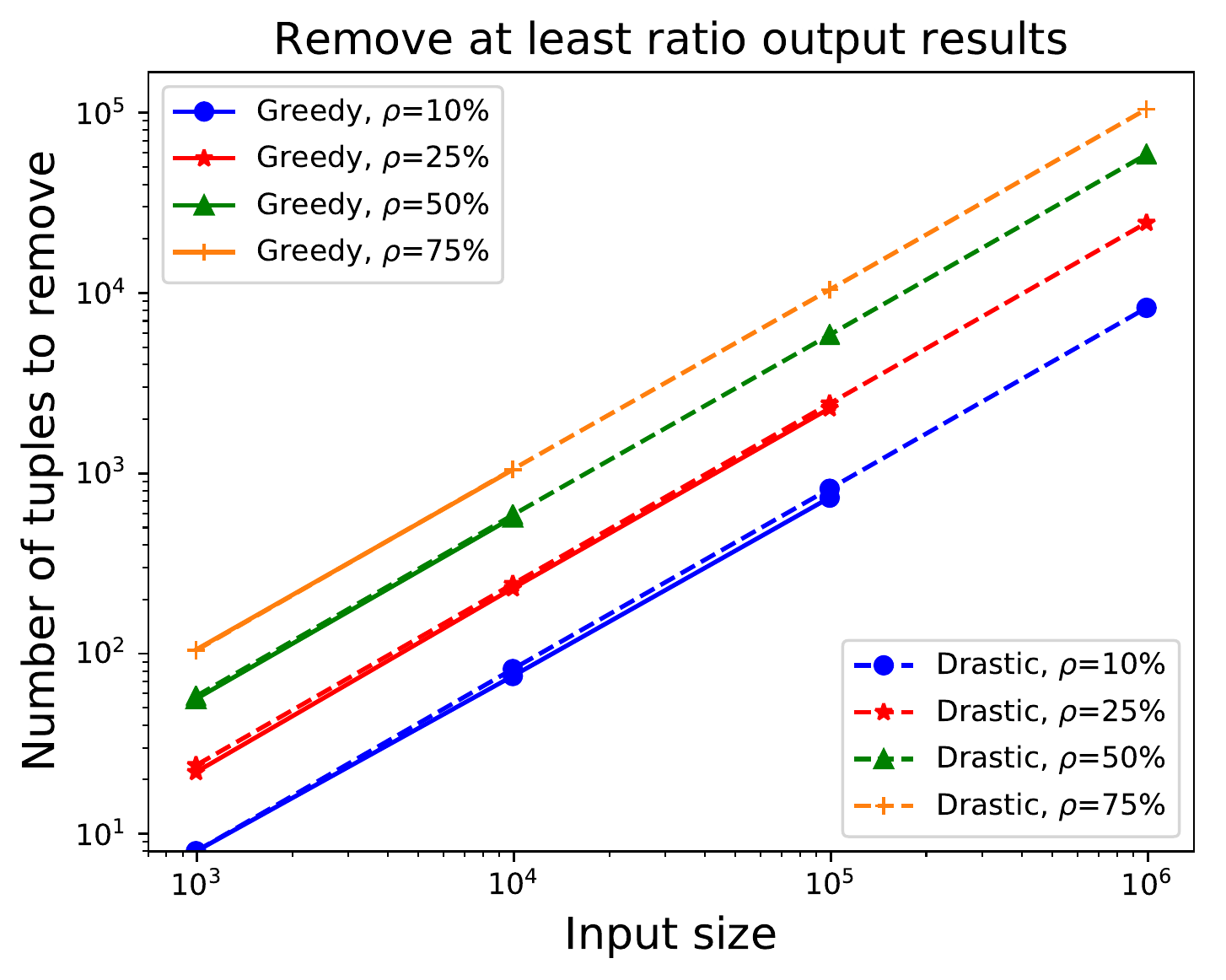}
	\caption{\revm{$\alpha = 0.25$ (hard)}}
	\label{fig:0.25skewQuality}
	\endminipage\hfill
	\minipage{0.24\textwidth}
	\includegraphics[width=\linewidth]{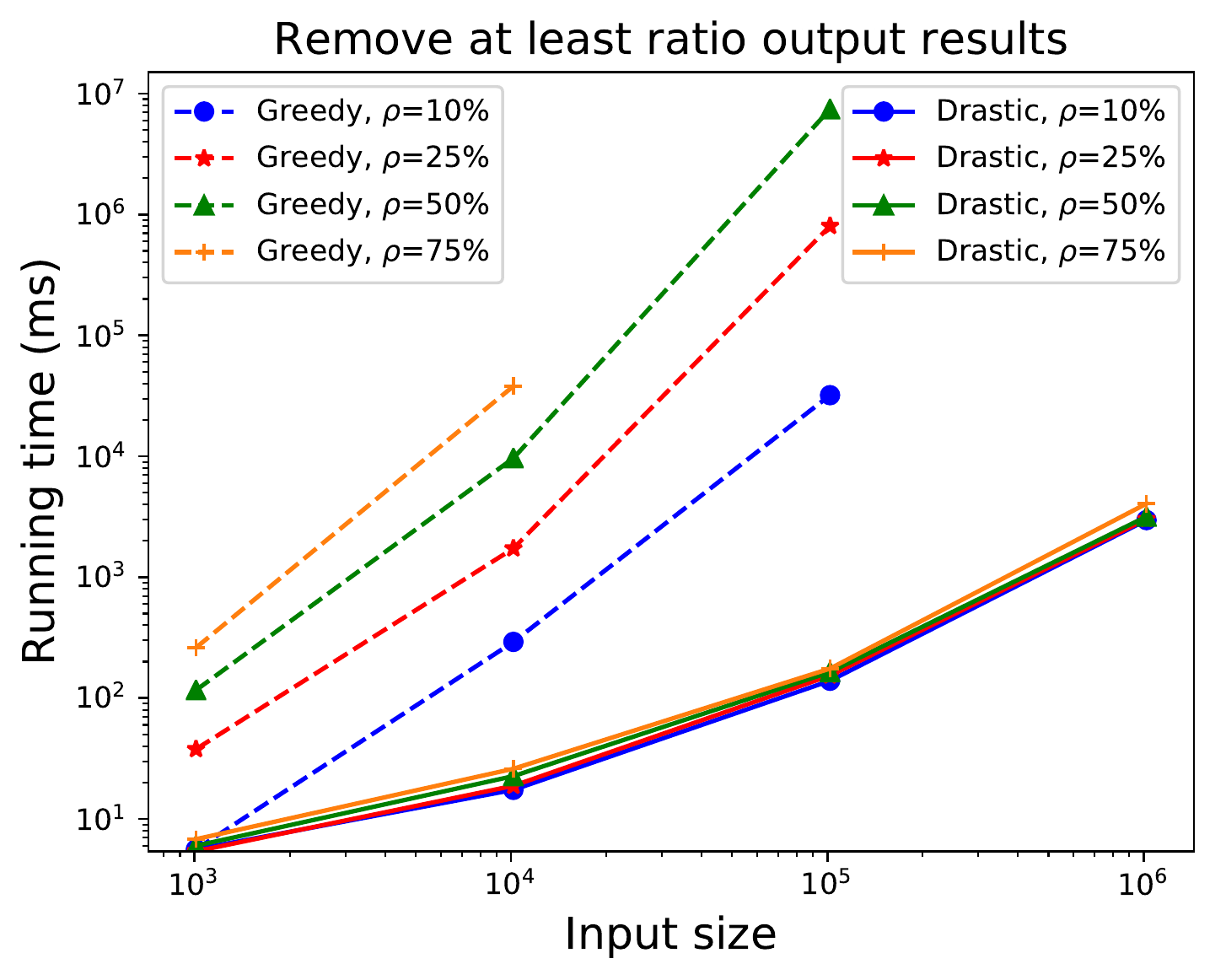}
	\caption{\revm{$\alpha = 0.5$ (hard)}}
	\label{fig:0.5skewTime}
	\endminipage\hfill
	\minipage{0.24\textwidth}
	\includegraphics[width=\linewidth]{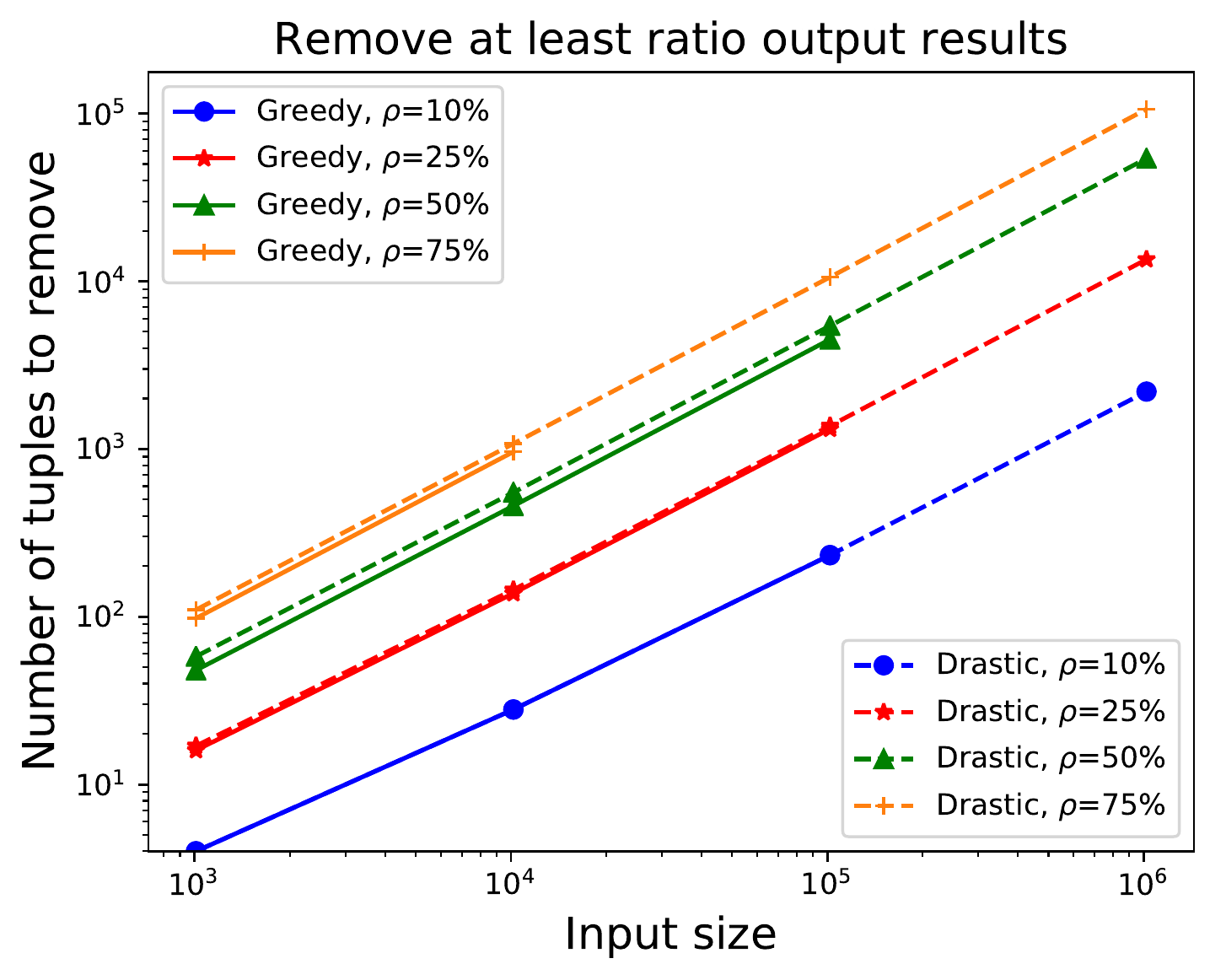}
	\caption{\revm{$\alpha = 0.5$ (hard)}}
	\label{fig:0.5skewQuality}
	\endminipage
\end{figure*}

}

\subsection{Optimizations}
\label{sec:expt-opt}
Next, we evaluate our optimizations on synthetic datasets. We use the following two queries: a singleton query $Q_5$ (attributes in $R_1$ are universal) and a disconnected query $Q_6$ (that can be  decomposed into three easy queries).
\begin{itemize}
	\item $Q_7(A,B,C,D,E,F,G):- R_1(A,B,C), R_2(A,B,C,D,E),R_3(A,B,C,D,G), R_4(A,B,C,F)$
	\item $Q_8(A_1,\cdots,C_3):- R_{11}(A_1), R_{12}(A_1, B_1), R_{21}(A_2), R_{22}(A_2,B_2), R_{31}(A_3), R_{32}(A_3, B_3)$
\end{itemize}
\revm{
We generate relatively small synthetic datasets, as the non-optimized algorithm would take prohibitively long time on larger ones. For $Q_7$, each relation has $500$ input tuples and each tuple is randomly generated with a combination of integers between 1 and 100; for $Q_8$, $R_{11}, R_{21}, R_{31}$ each has $25$ input tuples and $R_{12}, R_{22}, R_{32}$ each has $50$. Each input tuple is randomly generated with a combination of integers between 1 and 100. }
%
For $\ourprob(Q_7,D,k)$, we compare three different strategies:  (1) removing universal attributes $A,B,C$ one by one, (2) removing $A,B,C$ together, and (3) invoking procedure \singleton$(Q_7,D,k)$ based on sorting; the results are shown in Figure~\ref{fig:singleton}. For $\ourprob(Q_8,D,k)$, we  compare three different strategies: (1) decompose into 3 partitions at once, (2) decompose into $2$ partitions each time, and (3) improved dynamic programming; the results are shown in Figure~\ref{fig:decomposition}.  Note that all these strategies will compute all subproblems $\ourprob(Q_{i},D,k)$ for each subquery $Q_i(A_i, B_i, C_i):- R_{i1}(A_{i}, B_{i}), R_{i2}(A_{i}, B_{i})$, but only differ how the solutions for each subquery are used to construct the optimal solution for the $\ourprob(Q_7,D,k)$ problem. Figures~\ref{fig:singleton} and \ref{fig:decomposition} show that optimizations improve the  running time significantly.

\begin{figure}
	\centering
    \minipage{0.33\textwidth} 
	\includegraphics[width=\linewidth]{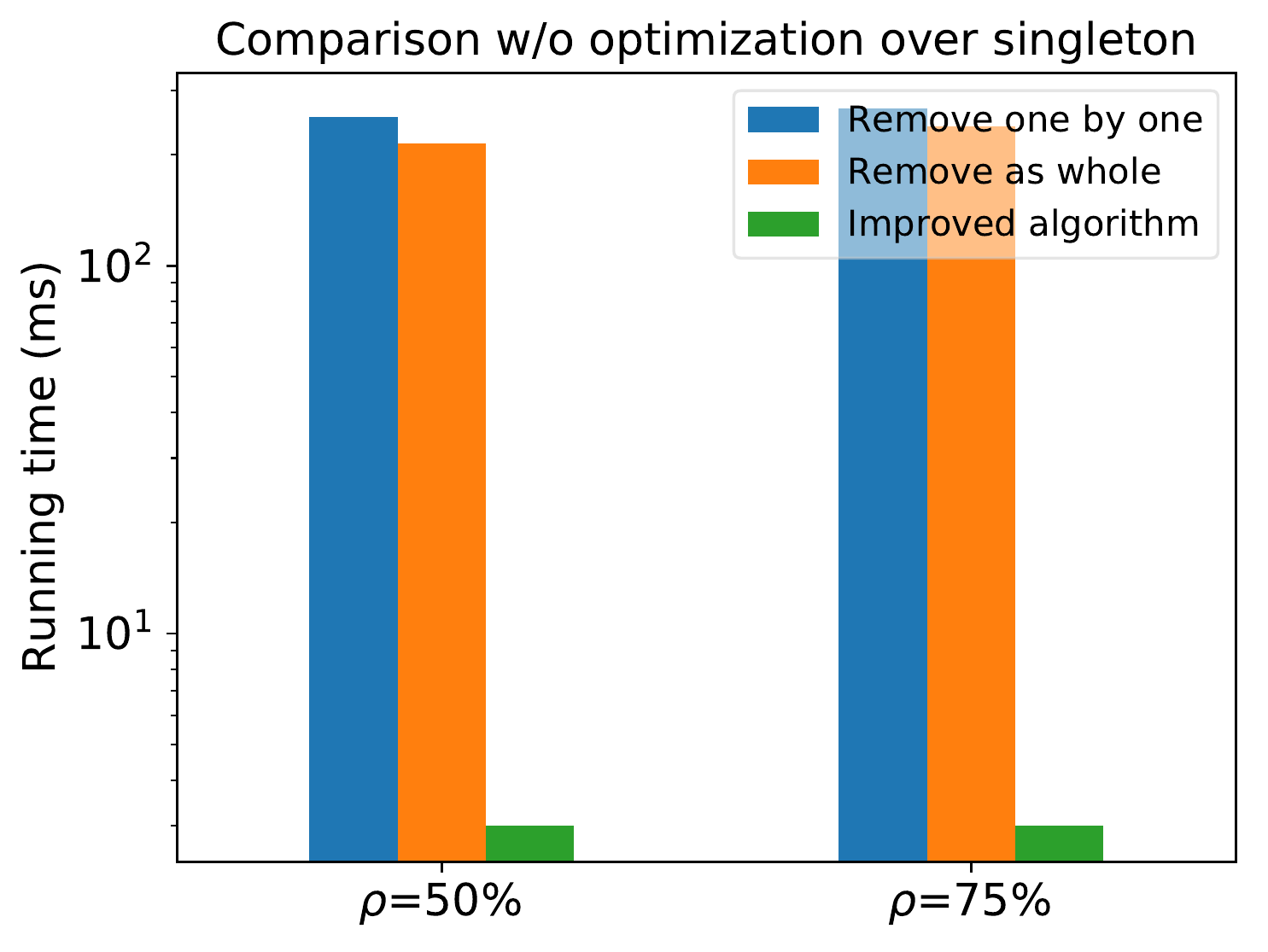}
	\caption{\revm{$Q_7$.}}
	\label{fig:singleton}
	\endminipage
	\minipage{0.33\textwidth}%
	\includegraphics[width=\linewidth]{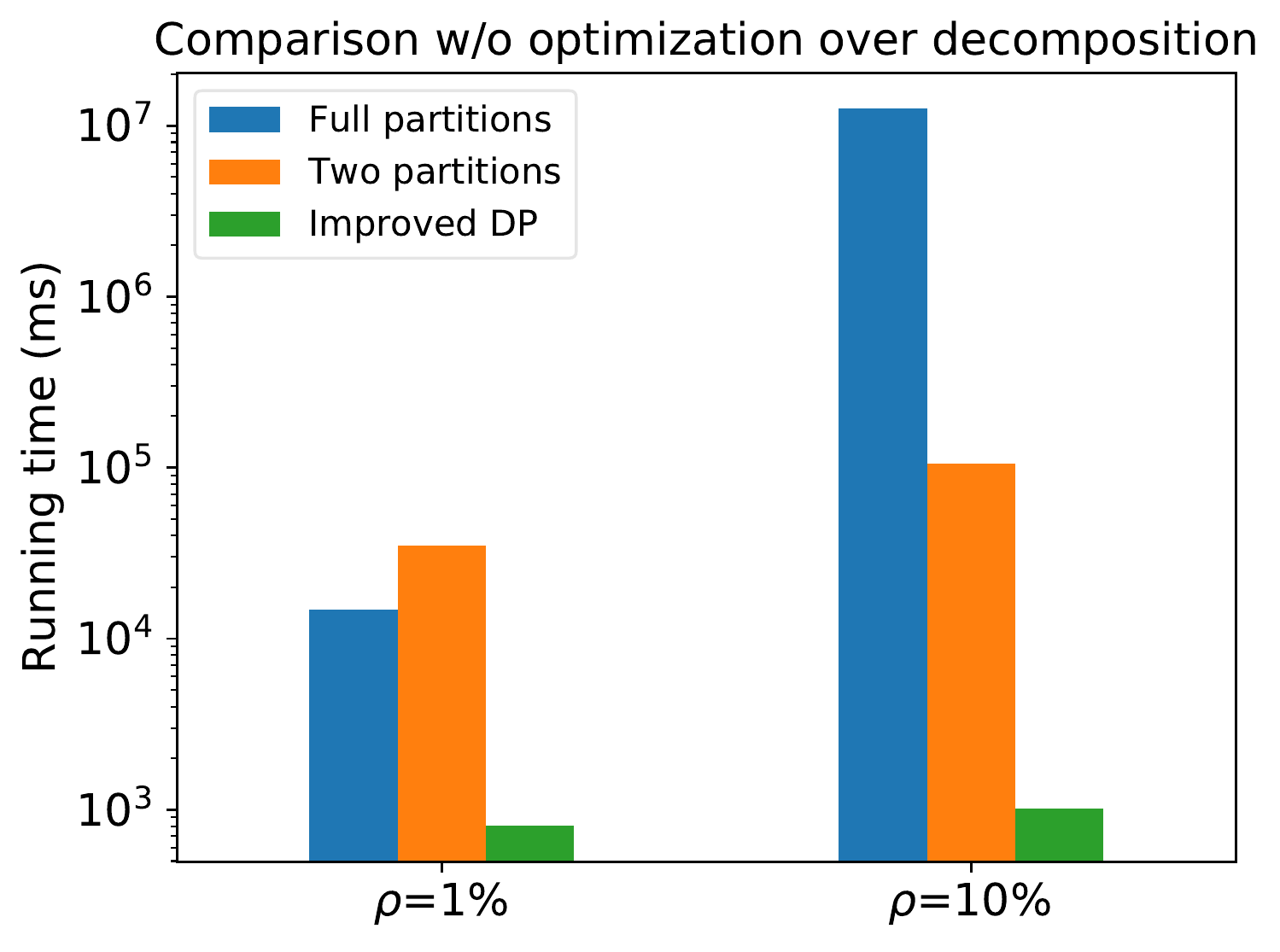}
	\caption{\revm{$Q_8$.}}
	\label{fig:decomposition}
	\endminipage
\end{figure}

\section{Future Work}
\label{sec:conclusions}
\cut{
In this paper, we studied the aggregated deletion propagation (\ourprob) problem for the class of conjunctive query without self-joins, gave a dichotomy to decide whether \ourprob\ for a query is poly-time solvable for all $k$ and instance $D$ both algorithmically and structurally, and also gave approximation results.

Several open questions remain. First, it would be interesting to study the complexity of \ourprob for larger classes of queries involving self-joins, other classes of aggregates like {\em sum}, and also instances with arbitrary weights on the input and output tuples. 
}

\revm{
Several open questions remain. First, it would be interesting to study the \ourprob\ problem beyond CQs. In particular, many natural queries involve self-joins and/or aggregates like {\em sum}, for which the observations of this paper do not apply. It is also natural to consider scenarios where all input tuples are not equivalent in terms of the cost of removing them. 
As a first step, one might want to consider a scenario where only a subset of input tuples can be removed, and the remaining input tuples cannot be deleted.
\cut{It would be interesting to explore whether the results in this paper can be extended to this case by suitably modifying the definitions of concepts such as universal attributes, vacuum relations, etc. A more general question is to consider an arbitrary cost function on the input tuples, the goal being to remove at least $k$ output tuples by deleting a subset of input tuples of minimum cost. 
}
Investigating the approximability of the \ourprob\ problem is another interesting research direction. Although we showed some preliminary results in this context, obtaining an exact characterization of the approximability of this problem for individual queries, even for the special case of the Resilience problem, remains open. 
A related question is that of the parameterized complexity of \ourprob\ with respect to $k$ for full CQs. While we showed that \ourprob\ admits a poly-time algorithm for fixed $k$, obtaining an FPT algorithm for the problem remains open.} 


\clearpage
\bibliographystyle{abbrv}
\bibliography{grading}  

\begin{thebibliography}{10}

\bibitem{applebaum2013pseudorandom}
B.~Applebaum.
\newblock Pseudorandom generators with long stretch and low locality from
  random local one-way functions.
\newblock {\em SIAM Journal on Computing}, 42(5):2008--2037, 2013.

\bibitem{Bancilhon+1981}
F.~Bancilhon and N.~Spyratos.
\newblock Update semantics of relational views.
\newblock {\em ACM Trans. Database Syst.}, 6(4):557--575, Dec. 1981.

\bibitem{Buneman+2002}
P.~Buneman, S.~Khanna, and W.-C. Tan.
\newblock On propagation of deletions and annotations through views.
\newblock In {\em Proceedings of the Twenty-first ACM SIGMOD-SIGACT-SIGART
  Symposium on Principles of Database Systems}, PODS '02, pages 150--158, 2002.

\bibitem{CaskurluMPS17}
B.~Caskurlu, V.~Mkrtchyan, O.~Parekh, and K.~Subramani.
\newblock Partial vertex cover and budgeted maximum coverage in bipartite
  graphs.
\newblock {\em {SIAM} J. Discrete Math.}, 31(3):2172--2184, 2017.

\bibitem{chen2003constrained}
J.~Chen and I.~A. Kanj.
\newblock Constrained minimum vertex cover in bipartite graphs: complexity and
  parameterized algorithms.
\newblock {\em Journal of Computer and System Sciences}, 67(4):833--847, 2003.

\bibitem{chlamtac2018densest}
E.~Chlamt{\'a}c, M.~Dinitz, C.~Konrad, G.~Kortsarz, and G.~Rabanca.
\newblock The densest k-subhypergraph problem.
\newblock {\em SIAM Journal on Discrete Mathematics}, 32(2):1458--1477, 2018.

\bibitem{chlamtavc2017minimizing}
E.~Chlamt{\'a}{\v{c}}, M.~Dinitz, and Y.~Makarychev.
\newblock Minimizing the union: Tight approximations for small set bipartite
  vertex expansion.
\newblock In {\em Proceedings of the Twenty-Eighth Annual ACM-SIAM Symposium on
  Discrete Algorithms}, pages 881--899. SIAM, 2017.

\bibitem{Cong+2006}
G.~Cong, W.~Fan, and F.~Geerts.
\newblock Annotation propagation revisited for key preserving views.
\newblock In {\em Proceedings of the 15th ACM International Conference on
  Information and Knowledge Management}, CIKM '06, pages 632--641, 2006.

\bibitem{DalviS12}
N.~N. Dalvi and D.~Suciu.
\newblock The dichotomy of probabilistic inference for unions of conjunctive
  queries.
\newblock {\em J. {ACM}}, 59(6):30:1--30:87, 2012.

\bibitem{Dayal+1982}
U.~Dayal and P.~A. Bernstein.
\newblock On the correct translation of update operations on relational views.
\newblock {\em ACM Trans. Database Syst.}, 7(3):381--416, Sept. 1982.

\bibitem{FreireGIM15}
C.~Freire, W.~Gatterbauer, N.~Immerman, and A.~Meliou.
\newblock The complexity of resilience and responsibility for self-join-free
  conjunctive queries.
\newblock {\em {PVLDB}}, 9(3):180--191, 2015.

\bibitem{freire2019new}
C.~Freire, W.~Gatterbauer, N.~Immerman, and A.~Meliou.
\newblock New results for the complexity of resilience for binary conjunctive
  queries with self-joins.
\newblock {\em arXiv preprint arXiv:1907.01129}, 2019.

\bibitem{GKS04}
R.~Gandhi, S.~Khuller, and A.~Srinivasan.
\newblock Approximation algorithms for partial covering problems.
\newblock {\em Journal of Algorithms}, 53(1):55--84, 2004.

\bibitem{Kimelfeld12}
B.~Kimelfeld.
\newblock A dichotomy in the complexity of deletion propagation with functional
  dependencies.
\newblock In {\em Proceedings of the 31st {ACM} {SIGMOD-SIGACT-SIGART}
  Symposium on Principles of Database Systems, {PODS} 2012, Scottsdale, AZ,
  USA, May 20-24, 2012}, pages 191--202, 2012.

\bibitem{KimelfeldVW11}
B.~Kimelfeld, J.~Vondr{\'{a}}k, and R.~Williams.
\newblock Maximizing conjunctive views in deletion propagation.
\newblock In {\em Proceedings of the 30th {ACM} {SIGMOD-SIGACT-SIGART}
  Symposium on Principles of Database Systems, {PODS} 2011, June 12-16, 2011,
  Athens, Greece}, pages 187--198, 2011.

\bibitem{KimelfeldVW13}
B.~Kimelfeld, J.~Vondr{\'{a}}k, and D.~P. Woodruff.
\newblock Multi-tuple deletion propagation: Approximations and complexity.
\newblock {\em {PVLDB}}, 6(13):1558--1569, 2013.

\bibitem{snap}
J.~Leskovec and A.~Krevl.
\newblock Snap datasets: Stanford large network dataset collection.
\newblock {\em \url{http://snap.stanford.edu/data/}}, June 2014.

\bibitem{leskovec2012learning}
J.~Leskovec and J.~J. Mcauley.
\newblock Learning to discover social circles in ego networks.
\newblock In {\em Advances in neural information processing systems}, pages
  539--547, 2012.

\bibitem{LivshitsKR18}
E.~Livshits, B.~Kimelfeld, and S.~Roy.
\newblock Computing optimal repairs for functional dependencies.
\newblock In {\em Proceedings of the 37th {ACM} {SIGMOD-SIGACT-SIGAI} Symposium
  on Principles of Database Systems, Houston, TX, USA, June 10-15, 2018}, pages
  225--237, 2018.

\bibitem{mathieson2008parameterized}
L.~Mathieson and S.~Szeider.
\newblock The parameterized complexity of regular subgraph problems and
  generalizations.
\newblock In {\em Proceedings of the fourteenth symposium on Computing: the
  Australasian theory-Volume 77}, pages 79--86, 2008.

\bibitem{MeliouGMS11}
A.~Meliou, W.~Gatterbauer, K.~F. Moore, and D.~Suciu.
\newblock The complexity of causality and responsibility for query answers and
  non-answers.
\newblock {\em {PVLDB}}, 4(1):34--45, 2010.

\bibitem{MeliouGS11}
A.~Meliou, W.~Gatterbauer, and D.~Suciu.
\newblock Reverse data management.
\newblock {\em {PVLDB}}, 4(12):1490--1493, 2011.

\bibitem{MeliouS12}
A.~Meliou and D.~Suciu.
\newblock Tiresias: the database oracle for how-to queries.
\newblock In {\em Proceedings of the {ACM} {SIGMOD} International Conference on
  Management of Data, {SIGMOD} 2012, Scottsdale, AZ, USA, May 20-24, 2012},
  pages 337--348, 2012.

\bibitem{RoyOS15}
S.~Roy, L.~Orr, and D.~Suciu.
\newblock Explaining query answers with explanation-ready databases.
\newblock {\em {PVLDB}}, 9(4):348--359, 2015.

\bibitem{RoyS14}
S.~Roy and D.~Suciu.
\newblock A formal approach to finding explanations for database queries.
\newblock In {\em International Conference on Management of Data, {SIGMOD}
  2014, Snowbird, UT, USA, June 22-27, 2014}, pages 1579--1590, 2014.

\bibitem{vardi1982complexity}
M.~Y. Vardi.
\newblock The complexity of relational query languages.
\newblock In {\em STOC}, pages 137--146, 1982.

\bibitem{vinterboa2002note}
S.~A. Vinterboa.
\newblock A note on the hardness of the k-ambiguity problem.
\newblock 2002.

\bibitem{WM13}
E.~Wu and S.~Madden.
\newblock Scorpion: Explaining away outliers in aggregate queries.
\newblock {\em {PVLDB}}, 6(8):553--564, 2013.

\end{thebibliography}

\clearpage
\appendix

\allowdisplaybreaks

\section{Endogenous relations}
\label{appendix:endogenous}

To compare our definitions with those from \cite{FreireGIM15}, we need to introduce the following terminologies. In a CQ $Q$, relation $R_j \in \rel(Q)$ is {\em exogenous} if there exists another relation $R_i  \neq R_j \in \rel(Q)$ such that $\attr(R_i) \subset \attr(R_j)$, and {\em endogenous} otherwise. It should be noted that if there are more than one relation defining on the same attributes, i.e., $\attr(R_i) = \attr(R_j)$, then we just consider arbitrary one of them as {\em endogenous} and the remaining as {\em exogenous}. In $Q():- R_1(A), R_2(A,B), R_3(B,C), R_4(B,C), R_5(B,C)$, there are two endogenous relations $R_1$ and any one of $R_3, R_4, R_5$.  We generalize their observation on endogenous relations in~\cite{FreireGIM15} to the \ourprob\ problem, as stated in Lemma~\ref{lem:endogenous}, which will be used in this paper. 

\begin{lemma}
	\label{lem:endogenous}
	For any CQ $Q$, if $\ourprob(Q,D,k)$ problem is poly-time solvable, there exists a solution which only removes input tuples from endogenous relations. 
\end{lemma}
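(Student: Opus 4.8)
The plan is to prove the slightly stronger (and cleaner) statement that \emph{for every instance $D$, every integer $k$, and every feasible solution $S$ to $\ourprob(Q,D,k)$, there is a feasible solution $S'$ with $|S'|\le|S|$ whose tuples all come from endogenous relations}; applying this to an optimal solution gives the lemma (and, incidentally, shows the poly-time hypothesis is not actually needed). First I would record two elementary structural facts. (i) Since a tuple of a relation $R$ is determined by its values on $\attr(R)$, for any assignment of values to $\attr(R)$ there is \emph{at most one} tuple of $R$ realizing it. (ii) For every $R_j\in\rel(Q)$ there is an endogenous relation $R$ with $\attr(R)\subseteq\attr(R_j)$: pick $R_i$ with $\attr(R_i)$ minimal under inclusion in the nonempty family $\{R\in\rel(Q):\attr(R)\subseteq\attr(R_j)\}$ (it contains $R_j$); then no relation has attribute set strictly contained in $\attr(R_i)$ (such a relation would also lie in the family, contradicting minimality), so $R_i$ fails to be endogenous only because some other relation shares its attribute set, in which case the designated endogenous relation on those attributes serves as $R$.

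Next I would give the exchange step. Fix a feasible $S$ and a tuple $t\in S$ lying in an exogenous relation $R_j$. If $t$ participates in no tuple of $Q(D)$ (it is dangling), then $S\setminus\{t\}$ removes exactly the same output tuples, and I replace $S$ by $S\setminus\{t\}$. Otherwise let $R$ be an endogenous relation with $\attr(R)\subseteq\attr(R_j)$ as in (ii), and let $t'\in R$ be the (by (i) unique) tuple with $\pi_{\attr(R)}t'=\pi_{\attr(R)}t$; such $t'$ exists because $t$ lies in some join result, whose $R$-component agrees with $t$ on $\attr(R)=\attr(R_j)\cap\attr(R)$. The key claim is that every $r\in Q(D)$ whose $R_j$-component is $t$ has $R$-component equal to $t'$: consistency of $r$ on the shared attributes $\attr(R)$ forces its $R$-component to agree with $t$, hence with $t'$, on $\attr(R)$, and then (i) forces it to be exactly $t'$ (this is where self-join-freeness is used, so that ``the $R$-component of $r$'' is well defined). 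Consequently every output tuple removed by $S$ is still removed by $S':=(S\setminus\{t\})\cup\{t'\}$ (if $r$ was removed only through $t$, it is now removed through $t'$; otherwise it keeps a removed component), so $S'$ is feasible, and clearly $|S'|\le|S|$.

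Finally I would iterate: process the tuples of $S$ that lie in exogenous relations one at a time, each step either deleting a dangling tuple or replacing an exogenous-relation tuple by an endogenous-relation tuple, never increasing size and never destroying feasibility. Since no tuple introduced by a replacement lies in an exogenous relation, a single pass suffices and yields the desired solution using only endogenous relations. The only delicate point is the key claim in the exchange step — the uniqueness argument pinning the $R$-component of $r$ to $t'$ — and verifying that this replacement can only enlarge the set of removed output tuples; everything else is routine bookkeeping, so I do not anticipate a further obstacle.
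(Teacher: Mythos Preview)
Your proposal is correct and follows essentially the same exchange argument as the paper: given a tuple $t$ from an exogenous relation $R_j$ in a solution, replace it by the tuple $t'$ in an endogenous relation $R$ with $\attr(R)\subseteq\attr(R_j)$ that agrees with $t$ on $\attr(R)$, and iterate. Your version is more careful than the paper's (you explicitly handle the dangling case, prove the existence of an endogenous $R$ below $R_j$ via minimality, and note that the poly-time hypothesis is irrelevant), but the underlying idea is the same.
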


\begin{proof}
	Consider an arbitrary solution $\S$ for $\ourprob(Q, D, k)$. By contradiction, assume tuple $t \in R_j$ is removed by $\S$ where $R_j$ is an exogenous relation. Let $R_i \in \rel(Q)$ be the endogenous relation such that $\attr(R_i) \subset \attr(R_j)$, and $t'$ be the tuple such that $\pi_{\attr(R_i)} t = t'$. If $t' \in \S$, we observe that $\S -\{t\}$ also removes at least $k$ results from $Q(D)$, contradicting the optimality of $S$. Otherwise, $t' \notin \S$. Then we claim that $\S - \{t\} + \{t'\}$ is also an optimal solution for $\ourprob(Q, D, k)$. Applying this argument to each tuple removed from exogenous relation, we will obtain an optimal solution which only removes tuples from endogenous solution. Thus adding the restriction on $Q$ doesn't change the minimum number of tuples to be removed for $\ourprob(Q, D, k)$. 
\end{proof}

\section{Proof of Lemma~\ref{LEM:BIPARTITE-GRAPH}}
\label{APPENDIX:HARD-CORE}

We show the NP-hardness of each problem in Lemma~\ref{LEM:BIPARTITE-GRAPH} separately.

\paragraph{Hardness Proof of Problem (1).} With an equivalent definition, problem (1) is exactly the {\em Partial Vertex Cover for Bipartite Graphs} (\texttt{PVCB}) problem, which is known to be NP-hard~\cite{CaskurluMPS17}. 

\begin{definition}
	\label{def:pvc}
	The input to the  problem is an undirected bipartite graph $G(A, B, E)$ where $E$ is the set of edges between two sets of vertices $A$
	and $B$, and an integer $k$. The goal is to find a subset $S \subseteq A \cup B$ of minimum size such that at least $k$ edges from $E$ have at least one endpoint in $S$.
\end{definition}

\paragraph{Hardness Proof of Problem (2).} It is easy to relate problem (2) to the {\em k-Minimum Coverage} (\kmc) problem, which is known to be NP-hard~\cite{vinterboa2002note}.

\begin{definition}
	\label{def:kmc}
	Given a universe $\mathcal{U}$, a family $\S$ of subsets of $\U$ and an integer $k$, find $k$ subsets from $\mathcal{S}$ such that the size of their union is minimized. 
\end{definition}

We give a reduction from the \kmc\ problem that takes as input $(\U, \S)$ and $k$, denoted as \kmc$(\U, \S, k)$. Moreover, it can be easily checked that this reduction preserves the approximation, i.e., if there is an $\alpha$-approximation algorithm for the $\ourprob(\swingquery, D,k)$ problem, then there must exist an $\alpha$-approximation algorithm for the \kmc\ problem. 

Given an instance of the \kmc\ problem, we construct a bipartite graph $G = (A,B,E)$ as follows. For each element $u \in \U$, we include a vertex $b_u \in B$. For each subset $S \in \mathcal{S}$, we include a vertex $a_S \in A$. If $u \in S$, we add an edge $(a_S, b_u) \in E$. Next we show that the problem \kmc$(\U, \S, k)$ has a solution of size $\le c$ if and only if the problem (2) has a solution of size $\le c$.

{\bf The``only-if'' direction.} Suppose we are given a solution $\S' \subseteq \S$ for problem \kmc$(\U, \S, k)$ of size $\le c$. We then construct a solution for problem (2) as follows. If $u \in \bigcup_{S \in \S'} S$, then we remove $b_u$ from $B$. This solution removes at most $c$ vertices from $B$ since $|\bigcup_{S \in \S'} S| \le c$. Moreover, every vertex $a_S \in A$ is removed as long as $S \in S'$. The total number of vertices removed from $A$ is at least $k$, thus this is exactly a solution for problem (2) of size $\le k$.

{\bf The ``if'' direction.} Suppose we are given a solution for problem (2) of size $\le c$.  We choose $k$ arbitrary vertices from $A$ which is removed because of the removal of vertices in $B$, denoted as $A'$. We then construct a solution for \kmc$(\U, \S, k)$ as $\{S: a_S \in A'\}$. It can be easily argued that $|\bigcup_{S: a_S \in A'} S| \le c$. Suppose not, there must exist at least one vertex $b_u$ for $u \in \bigcup_{S: a_S \in A'}$ not removed. In this way, at least one vertex in $A'$ cannot be removed, coming to a contradiction.

\paragraph{Hardness Proof of Problem (3).} However, to our knowledge, there is no existing result directly implying the hardness of problem (3). We first elaborate it as the {\em Sided-Constrained Vertex Cover in Bipartite Graphs} (\svcb). 

\begin{definition}
	\label{def:svcb}
	The input to the  problem is an undirected bipartite graph $G(A, B, E)$ where $E$ is the set of edges between two sets of vertices $A$
	and $B$, and an integer $c$. The goal is to find a subset $S \subseteq A \cup B$ of minimum size such that each edge from $E$ have at least one endpoint in $S$ and at least $c$ vertices in $A$ are included in $S$.
\end{definition}

A related problem that has been studied is the {\em constrained minimum vertex cover}, which is known to be NP-complete~\cite{chen2003constrained}, but with a different settings from \svcb. It asks to find a minimum vertex cover $S \subseteq A \cup B$ such that $|A \cap S| \le k_1$ and $|B \cap S| \le k_2$ for some input integer $k_1, k_2$. As a side product, we also first show that \svcb\ problem is NP-hard in Lemma~\ref{LEM:SVCB}, whose proof is given in Appendix~\ref{appendix:svcb} of independent interest. 

\begin{lemma}
	\label{LEM:SVCB}
	The \svcb\ problem is NP-hard. 
\end{lemma}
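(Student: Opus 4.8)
\noindent
\emph{Proof plan.}
The plan is to give a polynomial‑time reduction to \svcb\ from a standard \NPhard\ problem; \clique\ turns out to be the most convenient source.
Before constructing the reduction I would first rewrite \svcb\ in a dual, more workable form.
Fix a bipartite graph $G(A,B,E)$ and the parameter $c$, and set $t = |A| - c$.
Since the complement of a vertex cover is an independent set, a vertex cover $S$ of $G$ with $|S\cap A|\ge c$ corresponds bijectively to an independent set $I=(A\cup B)\setminus S$ with $|I\cap A|\le t$, and $|S| = |A|+|B|-|I|$.
Moreover, once $I_A := I\cap A$ is fixed, the largest admissible $I$ takes $I_B = B\setminus N(I_A)$, where $N(\cdot)$ is the neighborhood in $G$.
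Hence
\[
\svcb(G,c) \;=\; |A| \;+\; \min_{\,I_A\subseteq A,\ |I_A|\le t}\ \bigl(\,|N(I_A)| - |I_A|\,\bigr),
\]
so it is enough to show that minimizing $|N(I_A)|-|I_A|$ over subsets of $A$ of size at most $t$ is \NPhard.

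\noindent
Next I would reduce from \clique\ (equivalently, from maximum independent set).
Given a graph $H$ and a target $k$, the construction builds a bipartite graph $G(A,B,E)$ whose $A$‑side encodes the vertices of $H$, together with a parameter $c$ chosen so that $t=|A|-c$ controls the size of the selected clique; the adjacency between $A$ and $B$ is designed so that the quantity $|N(I_A)|-|I_A|$ is small exactly when $I_A$ is (the image of) a dense, hence large‑clique, vertex set of $H$, using a small gadget on the $B$‑side attached to the vertices/edges of $H$ that turns ``$I_A$ spans many edges of $H$'' into ``$N(I_A)$ is small''.
Some light preprocessing of $H$ (for instance, iteratively removing dominating vertices and decrementing $k$), together with the standard padding argument showing that \clique\ remains \NPhard\ even when $k$ exceeds half the number of vertices, would be used to rule out spurious minimizers coming from small but highly overlapping $I_A$.

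\noindent
The routine direction is that a clique of $H$ of size at least $k$ yields a feasible $I_A$ of the prescribed size and cost, giving an upper bound on $\svcb(G,c)$.
\emph{The main obstacle is the converse}: one must show that \emph{every} subset $I_A$ with $|I_A|\le t$ whose value $|N(I_A)|-|I_A|$ meets the target threshold necessarily projects onto a clique of $H$ of size at least $k$, i.e.\ that the gadget plus the choice of parameters leaves no ``cheap'' non‑clique solution.
I would establish this by a case analysis on whether the projection of $I_A$ to $V(H)$ is a clique — bounding $|N(I_A)|$ from below via the missing edges when it is not, and controlling the common‑neighborhood contribution when it is, relying on the preprocessing and on $k>|V(H)|/2$ to make the two cases separate cleanly.
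The full construction and this two‑sided argument are carried out in Appendix~\ref{appendix:svcb}.
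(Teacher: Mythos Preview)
Your reformulation of \svcb\ via the independent-set dual,
\[
\svcb(G,c)=|A|+\min_{I_A\subseteq A,\ |I_A|\le t}\bigl(|N(I_A)|-|I_A|\bigr),
\]
is correct and is exactly the viewpoint underlying the paper's argument as well; the paper also reduces from \clique. So the high-level route matches.

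Where your plan diverges from the paper, and where it is currently a genuine gap, is the gadget and the tool used to kill non-clique optima. The paper does \emph{not} reduce from general \clique\ with post-hoc preprocessing and the assumption $k>|V(H)|/2$; it reduces from \clique\ on $d$-\emph{regular} graphs. Regularity is used repeatedly in the ``no'' direction: it fixes $2m=nd$, bounds the number of edges incident to any $|A_1|$-subset by $|A_1|\cdot d$, and lets the case analysis on partial selections go through with concrete inequalities in $\lambda_1,\lambda_2,d,q$. The construction itself replaces each vertex $u$ by a biclique block $A_u\times B_u$ with $|A_u|=\lambda_1>|B_u|=\lambda_2$ (so that, in your language, including a full block in $I_A$ strictly decreases $|N(I_A)|-|I_A|$), and connects each edge-vertex $b_e$ to a single designated copy inside each endpoint's $A$-block. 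The delicate point is exactly that $I_A$ need not be a union of full blocks: one can try to take many partial blocks, paying all of $B_u$ in $N(I_A)$ but dodging some edge-vertices. Ruling this out is what drives the list of parameter constraints $\lambda_1-\lambda_2\ge d$, $\lambda_1\ge(q-1)d$, $\lambda_2>\tfrac12\lambda_1+\tfrac12(q-1)(d-q)$, etc., and the Case~1/Case~2 analysis in the appendix.

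Your plan names the right difficulty (``no cheap non-clique solution'') but the proposed remedies---iteratively removing dominating vertices and assuming $k>|V(H)|/2$---do not obviously substitute for regularity plus block amplification, and you have not specified the gadget at all. In particular, in any naive incidence-style gadget the empty set $I_A=\emptyset$ already minimizes $|N(I_A)|-|I_A|$, so you need the $\lambda_1>\lambda_2$ block trick (or something equivalent) to make large $I_A$ attractive; and once you do that, you must control partial-block selections, which is where the paper leans on regularity. If you want to keep your outline, either switch the source problem to \regular\ and adopt the block gadget, or give an explicit gadget for which your preprocessing and $k>|V(H)|/2$ assumptions demonstrably exclude partial/non-clique minimizers.
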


We give a reduction from the decision version of \svcb\ problem that takes input $G = (A, B, E)$ and an integer $c \le |A|$, denoted as \svcb$(G,c)$. For simplicity, assume each vertex in $G$ is incident to at least one edge in $E$. Given an instance of the \svcb\ problem, we have the same bipartite graph $G$ for problem (3). Next we show that \svcb$(G,c)$ has a solution of size $\le c'$ if an only if the problem (3) with parameter $k = |A| -c$ has a solution of size $\le c'- c$.

{\bf The``only-if'' direction.} Suppose we are given a vertex cover $\mathcal{C}$ for the problem \svcb$(G,c)$ of size $\le c'$. Let $B_1 = \mathcal{C} \cap B$ and $A_1 = \mathcal{C} \cap A$, where $|A_1| \ge c$. As a complement of $\mathcal{C}$, $(A- A_1, B -B_1)$ form an independent set of $G$. This implies that for each vertex $a \in A - A_1$, if $(a,b) \in E$, then $b \in B_1$. 

We construct a solution $\S$ for problem (3) as follows. We choose arbitrary $|A_1|-c$ vertices from $A_1$, denoted as $A_2$. Let $\S = A_2 \cup B_1$. The size of $\S$ can be bounded as $|A_1| + |B_1| - c = |\mathcal{C}| - c' \le c' -c$. Moreover, it can be easily checked that $\mathcal{S}$ is a valid solution for problem (3). Each vertex $a \in A - A_1$ will be removed, since all of its neightbors are in $B_1$, which have been removed already. Additional $|A_1| - c$ vertices are also removed from $A_2$. Thus, the total number of vertices removed from $A$ is $|A - A_1| + |A_1| - c = |A| - c$.

{\bf The ``if'' direction.} Suppose we are given a solution $\S$ for the problem (3) with parameter $k = |A| -c$, of size $\le c' -c$. Let $B_1 = \S \cap B$ and $A_2 = \{a \in A: (a,b) \notin E \ \forall  b \in B -B_1\}$. We mention two important properties on $\mathcal{S}$ first. (\romannumeral 1) If $|A_2| \ge |A| -c$, then $\mathcal{S} = B_1$ with size $\le c' - c$. (\romannumeral 2) If $|A_2| < |A| - c$, there must be $|\mathcal{S} \cap (A - A_2)| \ge |A| - |A_2| - c$.  In this case, $c'-c \ge |\mathcal{S}| \ge |\mathcal{S} \cap (A - A_2)| + |B_1| \ge |A| - |A_2| - c + |B_1|$, thus $c' \ge |A| - |A_2| + |B_1|$. 

We construct a solution $\mathcal{C}$ for the problem \svcb$(G,c)$ as follows. If $|A_2| \ge |A| -c$, choose arbitrary $|A_2| - |A| + c$ vertices from $A_2$ as $A_3$ and set $\mathcal{C} = (A -A_2 + A_3, B_1)$. Otherwise, set $\mathcal{C} = (A - A_2, B_1)$.

Observe that $\mathcal{C}$ is a valid vertex cover since $(A_2, B -B_1)$ is an independent set of $G$. It remains to show that $|\mathcal{C}| \le c'$ and $|\mathcal{C} \cap A| \ge c$. Note that if $|A_2| \ge |A| -c$, we have $|\mathcal{C} \cap A| = |A|-|A_2| + |A_3| = c$ and $|\mathcal{C}| = |A| - |A_2| + |A_3| + |B_1| \le c + c' -c = c'$, implied by (\romannumeral 1). Otherwise, $|\mathcal{C} \cap A| = |A| - |A_2| \ge c$. Moreover, $|\mathcal{C}| = |A| - |A_2| + |B_1| \le c'$, implied by (\romannumeral 2).

%

\section{Proof of Lemma~\ref{LEM:SVCB}}
\label{appendix:svcb}

In this part, we prove the NP-hardness of \svcb\ problem by showing that the NP-complete problem of \clique\ in a regular graph~\cite{mathieson2008parameterized} is polynomial time reducible to it, denoted as \regular.

%
%
%
%

Recall that the input is an undirected bipartite graph $G(A, B, E)$ where $E$ is the set of edges between two sets of vertices $A$
and $B$, and an integer $c \le |A|$. The goal is to find a subset $S \subseteq A \cup B$ of minimum size such that each edge from $E$ have at least one endpoint in $S$ and at least $c$ vertices in $A$ are included by $S$.

\medskip \noindent {\bf Instance Construction.} Let $G' = (V',E')$ be a $d$-regular graph, where $|V'| = n$ and $|E'| = m$. Let $5\le q \le \frac{n-1}{2}$ be an integer. The \clique\ problem asks whether there exists a set of $q$ vertices in $V'$ such that each pair of vertices chosen are connected by an edge in $E'$. We construct an instance $G= (A \cup B, E)$ with $k_1, k_2$ as follows. Each vertex $u \in V'$ defines a vertex-block, in forms of a biclique $A_u \times B_u$, where $A_u \subseteq A$ contains $\lambda_1$ vertices and $B_u \subseteq B$ contains $\lambda_2$ distinct vertices. Moreover, $\lambda_1 - \lambda_2 \ge d$. Each edge $e \in E'$ defines a vertex $b_e \in B$. If vertex $u$ is the endpoint of edge $e$ in $G'$, we just add one edge from $b_e$ to one vertex in $A_u$ with degree $\lambda_2$. This is always possible since  $\lambda_1 > d$.  In our constructed graph, there is $|A| = \lambda_1 n$, $|B| = \lambda_2 n + m$ and $|E| = \lambda_1 \lambda_2 n+ 2m$. Set $c = \lambda_1 q$. 

Any $\lambda_1, \lambda_2, d$ satisfying the following the constraints work for this proof, say, $\lambda_1 = 2q(q+1), \lambda_2 = 2q^2, d = 2q$. 
\begin{enumerate}
	\item $\lambda_1 > \max\{2q-1, \frac{1}{2}q(q-1)\}$;
	\item $\lambda_1 - \lambda_2 \ge d \ge 2q$;
	\item $(\lambda_1 - \lambda_2)(n-q) + \frac{1}{2} q (q-1) \ge m$;
	\item $\lambda_1 \ge (q-1) \cdot d$;
	\item $\lambda_2 + \frac{1}{2}(q-1) > \frac{1}{2} \lambda_1$;
	\item $\lambda_2 > \frac{1}{2}\lambda_1 + \frac{1}{2}(q-1)(d-q)$;
	\item $2m = nd$;
	\item $d \le n-1$.
\end{enumerate}
But for generality, we still use $\lambda_1, \lambda_2, d$ for analysis. We will show that the original graph $G'=(V',E')$ has a clique of size $q$ if and only if the bipartite graph $G= (A \cup B, E)$ has a vertex cover $J$ such that $|J \cap A| \ge \lambda_1 q$ and $|J| \le \lambda_1 q + \lambda_2 (n-q) + m -\frac{1}{2}q(q-1)$.

\medskip
\textbf{``Yes'' instance:} If there exists a clique of size $q$ in $G'$, we construct the vertex cover as follows. If a vertex $u \in V'$ is in the clique, choose $A_u$; otherwise, choose $B_u$. For an edge $e = (u,u') \in E'$, if at least one of $u,u'$ is not in the clique, choose $e_u$. It can be easily checked that each edge is covered, so this is a valid vertex cover. Moreover, $|J \cap A| = \lambda_1 q$ and $|J| = \lambda_1 q + \lambda_2 (n-q) + m -\frac{1}{2}q(q-1)$.

\medskip
\textbf{``No'' instance:}If there exists no clique of size $q$ in $G'$, every vertex cover $J$ of $G$ with $|J \cap A| \ge \lambda q$, must have its size strictly larger than $\lambda_1 q + \lambda_2 (n-q) + m -\frac{1}{2}q(q-1)$. Let $J^*$ be the minimum one among the class of vertex covers with $|J \cap A| \ge \lambda_1 q$.

The first observation is that $|J^* \cap A| = \lambda_1 q$. By contradiction, assume $|J^* \cap A| > \lambda_1 q$. If we can find some $u \in V'$ with $A_u \subsetneq J^*$, then $A_u \subseteq J^*$; for each vertex $a \in A_u \cap J^*$, we remove $a$ from $J^*$ and add $b_e$ to $J^*$ if there is a edge block $b_e$ connected to $v$. Otherwise, for each $u \in V'$ with $A_u \cap J^* \neq \emptyset$, there is $A_u \cap J^* = A_u$. In this case, $|J^* \cap A| = \lambda_1 q''$ with $q'' > q$. For an arbitrary $u \in V'$ with $A_u = J^*$, we remove $A_u$ from $J^*$ and add $B_u \cup (\bigcup_{e \in E': u \in e} b_e)$ to $J^*$. Note that $|B_u \cup (\bigcup_{e \in E': u \in e} b_e)| = |B_u| + |\bigcup_{e \in E': u \in e} b_e| = \lambda_2 + d \le \lambda_1$. In this way, we can get a better (at least not worse) vertex cover while maintaining the constraint that $|J^* \cap A| \ge \lambda_1 q$.

Based on $J^*$, we divide vertices in $V'$ into three subsets:
\begin{itemize}
	\item[] $A_1= \{u \in V': A_u - J^* = \emptyset\}$;
	\item[] $A_2 = \{u \in V': A_u - J^* \neq \emptyset, A_u \cap J^* \neq \emptyset\}$;
	\item[] $A_3 = \{u \in V': A_u \cap J^* = \emptyset\}$;
\end{itemize}
Note that $J^*$ has to pick the $B_u$ for every $u \in A_2 \cup A_3$. We further consider two cases: (1)  $|A_1| = q$; (2) $|A_1| \le q-1$. Both cases are built on the following common observations. Consider an edge block $b_e$ with $e = (u,u')$. Let $a_{eu} \in A_u$ and $a_{eu'} \in A_{u'}$ be the two vertices incident to $b_e$ in $G$. Note that $b_e \notin J^*$ if and only if $a_{eu} \in J^*$ and $a_{eu'} \in J^*$. 

Case 1: $|A_1| = q$. In this case, $|A_2| + |A_3| = n-q$, and $A_2 = \emptyset$ since $|J^* \cap A| = \lambda_1 q$. For any edge $e = (u,u') \in E$, $b_e \notin J^*$ if and only if $u \in  A_1$ and $u' \in A_1$. Since there is no $q$-clique in $G'$, $J^*$ has size at least $\lambda_1 q + \lambda_2 (n-q) + m -\frac{1}{2}q(q-1)+1$.

Case 2: $|A_1| \le q-1$. Consider each edge $e = (u, u') \in G'$.  Observe that if one of $u,u'$ is in $A_3$, there must be $b_e \in J^*$. We further distinguish three more cases for $e$ when $b_e \notin J^*$. (\romannumeral 1) both $u,u' \in A_1$, $b_e \notin J^*$. Let $\alpha$ be the number of edges falling into this case. (\romannumeral 2) $u,u' \in A_2$, then $J^*$ has to choose both $a_{eu}, a_{eu'}$ for only exempting $e_u$.  (\romannumeral 3) one of $u,u'$ is in $A_1$ and the other in $A_2$, say $u \in A_1, u' \in A_2$, then $J^*$ has to choose $a_{eu'}$ for exempting $b_e$; and the number of such edges is at most $|A_1| \cdot d - 2 \alpha$. Note that $J^*$ will exempt as many as edge blocks as possible. With the additional budget of $\lambda_1(q-|A_1|)$ vertices in $A_2$, it will firstly exempt as many edge blocks in (\romannumeral 3) as possible; and then exempt edge blocks in (\romannumeral 2). Under the parameter constraint (1), $\lambda_1 (q - |A_1|) \ge |A_1| \cdot d \ge |A_1| \cdot d - 2 \alpha$ for any $|A_1| \in \{1,2,\cdots,q-1\}$. So the number of exempted edge blocks is at most
\begin{align*}
f(|A_1|) & = \alpha + |A_1| \cdot d - 2 \alpha + \frac{1}{2}  \big(\lambda_1 (q -|A_1|) - (|A_1| \cdot d - 2 \alpha) \big) \\
& = \frac{1}{2} |A_1| \cdot d + \frac{1}{2} \lambda_1(q - |A_1|)
\end{align*}
In this case, $J^*$ has size at least $\lambda_1 q + \lambda_2 (n-|A_1|) + m - f(|A_1|)$.
To show why it is always strictly larger than $\lambda_1 q + \lambda_2 (n-q) + m -\frac{1}{2}q(q-1)$, it suffices to show that 
\begin{align*}
&\lambda_2 (q-|A_1|) - f(|A_1|) + \frac{1}{2}q(q-1) > 0 
\end{align*}
for any $|A_1| \in \{0,1,2,\cdots,q-1\}$.
Rearranging the inequality, this is equivalent to show
\begin{align*}
(\lambda_2 -\frac{1}{2}\lambda_1)(q-x) +\frac{1}{2}q(q-1)- \frac{1}{2} x d  > 0 
\end{align*}
holds for any $x \in [0,q-1]$. 
Note that this is a monotone function, so it holds for the whole interval $[0,q-1]$ as long as it holds for both endpoints. For $x=0$, it holds if $\lambda_2 + \frac{1}{2}(q-1) > \frac{1}{2} \lambda_1$. For $x = q-1$, it holds if $\lambda_2> \frac{1}{2}\lambda_1 +\frac{1}{2} (q-1)(d-q)$. Both constraints are implied by the parameter settings.

\section{Proof of Theorem~\ref{THM:DICHOTOMY-STRUCTURE}}
\label{appendix:dichotomy-structure}

We will prove Theorem~\ref{THM:DICHOTOMY-STRUCTURE} by drawing an equivalence to Theorem~\ref{thm:dichotomy}. For simplicity, when there is a triad-like or strand structure, or the head join of non-dominated relations is non-hierarchical in $Q$, $Q$ is referred to {\em contain hard structure}.

We first show that these two simplification steps in procedure \isptime\ preserve the hard structures (Lemma~\ref{LEM:COMMON-HARD-STRUCTURE} and Lemma~\ref{LEM:DECOMPOSE-HARD-STRUCTURE}).We then investigate three base cases. Note that when $Q$ is boolean, there is no triad structure since $\head(Q) \cap \attr(R_i) = \emptyset$ for any $R_i \in \rel(Q)$. The head join of $Q$ has no attributes, thus always being hierarchical. On boolean CQ, Theorem~\ref{THM:DICHOTOMY-STRUCTURE} degenerates to Theorem~\ref{thm:boolean-dichotomy} directly. So, it remains to consider the case when there is a vacuum relation in $Q$ (Lemma~\ref{lem:vacuum-structure}) or $\isptime(Q)$ goes to ``other'' in Figure~\ref{fig:isptime} (Lemma~\ref{lem:others-structure}).

\begin{proof}[Proof Lemma~\ref{LEM:COMMON-HARD-STRUCTURE}]
	For each relation $R_i \in \rel(Q)$, let $R'_i$ be the corresponding relation in $Q_{-A}$, with $\attr(R'_i) = \attr(R_i) - \{A\}$. We first mention two important observations for $Q, Q_{-A}$: (1) there is a one-to-one correspondence of non-dominated (resp. endogenous) relations in $Q$ and $Q_{-A}$, i.e., $R_i$ is non-dominated (resp. endogenous) if and only $R'_i$ is non-dominated (resp. endogenous); (2) for a full CQ, $Q$ is hierarchical if and only if $Q_{-A}$ is hierarchical.  Both can be easily checked by definition.
	. 
	
	{\bf The ``only-if'' direction.} Suppose $Q$ contains hard structure, and we prove each case separately. 
	
	If there is a triad-structure with a triple of endogenous relations $R_1, R_2, R_3 \in \rel(Q)$ such that for each pair of relations, say $R_1, R_2$, there exists a path between $R_1, R_2$ only using attributes in $\attr(Q) - \head(Q) - \attr(R_3)$. Obviously, $A$ doesn't appear on this path since $A \in \attr(R_3)$. Correspondingly, this path between $R'_1, R'_2$ only uses attributes in $\attr(Q_{-A}) - \head(Q_{-A}) - \attr(R_3) = \attr(Q) - \head(Q) - \attr(R'_3)$. Similar argument applies for $R'_1, R'_3$ and $R'_2, R'_3$. Thus, $R'_1,R'_2, R'_3$ form a triad in $Q_{-A}$.
	
	If there is a strand with a pair of non-dominated relations $R_1, R_2 \in \rel(Q)$ such that (1) $\head(Q) \cap \attr(R_1) \neq \head(Q) \cap \attr(R_2)$; (2) $\attr(R_i) \cap \attr(R_j) - \head(Q) \neq \emptyset$. It can be easily checked that $\head(Q) \cap \attr(R'_1) \neq \head(Q) \cap \attr(R'_2)$, and $\attr(R'_i) \cap \attr(R'_j) - \head(Q_{-A}) = \attr(R_i) \cap \attr(R_j) - \head(Q)  \neq \emptyset$.  Thus, $R'_1,R'_2$ form a strand in $Q_{-A}$.
	
	If the head join of non-dominated relations in $Q$ is non-hierarchical, removing a universal attribute $A$ from all relations doesn't change this property. Thus, the head join of non-dominated relations in $Q_{-A}$ is also non-hierarchical.

	{\bf The ``if'' direction.} Suppose $Q_{-A}$ contains hard structure. This direction can be argued similarly with the ``only-if'' direction.
\end{proof}

\begin{proof}[Proof of Lemma~\ref{LEM:DECOMPOSE-HARD-STRUCTURE}]
	We first mention two important observations for a disconnected query: (1) the set of non-dominated (resp. endogenous) relations in $Q$ is just the disjoint union of non-dominated (resp. endogenous) relations in each subquery; (2) a full join is hierarchical, if each of its connected subqueries is hierarchical. Both can be easily checked by definition.

	{\bf The ``only-if'' direction.} Suppose $Q$ contains hard structure, and we prove each case separately.  
	
	If there is a triad-structure with a triple of endogenous relations $R_1, R_2, R_3 \in \rel(Q)$, they must come from the same subquery, say $Q_i$, since there exists a path between any pair of them by definition. It can be easily checked that $R_1, R_2, R_3$ still form a triad in $Q_i$.
	
	Similarly, if there is a strand with a pair of endogenous relations $R_1, R_2\in \rel(Q)$, they must come from the same subquery, say $Q_i$, since they are connected. It can be easily checked that $R_1, R_2$ still form a strand in $Q_i$.
	
	If the head join of non-dominated relations in $Q$ is non-hierarchical, we can identify two attributes $A,B$ and three non-dominated relations $R_1, R_2,R_3$ such that $A \in \attr(R_1) \cap \attr(R_2) - \attr(R_3)$ and $B \in \attr(R_3) \cap \attr(R_2) - \attr(R_1)$. In this way, $R_1, R_2 ,R_3$ must come from the same subquery, say $Q_i$. It can be easily checked that this condition still holds in $Q_i$, thus being non-hierarchical.
	
	{\bf The ``if'' direction.} Suppose $Q_i$ contains hard structure. It can be easily checked that any hard structure in $Q_i$ also exists in $Q$. 
\end{proof}

\begin{lemma}
	\label{lem:vacuum-structure}
	For a CQ $Q$, if there is a vacuum relation,  then $Q$ doesn't contain any hard structure.
\end{lemma}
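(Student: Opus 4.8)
The plan is to show that the presence of a vacuum relation $R_i$ (i.e., $\attr(R_i) = \emptyset$) is so restrictive that none of the three hard structures — triad-like, strand, or a non-hierarchical head join of non-dominated relations — can occur. The key observation, already noted in the excerpt just after Definition~\ref{def:dominated}, is that if $R_i$ is vacuum then \emph{every} other relation $R_j \in \rel(Q)$ is dominated by $R_i$: condition (1) $\attr(R_i) = \emptyset \subseteq \attr(R_j)$ holds trivially; condition (2) holds vacuously since there is no relation $R_k$ with $\attr(R_i) - \attr(R_k) \neq \emptyset$ (as $\attr(R_i) = \emptyset$); and condition (3) holds since $\attr(R_i) = \emptyset \subseteq \head(Q)$. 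Moreover, by the tie-breaking convention in Definition~\ref{def:dominated}, if several relations are vacuum we designate one of them, say $R_i$, as non-dominated and all the others as dominated. Hence $R_i$ is the unique non-dominated relation, and (since $\attr(R_i)=\emptyset$ and $\attr(R_j)\supsetneq\emptyset$ for every non-vacuum $R_j$) $R_i$ is also the unique endogenous relation.

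With this in hand I would dispatch the three cases. For the \emph{triad-like} structure (Definition~\ref{def:triad-like}): it requires a triple of \emph{endogenous} relations $R_1,R_2,R_3$, but there is only one endogenous relation, so no such triple exists. For the \emph{strand} structure (Definition~\ref{def:pyramid}): it requires a \emph{pair} of non-dominated relations $R_i,R_j$, but there is only one non-dominated relation, so no such pair exists. For the \emph{head join of non-dominated relations}: it consists of the single relation $R_i$ with attribute set $\attr(R_i)\cap\head(Q) = \emptyset$. A one-relation (in fact attribute-free) full CQ is trivially hierarchical — the condition in the definition of Hierarchical Join quantifies over pairs of attributes $A,B$, and there are none, so the condition holds vacuously. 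Therefore the head join of non-dominated relations is hierarchical, not non-hierarchical.

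Putting the three cases together, $Q$ contains no hard structure, which is exactly the claim. I do not anticipate any genuine obstacle here; the only point requiring a little care is the bookkeeping around the tie-breaking convention when multiple vacuum relations are present (ensuring we really do get a \emph{unique} non-dominated and a \emph{unique} endogenous relation), and the observation that emptiness of $\attr(R_i)$ makes conditions (2) of both domination definitions hold vacuously rather than by an actual containment argument. Once those are stated cleanly, each of the three structure definitions fails for a purely counting reason (not enough endogenous relations, not enough non-dominated relations, or a degenerate one-relation head join), so the proof is short.
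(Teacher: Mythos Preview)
Your proposal is correct and follows essentially the same argument as the paper: the vacuum relation $R_i$ dominates every other relation, leaving $R_i$ as the sole non-dominated (and sole endogenous) relation, which immediately rules out all three hard structures. Your write-up is simply more explicit than the paper's in verifying the domination conditions and handling the tie-breaking for multiple vacuum relations.
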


\begin{proof}
	Let $R_i$ be the vacuum relation.  By definition, every remaining relation $R_j \in \rel(Q) - \{R_i\}$ is dominated by $R_i$. Thus, there is neither triad-like nor strand structure in $Q$. The head join of non-dominated relations in $Q$ only includes $R_i$, thus always being hierarchical. Overall, $Q$ doesn't contain any hard structure.
\end{proof}

\begin{lemma}
	\label{lem:others-structure}
	For a CQ $Q$, if $\isptime(Q)$ goes to ``other'' in Figure~\ref{fig:isptime},  then $Q$ contains hard structure.
\end{lemma}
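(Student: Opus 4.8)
The plan is to run the same case analysis that drives the proof of Lemma~\ref{lem:others} and, in each leaf of the flowchart of Figure~\ref{fig:others}, to produce a concrete hard structure inside $Q$ (a strand, or a non-hierarchical head join of non-dominated relations) rather than a hardness-preserving mapping to a core query. Recall that, for $Q$ connected, non-boolean, with no vacuum relation and no universal attribute, that proof partitions the attributes as $(\I,\J,\attr(Q)-\I-\J)$ and exhibits distinguished relations $R_i,R_j,R_\ell,R_h$ realizing a mapping to $\swingquery$, $\seesawquery$, or $\pathquery$; since each core query itself contains a hard structure (the head join of non-dominated relations in $\pathquery$ is non-hierarchical, and each of $\swingquery,\seesawquery$ contains a strand, as can be checked directly), the natural target in each leaf is, respectively, a non-hierarchical head join of non-dominated relations or a strand. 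Exhibiting any one of the three hard structures suffices, so in particular no triad-like structure need be produced here; and, as Figure~\ref{fig:mapping-hard-structure} signals, the correspondence between the procedural sub-cases and the structural witnesses is not one-to-one, so a single procedural sub-case (e.g.\ Case~2.1) may in fact yield a strand in some instances and a non-hierarchical head join in others.

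In the ``strand'' leaves (Case~1, and the sub-case of Case~2.2 where the relation called ``$R_j$'' is absent, which maps to $\swingquery$; the other sub-case of Case~2.2 maps to $\seesawquery$, which also contains a strand) I would proceed as follows. The case hypotheses supply a relation $R$ with $\attr(R)\cap\head(Q)=\emptyset$ (in Case~1 this is the head-join relation that becomes vacuum) together with another relation carrying an output attribute; since $Q$ is connected, walking along a path of shared attributes joining them and taking the first relation on it with nonempty head-intersection, one obtains two consecutive relations $R_a,R_b$ that share a \emph{non-output} attribute $w$ and satisfy $\head(Q)\cap\attr(R_a)=\emptyset\neq\head(Q)\cap\attr(R_b)$. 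Both $R_a,R_b$ are automatically non-dominated: a dominator of $R_a$ would, by Definition~\ref{def:dominated}(3), have to be vacuum or force $Q$ boolean, and a dominator of $R_b$ would either have to drop $w$ (forbidden by condition~(2) applied with $R_k=R_a$, since $w\notin\head(Q)$) or again be vacuum; either alternative contradicts the standing hypotheses. Hence $\{R_a,R_b\}$ is a strand in the sense of Definition~\ref{def:pyramid}.

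In the remaining leaves — which map to $\pathquery$ — the key observation is a short lemma about the head join: in a query whose connectivity graph on relations is connected but not complete, there is a shortest path $P_0,P_1,P_2$ of length two with $X\in(\attr(P_0)\cap\attr(P_1))\setminus\attr(P_2)$ and $Z\in(\attr(P_1)\cap\attr(P_2))\setminus\attr(P_0)$ for two distinct attributes $X\neq Z$, which is exactly the pattern witnessing non-hierarchy. I would apply this inside the relevant connected component of $\headQ$: in Cases~3.1 and 3.2 the very reason $Q$ is sorted there is the existence of a pair of relations sharing no output attribute, so $\headQ$ is not ``complete'' and the lemma applies; Case~2.1 is handled similarly within the component of $\headQ$ singled out by $\I$, using the bridging relation $R_\ell$ as the middle vertex once non-output attributes are reintroduced, falling back to the strand of the previous paragraph in the sub-case where that bridging instead produces two non-dominated relations sharing a non-output attribute. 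The concluding step is promoting $P_0,P_1,P_2$ to non-dominated relations: if $P_0$ is dominated by $P_0'$, then condition~(2) of Definition~\ref{def:dominated} applied with $R_k=P_1$ forces $X\in\attr(P_0')$ unless $\attr(P_0')\subseteq\attr(P_1)$, and one continues replacing (treating the residual sub-cases by substituting a different relation along the path) until reaching non-dominated relations that still carry the witnessing output attributes, whence the head join of the non-dominated relations is non-hierarchical.

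The main obstacle — and what keeps this from being a transcription of the proof of Lemma~\ref{lem:others} — is the bookkeeping around \emph{domination} in the $\pathquery$ leaves. Definition~\ref{def:dominated} is strictly finer than the endogenous/exogenous dichotomy of~\cite{FreireGIM15}, and its three interlocking conditions must be verified to hold simultaneously for the three relations one exhibits while preserving the attributes $X,Z$ that witness the non-hierarchy and staying consistent with ``$Q$ connected, no universal attribute, no vacuum relation''. I expect to need an auxiliary lemma describing precisely how domination interacts with paths in the head-join connectivity graph — roughly, that along a shortest path one may replace a relation by a dominator without losing the output attributes it shares with its path-neighbours, or else a strand appears — and this is the step where the argument must genuinely engage with the specific definitions rather than with the combinatorics already used for the procedural dichotomy.
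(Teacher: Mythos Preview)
Your overall strategy—reuse the case split of Figure~\ref{fig:others} and exhibit a concrete hard structure in each leaf—matches the paper's proof. Case~1 is essentially identical: walk from a relation with no output attributes toward one with output attributes, pick the consecutive pair $R_a,R_b$ at the frontier, and verify non-domination to get a strand.

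Where you diverge is in Cases~2 and~3, and in both places the paper takes a shorter road.

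For Case~2 you split along the procedural sub-cases 2.1/2.2 and try to produce a non-hierarchical head join in~2.1 (with a strand fallback). The paper instead handles \emph{all} of Case~2 uniformly as a strand: since $X,Z\in\head(Q)$ are disconnected in $\headQ$ but connected in $Q$, any path between them in $Q$ must contain a consecutive pair $R_1,R_2$ whose shared attributes are entirely non-output; one then checks (using condition~(2) of Definition~\ref{def:dominated} with $R_k$ the other member of the pair) that both are non-dominated, and since each has nonempty head-intersection (no vacuum relation in $\headQ$) while their shared output attributes are empty, they form a strand. This avoids your fallback machinery entirely.

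For Case~3 your plan is to locate three specific relations witnessing non-hierarchy and then ``promote'' them to non-dominated ones; you correctly flag the promotion step as the main obstacle. The paper sidesteps it: rather than tracking three specific witnesses, it proves the single structural claim that \emph{the head join of non-dominated relations is still connected}. The argument is exactly the replacement idea you anticipate—take any path in $\headQ$, replace each dominated $R_j$ by a dominator $R_i$, and use Definition~\ref{def:dominated}(2) to show consecutive relations still share an output attribute—but applied once to connectivity rather than repeatedly to individual witnesses. Then the paper invokes the easy fact that a connected full CQ with no universal attribute is non-hierarchical, and is done. This is cleaner than your route and dissolves the ``auxiliary lemma about how domination interacts with paths'' into a single pass over one path.
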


\begin{proof}
	We follow the same proof plan for Lemma~\ref{lem:others}, by distinguishing the class of CQs characterized by Lemma~\ref{lem:others-structure} into three cases, as illustrated in Figure~\ref{fig:others}. Recall that any query characterized by Lemma~\ref{lem:others} is connected, without any universal attribute and vacuum relation. we show that $Q$ falling into any one case contains hard structure. 
	
	\medskip \noindent {\bf Case 1: head join contains at least one vacuum relation.} Let $R_i \in \rel(Q)$ be the relation such that $\attr(R_i) \neq \emptyset$ and $\attr(R_i) \subseteq \attr(Q) - \head(Q)$. We start from any non-output attribute $B \in \attr(R_i)$ and do a binary search until we find an output attribute $A$. Let $R_1, R_2$ be the consecutive pair of relations on this path between $A,B$, such that $A \in \attr(R_1)$. Note that $\attr(R_2) \subseteq \attr(Q) - \head(Q)$; otherwise, $R_2$ would be the first relation containing output attributes in our search. Moreover, $R_2$ is non-dominated since there is no vacuum relation in $Q$, and $R_1$ is also non-dominated since $\attr(R_1) \cap \attr(R_2) \subseteq \attr(Q) - \head(Q)$. In this way, $R_1, R_2$ form a stand in $Q$. 
	
	\medskip \noindent {\bf Case 2: head join is disconnected (and no vacuum relation).} As there is no vacuum relation in head join, $\head(Q) \cap \attr(R_i) \neq \emptyset$ holds for each relation $R_i \in \rel(Q)$.  Moreover, we can always identify a pair of attributes $X,Z \in \head(Q)$ such that there is no path between $X,Z$ in the head join. As $Q$ is connected, every path between $X,Z$ in $Q$ uses at least one non-output attribute.
	
	Consider any path between $X, Z$ in $Q$, in which there is a pair of consecutive relations $R_1, R_2$ such that $\attr(R_1) \cap \attr(R_2) \subseteq \attr(Q) - \head(Q)$; otherwise, $X,Z$ are connected in the head join. Obviously, $\attr(R_1) \cap \attr(R_2) \cap \head(Q) = \emptyset$. We claim that both $R_1, R_2$ are non-dominated. Suppose not, say $R_1$ is dominated by $R_i$. By definition, $\attr(R_i) \subseteq \attr(R_1)$. Observe that $\attr(R_i) - \attr(R_2) \neq \emptyset$ since $\attr(R_i) - \attr(R_2) \supseteq \attr(R_i) \cap \head(Q) - \attr(R_2) \supseteq  \attr(R_i) \cap  \attr(R_1) \cap \head(Q) - \attr(R_2) \neq \emptyset$. Implied by Definition~\ref{def:dominated}, $\attr(R_1) \cap \attr(R_2) \subseteq \attr(R_i) \cap \head(Q)$, coming to a contradiction. Applying a similar argument, we can show that $R_2$ is non-dominated. Moreover, $\attr(R_1) \cap \head(Q) \neq \emptyset$, $\attr(R_2) \cap \head(Q) \neq \emptyset$, and $\attr(R_1) \cap \attr(R_2) \cap \head(Q) = \emptyset$, thus $\attr(R_1) \cap \head(Q) \neq \attr(R_2) \cap \head(Q)$. In this way, $R_1, R_2$ form a strand in $Q$.

	\medskip \noindent {\bf Case 3: head join is connected (and no vacuum relation).} As there is no vacuum relation in head join, $\head(Q) \cap \attr(R_i) \neq \emptyset$ holds for each relation $R_i \in \rel(Q)$.  Note that there exists no universal attribute in $Q$. 
	
	We claim that the head join of non-dominated relations in $Q$ is also connected. Suppose not, there is a pair of attributes $A,B \in \head(Q)$ which becomes disconnected in the head join of non-dominated relations. Consider any path $P$ between $A,B$ in the head join of $Q$, a sequence of relations where each pair of consecutive relations share at least one output attribute. We construct another path $P'$ as follows. For each relation $R_j \in P$, if it is dominated by $R_i \in \rel(Q)$, then we just replace $R_j$ by $R_i$ in $P'$.  Let $R_1 \in P, R_1' \in P'$ be the first relation in each path respectively. If $A \notin \attr(R'_1)$, then add an arbitrary non-dominated relation $R'_0 \in \rel(Q)$ such that $A \in \attr(R'_0)$ before $R'_1$. The similar operation is applied for $B$. We next argue that $P'$ is a valid path between $A,B$. It suffices to show that for each pair of consecutive relations in $P'$, they share at least one output attribute.  
	
	If $R'_0$ exists, we first show that $\attr(R'_0) \cap \attr(R'_1) \cap \head(Q) \neq \emptyset$. In this case, $R_1 \neq R'_1$; otherwise, $A \in \attr(R_1)$. Observe that $\attr(R'_1) \attr(R'_0) \neq \emptyset$, then $A \in \attr(R_1) \cap \attr(R'_0) \subseteq \attr(R_1') \cap \head(Q) \subseteq \attr(R'_1)$, coming to a contradiction. Otherwise, $\attr(R'_1)  \subseteq \attr(R'_0)$, thus \[\attr(R'_0) \cap \attr(R'_1) \cap \head(Q) = \attr(R'_1) \cap \head(Q) \neq \emptyset.\]  The symmetric case when such a relation for $B$ is added can be argued similarly.
	
	Consider any pair of consecutive relations $R_1, R_2 \in P$. Let $R'_1, R'_2$ be the corresponding relations in $P'$. By contradiction, assume $R'_1 \cap R'_2 \cap \head(Q) =\emptyset$. If $R_1 = R'_1, R_2 = R'_2$, it comes to a contradiction. Otherwise, we further distinguish two cases. If only one of $R_1 = R'_1$ and $R_2 = R'_2$ holds, say $R_1 \neq R_1, R_2 = R'_2$. Since $\attr(R'_2) - \attr(R'_1) \neq \emptyset$, then $\attr(R_1) \cap \attr(R_2) = \attr(R_1) \cap \attr(R'_2) \subseteq \attr(R'_1) \cap \head(Q)$, which implies $\attr(R'_1) \cap \attr(R'_2) \cap \head(Q)$, coming to a contradiction. Otherwise, $R_1 \neq R_1, R_2 \neq R'_2$, which can be argued similarly.

	Note that if a full CQ is connected without a universal attribute, it must be non-hierarchical, implied by the definition of hierarchical join. In this way, the head join of non-dominated relations in $Q$ is non-hierarchical.
	
	\medskip 
	When $\isptime(Q)$ goes to ``others'', some hard structure has been identified in $Q$ in each case,  thus completing the whole proof.
\end{proof}

\end{document}